\documentclass[english]{smfart}
\usepackage{a4wide}

%PACKAGES
\usepackage[T1]{fontenc}   
\usepackage[latin1]{inputenc}
\usepackage[english]{babel} 
\usepackage{amsmath}
\usepackage{amsfonts}
\usepackage{dsfont}
\usepackage{amssymb}
\usepackage{smfthm}
\usepackage{mathrsfs}
\usepackage{stmaryrd}
\usepackage{url}
\usepackage{color}
\usepackage{enumerate}
\usepackage{hyperref}
\usepackage{multirow}
\usepackage{algorithm}
\usepackage[noend]{algpseudocode}
\usepackage{tikz}
\usepackage{tikz-cd}

%THEOREMES ETC
\theoremstyle{plain}
\newtheorem{thm}{Theorem}
{\bfseries}{\itshape}
{\bfseries}{\itshape}
\newtheorem{proposition}[thm]{Proposition}
\newtheorem{cor}[thm]{Corollary}
\newtheorem{lem}[thm]{Lemma}
\newtheorem{conjec}[thm]{Conjecture}
\theoremstyle{definition}
\newtheorem{definition}[thm]{Definition}
\newtheorem{example}[thm]{Example}

\theoremstyle{remark}
\newtheorem{remark}[thm]{Remark}

%%%%%%%%%%%%%%%%%%%%%%%%%%%%%%%%%%%%%%%%

% Signes
\renewcommand{\leq}{\leqslant} 
\renewcommand{\geq}{\geqslant} 

% Fields
\newcommand{\F}{\ensuremath{\mathbb{F}}}
\newcommand{\Fq}{\ensuremath{\mathbb{F}_q}}
\newcommand{\Fqm}{\ensuremath{\mathbb{F}_{q^m}}}

% Structures

\newcommand{\mat}[1]{\ensuremath{\boldsymbol{#1}}}
\newcommand{\code}[1]{\ensuremath{\mathscr{#1}}}

% Codes
\newcommand{\AC}{\code{A}}

\newcommand{\CC}{\code{C}}
\newcommand{\RC}{\code{R}}

% Matrices
\newcommand{\Bm}{\mat{B}}
\newcommand{\Gm}{\mat{G}}
\newcommand{\Hm}{\mat{H}}
\newcommand{\Mm}{\mat{M}}       
\newcommand{\Pm}{\mat{P}}
\newcommand{\Qm}{\mat{Q}}
\newcommand{\Sm}{\mat{S}}

% Vectors
\newcommand{\av}{\mat{a}}
\newcommand{\bv}{\mat{b}}
\newcommand{\cv}{\mat{c}}

\newcommand{\ev}{\mat{e}}
\newcommand{\mv}{\mat{m}}

\newcommand{\vv}{\mat{v}}
\newcommand{\xv}{{\mat{x}}}
\newcommand{\yv}{{\mat{y}}}

% Operators
\newcommand{\transpose}[1]{{#1}^{ {\intercal} }}
\newcommand{\transposeb}[1]{{\left(#1\right)}^{ {\intercal} }}
\newcommand{\Tr}{\mathbf{Tr}}
\newcommand{\ExpMat}{\mathbf{ExpMat}}
\newcommand{\ExpCode}{\mathbf{Exp}}
\newcommand{\ExpVect}{\mathbf{Exp}}
\newcommand{\SqueezeVect}{\mathbf{Squeeze}}
\newcommand{\SqueezeMat}{\mathbf{SqueezeMat}}
\newcommand{\SqueezeCode}{\mathbf{Squeeze}}
\newcommand{\Dual}{\mathbf{Dual}}
\newcommand{\Dualb}[1]{\mathbf{Dual}\left( #1 \right)}

% Elements of the cryptosystem

\renewcommand{\L}{\mathcal L}

\newcommand{\HmL}{\Hm_{\L}}

\newcommand{\Hsec}{\Hm_{\rm sec}}
\newcommand{\Gsec}{\Gm_{\rm sec}}
\newcommand{\Gpub}{\Gm_{\rm pub}}

\newcommand{\Hpub}{\Hm_{\rm pub}}
\newcommand{\Cexp}{\CC_{\rm exp}}
\newcommand{\Cexptiny}{\CC_{\rm exp,tiny}}
\newcommand{\Ctiny}{\CC_{\rm tiny}}
\newcommand{\Cpub}{\CC_{\rm pub}}

\newcommand{\Grand}{\mat{G}_{\rm rand}}

% Some particular codes

\newcommand{\RS}[2]{\text{\bf RS}_{#1}(#2)}
\newcommand{\GRS}[3]{\text{\bf GRS}_{#1}(#2,#3)}

% Puncturing/Shortening
\newcommand{\sh}[2]{\mathbf{Sh}_{#2}\left(#1\right)}
\newcommand{\pu}[2]{\mathbf{Pct}_{#2}\left(#1\right)}

% Star prod

% Square
\newcommand{\sq}[1]{#1^{\star 2}}
\newcommand{\sqb}[1]{\left(#1\right)^{\star 2}}
\newcommand{\tstar}{\tilde{\star}}
\newcommand{\tsq}[1]{#1^{\tstar 2}}
\newcommand{\tsqb}[1]{\left(#1\right)^{\tstar 2}}

% SSSC
\renewcommand{\SS}{\mathcal{S}}
\newcommand{\CCSS}{\CC_{|\SS}}
\newcommand{\CCSSi}{\CC_{|(\SS_0,\ldots,\SS_{n-1})}}
\newcommand{\SScode}[2]{{#1}_{|#2}}

% Basis
\newcommand{\B}{\mathcal{B}}
\newcommand{\Bfull}{{\B_{\rm full}}}
\newcommand{\Bdual}{\B^{\ast}}

\newcommand{\Ball}{\B = (b_0, \ldots, b_{m-1})}
\newcommand{\Bdualall}{\Bdual = (b^{\ast}_0, \ldots, b^{\ast}_{m-1})}

\newcommand{\Bgamma}{\B_\gamma}
\newcommand{\Bgammaall}{\Bgamma = (1, \gamma, \ldots, \gamma^{m-1})}
\newcommand{\BS}{{\B_{\SS}}}

\newcommand{\BSp}{{\B_{\SS^2}}}

% Indices
\newcommand{\IInt}[2]{\llbracket #1, #2 \rrbracket}
\newcommand{\Ical}{\mathcal{I}}
\newcommand{\Lcal}{\mathcal{L}}

\newcommand{\Lexp}{\mathcal{L}'}
\newcommand{\JLM}{\mathcal{J}(\lambda,m)}
\newcommand{\KLM}{\mathcal{K}(\lambda,m)}

% Map
\newcommand{\map}[4]{\left\{
\begin{array}{ccc}
#1 & \longrightarrow & #2 \\
#3 & \longmapsto     & #4
        \end{array}
      \right.}

% Other
\newcommand{\Sup}{\mathbf{Support}}
\newcommand{\eqdef}{\stackrel{\text{def}}{=}}
\newcommand{\ie}{\textit{i.e.}\,}
\newcommand{\Span}[2]{\left\langle \, #1 \, \right\rangle_{#2}}
\newcommand{\Fqspan}[1]{\left\langle \, #1 \, \right\rangle_{\Fq}}

% Probas
\newcommand{\Prob}[1]{\mathbb{P} \left[ #1 \right]}
\newcommand{\Esp}[1]{\mathbb{E}\left[ #1 \right]}

\def\longversion{1}

%%%%%%%%%%%%%%%%%%%%%%%%%%%%%%%%%%%%%%%%%%%

%%% Local Variables:
%%% mode: latex
%%% TeX-master: "../xgrs"
%%% End:

\begin{document}

\author{Alain Couvreur}
\address{Inria}
\address{LIX, \'Ecole polytechnique, Palaiseau, France}
\email{alain.couvreur@inria.fr}

\author{Matthieu Lequesne}
\address{Sorbonne Universit\'e}
\address{Inria, Paris, France}
\email{matthieu.lequesne@inria.fr}

\title[On the security of subspace subcodes of Reed--Solomon
codes]{{On the security of subspace subcodes\\ of Reed--Solomon codes
for public key encryption}}

\keywords{Code-based cryptography, McEliece encryption scheme,
subspace subcodes, GRS codes, expansion of codes, square product of
codes, key recovery attack}

\maketitle

\begin{abstract}
  This article discusses the security of McEliece-like encryption
  schemes using subspace subcodes of Reed--Solomon codes, \ie subcodes
  of Reed--Solomon codes over $\Fqm$ whose entries lie in a fixed
  collection of $\Fq$--subspaces of $\Fqm$.  These codes appear to be
  a natural generalisation of Goppa and alternant codes and provide a
  broader flexibility in designing code based encryption schemes.  For
  the security analysis, we introduce a new operation on codes called
  the {\em twisted product} which yields a polynomial time
  distinguisher on such subspace subcodes as soon as the chosen
  $\Fq$--subspaces have dimension larger than $m/2$.  From this
  distinguisher, we build an efficient attack which in particular
  breaks some parameters of a recent proposal due to Khathuria,
  Rosenthal and Weger.
\end{abstract}

\tableofcontents

%%% Local Variables:
%%% mode: latex
%%% TeX-master: "../xgrs"
%%% End:

\section{Introduction}

In the late 70's, at the very beginning of public key cryptography,
McEliece proposed a public key encryption scheme whose security relies
on the hardness of the bounded decoding problem \cite{BMT78}. However,
the system should be instantiated with a public code equipped with an
efficient decoding algorithm and one usually expect from the public
code to be indistinguishable from an arbitrary one since this last
property guarantees the security of the system.  In his seminal article
\cite{M78}, McEliece proposed to instantiate his system with a binary
Goppa code~\cite{G70,G71} (see~\cite{B73} for a description in
English).

\medskip

One of the major drawbacks of such a proposal is the significant size
of the public key. McEliece's historical proposal with binary Goppa
codes required a 32.7 kB key for a claimed classical security of 65
bits.  In the recent NIST submission {\em Classic
  McEliece}~\cite{BCLMNPPSSSW19}, a public key size of 261 kB is
proposed for a claimed security level of 128 bits.  For this reason,
there has been many attempts in the last forty years to replace the
Goppa codes used in McEliece's scheme by other families of codes, in
order to reduce the key size. Many of these proposals focus on
Generalised Reed-Solomon (GRS) codes, since they benefit from
excellent decoding properties.  On the other hand, their structure is
difficult to hide.  Consequently, to our knowledge, all code-based
cryptographic schemes using GRS codes or low-codimensional subcodes of
GRS codes as trapdoor have been attacked.

\medskip

The use of GRS codes to replace Goppa codes in McEliece's scheme was
initially suggested in Niederreiter's paper \cite{N86}, but this
proposal was attacked by Sidelnikov and Shestakov \cite{SS92}
(although an earlier article from Roth and Seroussi \cite{RS85}
already explained how to recover the structure of a GRS code).  To
overcome the attack of Sidelnikov and Shestakov while trying to keep
the benefits of GRS codes, several proposals appeared in the
literature. Berger and Loidreau proposed to replace the GRS code by a
subcode of low codimension \cite{BL05}; Wieschebrink \cite{W06}
included some random columns in a generator matrix of a GRS code; his
approach was enhanced in NIST submission RLCE \cite{W16,W17}, where
random columns are included and then ``mixed'' with the original
columns using specific linear transformations; finally Baldi, Bianchi,
Chiaraluce, Rosenthal and Schipani (BBCRS) \cite{BBCRS14} proposed to
mask the structure of a GRS code by right multiplying it by a
``partially weight preserving'' matrix. All these proposals have been
partially or fully broken using attacks derived from a square code
distinguisher. The first contribution, due to Wieschebrink \cite{W10}
broke the Berger-Loidreau proposal. Later on, Wieschebrink's and BBCRS
schemes were attacked in \cite{CGGOT14,COTG15} and RLCE in
\cite{CLT19}.

\medskip

In summary, forty years of research on the use of algebraic codes for
public key encryption boil down to the following observations.
\begin{enumerate}[(1)]
\item On one hand, the raw use of GRS codes as well as most of the
  variants using these codes lead to insecure schemes.
\item On the other hand, Goppa codes or more generally alternant codes
  remain robust decades after they were initially proposed by
  McEliece.
\end{enumerate}

Alternant codes are nothing but subfield subcodes of GRS codes. Hence,
considering the spectrum with (full) GRS codes on one end and their
subfield subcodes (\ie alternant codes) on the other, the intermediary
case is that of {\em subspace subcodes} of Reed--Solomon codes. This
notion of {\em subspace subcodes} of Reed-Solomon (SSRS) was
originally introduced without any cryptographic motivation by Solomon,
McEliece and Hattori \cite{S93, MS94, H95, HMS98}.  An SSRS code is a
subset of a parent Reed-Solomon code over $\F_{q^m}$ consisting of the
codewords whose components all lie in a fixed $\lambda$-dimensional
$\Fq$-vector subspace of $\F_{q^m}$, for some $\lambda \leq m$.  These
codes are no longer linear over $\F_{q^m}$ but only over $\F_q$. The
SSRS construction provides long codes with good parameters over
alphabets of moderate size, in the spirit of alternant codes
\cite[Chapter~12]{MS86}. Therefore these codes are interesting from an
information-theoretic point of view.

\medskip

For public key cryptography, some recent works exploring different
approaches appeared in the recent years.  The first use of the notion
of subspace subcodes for cryptography comes from Gabidulin and
Loidreau who propose to use subspace subcodes of Gabidulin codes for a
rank-metric based cryptosystem in \cite{GL05,GL08}.  The first work
discussing the use of SSRS codes for public key encryption is due to
Berger, Gueye and Klamti \cite{BGK19}. Shortly after, Berger, Gueye,
Klamti and Ruatta \cite{BGKR19} proposed a cryptosystem based on
quasi--cyclic subcodes of SSRS codes. In another line
of work, Khathuria, Rosenthal and Weger \cite{KRW21} proposed a
McEliece--like encryption scheme using expanded subspace subcodes of
GRS codes instead of Goppa codes. Throughout the document, we will
refer to this scheme as the {\em XGRS} scheme (where the \textit{X}
stands for \textit{expanded}).

The use of subspace subcodes of Reed--Solomon codes is of particular
interest in code based cryptography since it includes McEliece's
original proposal based on Goppa codes on the one hand and encryption
based on generalised Reed--Solomon codes on the other hand as the two
extremities of a same spectrum. Indeed, starting from Reed--Solomon
codes over $\Fqm$ and considering subspace subcodes over subspaces of
$\Fqm$ of dimension $1 \leq \lambda \leq m$, the case $\lambda = m$ is
corresponds to GRS codes, while the case $\lambda = 1$ corresponds to
alternant codes (which include Goppa codes). The notion of subspace
subcodes allows a modulation of the parameter $\lambda$. Consequently,
they are of particular interest for two reasons.

\begin{enumerate}[(1)]
\item Subspace subcodes may provide interesting codes for encryption
  with $\lambda > 1$, providing shorter keys than the original
  McEliece scheme.
\item Their security analysis encompasses that of Goppa and alternant
  codes and may help to better understand the security of McEliece
  encryption scheme. We emphasise that such a security analysis is of
  crucial interest since {\em Classic McEliece} lies among the very
  few candidates selected by the NIST for the last round of the
  post-quantum standardisation process.
\end{enumerate}

%%%%%%%%%%%%%%%%%%%%%%%%%%%%%%%%%%
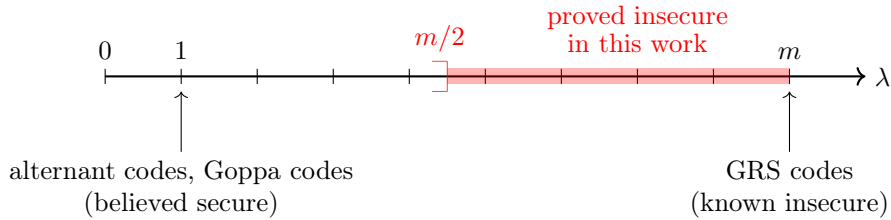
\begin{figure}[h!]
\centering
\begin{tikzpicture}[scale=1]
  \def \eps {0.1}

  \draw [black, thick=1pt, ->] (0,0) -- (10,0);
  \foreach \x in {0,...,9}
  { \draw (\x,-\eps) -- (\x,\eps); }
  \node [right] at (10,0) {$\lambda$};
  \node [above] at (0,\eps) {$0$};
  \node [above] at (1,\eps) {$1$};
  \node [above] at (9,\eps) {$m$};
  \draw [red,opacity=0.7] (4.5,-0.2) -- (4.5,0.2);
  \draw [red,opacity=0.7] (4.3,-0.2) -- (4.5,-0.2);
  \draw [red,opacity=0.7] (4.3,0.2) -- (4.5,0.2);
  \fill[red,opacity=0.3] (4.5,-0.1) rectangle (9,0.1);
  \node [above,red] at (4.4,0.2) {$m/2$};
  \node [above,red] at (7,0.5) {proved insecure};
  \node [above,red] at (7,0.2) {in this work};
  \draw [->] (1,-1) -- (1,-0.2);
  \draw [->] (9,-1) -- (9,-0.2);
  \node [below] at (1,-1) {alternant codes, Goppa codes};
  \node [below] at (1,-1.4) {(believed secure)};
  \node [below] at (9,-1) {GRS codes};
  \node [below] at (9,-1.4) {(known insecure)};
  \end{tikzpicture}
\caption{The case $\lambda=m$ (\ie using the whole GRS code as secret
  key) is known to be insecure. On the other hand, the case
  $\lambda=1$ corresponding to alternant codes (among which Goppa
  codes) is a well-studied hard problem. Our attack covers all
  cases where $m/2 < \lambda \leq m$.}
\label{fig:fleche}
\end{figure}
%%%%%%%%%%%%%%%%%%%%%%%%%%%%%%%%%%
\subsection*{Our contribution}
In the present article, we first introduce a general public key
cryptosystem relying on subspace subcodes of Reed--Solomon codes, which
we refer to as the \emph{SSRS cryptosystem}. We prove that the XGRS
cryptosystem of \cite{KRW21} is in fact a sub-instance of the SSRS
scheme. Then, we analyse the security of the SSRS cryptosystem, using
alternatively a high-level approach (considering abstract subspaces of
$\Fqm$) or a more constructive one (focusing on explicit descriptions
of such codes as $\Fq$-linear codes using the \emph{expansion
  operator}). We present a distinguisher on the SSRS scheme, by
introducing a new and original notion, which we called \emph{twisted
  product of codes}.  Our distinguisher succeeds for any subpace
subcode of a GRS code over $\Fqm$ when the subspaces have dimension
$\lambda > m/2$ (see Figure~\ref{fig:fleche}).  Using this
distinguisher we derive a polynomial time attack on SSRS when
$\lambda > m/2$, which in particular breaks some parameters of the
XGRS scheme proposed by its authors, namely the case of subspace
subcodes on $2$--dimensional subspaces of GRS codes over $\F_{q^3}$.

\subsection*{Related work}
This work should be related to \cite[\S~VI.B]{BGK19} where it is shown
that an encryption scheme based on an expanded GRS code (the full
code, not a subspace subcode) is not secure.  Next, in the same
reference \cite[\S~VI.C]{BGK19}, the case of subspace subcodes is
discussed and the proposed attack involves a brute force search on the
all the bases used for the expansion of each position. This attack has
an exponential complexity, while ours runs in polynomial time.

On the other hand, the ``$\lambda > m/2$ condition'' for the public
code to be distinguished from a random one should be compared to the
results of \cite{COT14a, COT17}, where some classical Goppa codes are
attacked using, among others, the square code operation. The
considered classical Goppa codes correspond to the parameters $m = 2$
and $\lambda = 1$ and hence lie at the very limit of the
distinguisher.  These Goppa codes got however broken due to peculiar
features that give them a larger dimension compared to generic
alternant codes of the same parameters (see \cite{SKHN76,COT14}).

\subsection*{Outline of the article}
We start in Section~\ref{sec:nota} by fixing the
notation and bringing well--known notions of algebraic coding theory
that are necessary to follow the article. The notion of subspace
subcodes is recalled in Section~\ref{sec:SScodes} as well as some of
their known properties. We also discuss the way to practically
represent subspace subcodes, and introduce the SSRS cryptosystem,
which is an instantiation of the McEliece scheme using
subspace subcodes of Reed--Solomon codes. Some operators play an
important role in this article, Section~\ref{sec:tools} is devoted to
them and the way they interact with each other. In
Section~\ref{sec:SSRS_cryptosystem}, we present the XGRS cryptosystem
from \cite{KRW21} and prove that it is a sub-instance of the SSRS
cryptosystem. The notion of \emph{twisted square code} and the
corresponding distinguisher are introduced in
Section~\ref{sec:distinguisher}. Finally, Section~\ref{sec:attack} is
devoted to the presentation of the attack on SSRS scheme.

\medskip

\noindent {\bf Note.} A long version of the present article including
further detailed proofs and remarks is available online. See \cite{CL20}.

\subsection*{Acknowledgements} The authors express their deep
gratitude to the anonymous referees whose relevant comments and
suggestions permitted a significant improvement of the present
article. This research was conducted while Matthieu Lequesne was
employed by Sorbonne Universit\'e and Inria. Matthieu Lequesne is now
employed at the CWI in Amsterdam (Netherlands) and is funded by the
ERC-ADG-ALGSTRONGCRYPTO project (no. 740972).

%%% Local Variables:
%%% mode: latex
%%% TeX-master: "../xgrs"
%%% End:

\section{Notation and prerequisites}\label{sec:nota}
In this section, we fix the notation and recall some usual tools of
code-based cryptography that are used in the definition of subspace
subcodes and in the construction the XGRS cryptosystem.

\subsection{Notation}
In this article, $q$ denotes a power of a prime and $m$ a positive
integer. The vector space of polynomials of degree less than $k$ over
a field $\F$ is denoted by $\F[X]_{<k}$.  The space of matrices with
entries in a field $\F$ with $m$ rows and $n$ columns is denoted by
$\F^{m \times n}$.  Given an $\F$--vector space $\mathcal{V}$ and
vectors $\vv_0, \dots, \vv_{s-1} \in \mathcal{V}$, the subspace
spanned over $\F$ by the $\vv_i$'s is denoted by
\if\longversion1
\[
\Span{\vv_0, \dots, \vv_{s-1}}{\F} \eqdef \left\{ \sum_{i=0}^{s-1}
  \lambda_i \vv_i \,\bigg|\, \lambda_i \in \F \right\}.
\]
\else
$\Span{\vv_0, \dots, \vv_{s-1}}{\F}$.
\fi
Moreover, given a matrix $\Gm \in \F^{k \times n}$, we denote by
$\Span{\Gm}{\F}$ the space spanned by the {\bf rows} of $\Gm$, that is
to say the code with generator matrix $\Gm$, is denoted by
\if\longversion1
\[
  \Span{\Gm}{\F} \eqdef \Span{\vv ~|~ \vv \ {\rm row\ of\ }\Gm}{\F}
  = \{\mv \cdot \Gm ~|~ \mv \in \F^k\}.
\]
\else
$\Span{\Gm}{\F}$.
\fi
The {\em support} of a vector is the set of its indices with non-zero
entries. An $[n,k]$ code over $\F$ is a linear code over $\F$ of
length $n$ and dimension $k$.

Given two integers $a, b$ with $a < b$, we denote by $\IInt{a}{b}$ the
interval of integers $\{a, a+1, \dots, b\}$.
The cardinality of a finite set $U$ is
denoted by $|U|$.  Given a probabilistic event $A$, its probability is
denoted by $\Prob A$\if\longversion1 and the mean of a random variable $X$ is denoted by $\Esp X$.
\else.
\fi

\if\longversion1
\medskip

\noindent {\bf Convention.} In this article, any word or finite
sequence of length $\ell$ is indexed from $0$ to $\ell - 1$. In
particular, codewords of length $n$ are indexed as follows: $(x_0,
\dots, x_{n-1})$.
\fi

\subsection{Reed--Solomon codes}

\begin{definition}[Generalised Reed--Solomon codes]\label{def:GRS} Let
$\xv \in \F_q^n$ be a vector whose entries are pairwise distinct and
$\yv \in \F_q^n$ be a vector whose entries are all nonzero. The {\em
generalised Reed--Solomon (GRS) code with support $\xv$ and multiplier
$\yv$ of dimension $k$} is defined as
\[ \GRS{k}{\xv}{\yv} \eqdef \left\{(y_0 f(x_0), \ldots, y_{n-1} f(x_{n-1}))
~|~ f \in \F_q[x]_{<k}\right\}.
\]
If $\yv = (1,\dots, 1)$ then the code is said to be a {\em Reed--Solomon}
code and denoted as $\RS{k}{\xv}$.
\end{definition}

\subsection{Component wise product of codes}
\label{subsec:starprod}
The component-wise product of two vectors $\av$ and $\bv$ in $\F^n$ is
denoted by
\[ \av \,\star\, \bv \eqdef (a_0b_0, \ldots, a_{n-1} b_{n-1}).
\] This definition extends to the product of codes, where the {\em
  component-wise product} or {\em $\star$--product} of two
$\mathbb{K}$-linear codes $\code A$ and $\code B \subseteq \F^n$
spanned over a field $\mathbb{K} \subseteq \F$ is defined as
\[ \code A \star_{\mathbb{K}} \code B \eqdef \Span{ \av
\,\star\, \bv ~|~ \av \in \code A, \ \bv \in \code B }{\mathbb{K}}.
\] When $\code A = \code B$, we denote by
$\sq{\code A}_{\mathbb{K}} \eqdef \code A \star_{\mathbb{K}} \code A$
the {\em square} code of $\code A$ spanned over $\mathbb{K}$.

\begin{remark}
  The field $\mathbb{K}$ in the above notation is almost always
  equal to $\F$ the base field on which the codes are defined. However, it may
  sometimes be a subfield. For the sake of clarity, we make the value
  of $\mathbb{K}$ explicit only in the ambiguous cases.
\end{remark}

\subsection{Punctured and shortened codes}
\label{subsec:shortening}

The notions of \textit{puncturing} and \textit{shortening} are
classical ways to build new codes from existing ones. These
constructions will be useful for the attack. We recall here their
definition. For a codeword $\cv \in \F_q^n$, we denote by
$(c_0,\ldots,c_{n-1})$ its entries.

\begin{definition}[Puncturing, shortening] Let $\CC \subseteq \F_q^n$
  and $\Lcal \subseteq \IInt{0}{n-1}$.  The {\em puncturing} and the
  {\em shortening of $\CC$ at $\Lcal$} are defined as
  \[ \pu{\CC}{\Lcal} \eqdef \{(c_i)_{i \in \IInt{0}{n-1} \setminus
      \Lcal}\ ~|~ \cv \in \CC\} \quad {\rm and} \quad
    \sh{\CC}{\Lcal} \eqdef \pu{\{\cv \in \CC\ ~|~ \forall i
      \in \Lcal,\ c_i = 0\}}{\Lcal}.
  \]
\end{definition}

Similarly, given a matrix $\Mm$ with $n$ columns, one defines
$\pu{\Mm}{\Lcal}$ as the matrix whose columns with index in $\Lcal$
are removed, so that puncturing a generator matrix of a code yields a
generator matrix of the punctured code.

\if\longversion1
Shortening a code is equivalent to puncturing the dual code, as
explained by the following proposition.
\fi

\begin{proposition}[{\cite[Theorem 1.5.7]{HP03}}]\label{prop:dual_pu_sh}
  Let $\CC$ be a linear code
over $\F_q^{n}$ and $\Lcal \subseteq \IInt{0}{n-1}$. Then,
\[ \sh{\Dual(\CC)}{\Lcal} = \Dual(\pu{\CC}{\Lcal}) \quad \text{ and } \quad
\Dual(\sh{\CC}{\Lcal}) = \pu{\Dual(\CC)}{\Lcal},
  \] where $\Dual(\AC)$ denotes the dual of the code $\AC$.
\end{proposition}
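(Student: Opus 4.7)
The plan is to prove the first identity by direct double inclusion and then deduce the second one formally by applying the first to $\Dual(\CC)$.

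Setting up the notation, let $\iota\colon \F_q^{n-|\Lcal|} \to \F_q^n$ denote the ``zero extension'' that inserts a zero at each coordinate of $\Lcal$; it is a left inverse of $\pu{\cdot}{\Lcal}$, and its image is exactly the subspace of $\F_q^n$ consisting of vectors whose support avoids $\Lcal$. The single observation driving everything is the inner-product identity
\[
\langle \vv', \cv \rangle \;=\; \langle \pu{\vv'}{\Lcal},\, \pu{\cv}{\Lcal} \rangle,
\]
valid whenever $\vv' \in \F_q^n$ has support disjoint from $\Lcal$, since the coordinates removed by puncturing contribute zero on the left-hand side.

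For the first equality I would argue by double inclusion. Given $\vv = \pu{\vv'}{\Lcal}$ with $\vv' \in \Dual(\CC)$ and $\Sup(\vv') \cap \Lcal = \emptyset$, the identity above applied to any $\cv \in \CC$ gives $\langle \vv, \pu{\cv}{\Lcal}\rangle = \langle \vv', \cv\rangle = 0$, whence $\vv \in \Dual(\pu{\CC}{\Lcal})$. Conversely, starting from $\vv \in \Dual(\pu{\CC}{\Lcal})$, set $\vv' := \iota(\vv)$, which has support disjoint from $\Lcal$ by construction; the identity now reads $\langle \vv', \cv\rangle = \langle \vv, \pu{\cv}{\Lcal}\rangle = 0$ for every $\cv \in \CC$, so $\vv' \in \Dual(\CC)$ and thus $\vv = \pu{\vv'}{\Lcal} \in \sh{\Dual(\CC)}{\Lcal}$.

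For the second identity, I would apply the first to the code $\Dual(\CC)$ and use the involutivity $\Dual(\Dual(\CC)) = \CC$ to obtain $\sh{\CC}{\Lcal} = \Dual(\pu{\Dual(\CC)}{\Lcal})$; taking duals on both sides then yields the claim. No genuine obstacle is anticipated; the only point requiring a bit of care is the identification of $\iota$ as a bijection onto the subspace of vectors supported outside $\Lcal$, so that puncturing and zero-extension are mutually inverse on that subspace. Once that is in hand, everything reduces to the one-line inner-product identity above.
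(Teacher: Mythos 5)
Your proof is correct: the double inclusion via the zero-extension map $\iota$ together with the inner-product identity, and then biduality to deduce the second equation from the first, is exactly the standard argument; the paper itself gives no proof and simply cites \cite[Theorem~1.5.7]{HP03} for this fact. One cosmetic remark: $\iota$ is a \emph{right} inverse of $\pu{\cdot}{\Lcal}$ (equivalently, puncturing is a left inverse of $\iota$), but this terminological slip has no bearing on the argument, which uses only $\pu{\iota(\vv)}{\Lcal}=\vv$ and the identification of the image of $\iota$ with the vectors supported outside $\Lcal$.
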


\begin{remark} The notation $\Dual(\CC)$ to denote the dual code of
  $\CC$ is rather unusual. This code is commonly denoted
  $\CC^{\perp}$.
\end{remark}

\subsection{Bases and trace map}
In this paper, we consider codes over a finite field $\Fqm$ and
codes over its subfield $\Fq$.
A useful and natural tool is the trace
\if\longversion1
map.
\begin{definition}[Trace map] Let $q$ be a prime power and $m$ an integer. The
trace map is defined as
\[ \Tr: \qquad \map{\Fqm}{\Fq}{x}{
\sum_{i=0}^{m-1}x^{q^i}.}
\]
\end{definition}

\begin{definition}[{\cite[Definition 2.30]{LN97}}]\label{def:dual_basis} Let $\B = (b_0,
  \ldots, b_{m-1})$ be an $\Fq$-basis of $\Fqm$. There exists a unique
  basis $\Bdual = (b^{\ast}_0, \ldots, b^{\ast}_{m-1})$, such that :
\[ \forall 0 \leq i, j \leq m-1, \qquad \Tr(b_ib^{\ast}_j) = \left\{
\begin{array}{c l} 1 & \text{if } i=j,\\ 0 & \text{otherwise}.
\end{array} \right.
\] This basis will be referred to as the \emph{dual} basis of $\B$ and
denoted $\Bdual$.
\end{definition}

Given an $\Fq$--basis $\B = (b_0, \ldots, b_{m-1})$ of
$\Fqm$ and $x$ an element of $\Fqm$. Then
\else
map denoted $\Tr : \Fqm \rightarrow \Fq$.
Recall that, from {\cite[Definition 2.30]{LN97}},
to any $\Fq$--basis $\B = (b_0, \ldots, b_{m-1})$
corresponds a unique {\em dual basis} $\Bdual = (b^{\ast}_0, \ldots,
b^{\ast}_{m-1})$ such that
for any $x \in \Fqm$,
\fi
 the expression of $x$
as an $\Fq$--linear combination of the elements of $\B$ writes as
\if\longversion1
\begin{equation*}
  x = \Tr(b^*_0 x) b_0 + \cdots + \Tr(b_{n-1}^* x)b_{n-1},
\end{equation*}
where $\Bdual = (b^{\ast}_0, \ldots,
b^{\ast}_{m-1})$ denotes the dual basis of $\B$.
\else
\begin{equation*}
  x = \Tr(b^*_0 x) b_0 + \cdots + \Tr(b_{n-1}^* x)b_{n-1}.
\end{equation*}
\fi

%%% Local Variables:
%%% mode: latex
%%% TeX-master: "../xgrs"
%%% End:

\section{Subspace Subcodes}\label{sec:SScodes}

\subsection{Definition and first properties}
\begin{definition}[\cite{HMS98}] Given a linear code $\CC$ defined
over a field $\Fqm$, and a $\lambda$-dimensional subspace $\SS$ of
$\Fqm$ ($0 \leq \lambda \leq m$), the subspace subcode $\CCSS$ is
defined to be the set of codewords of $\CC$ whose components all lie
in $\SS$.
  \[ \CCSS \eqdef \{\cv \in \CC ~|~ \forall i \in \IInt{0}{n-1},\ c_i
\in \SS\} \subseteq \Fqm^n.
  \]
\end{definition}

It is important to note that the code $\CCSS$ is an $\Fq$--linear
subspace of $\Fqm^n$ which is generally neither $\Fqm$--linear nor
linear over some intermediary extension.  Since each entry of a
codeword can be represented as $\lambda$ elements of $\Fq$, the code
could be converted into a code over the alphabet $\Fq^{\lambda}$. Such
a code would form an additive subgroup over ${(\Fq^\lambda)}^n$,
therefore this construction is called a \textit{subgroup subcode} by
Jensen in \cite{J95}.  In a context of message transmission, this
natural way to represent such a subspace subcode is detailed further
in \S~\ref{subsec:expansion}.

We can generalise this definition with different subspaces for each
entry.

\begin{definition}
\label{def:subspace-generalised} Given a linear code $\CC$ of length
$n$ over a field $\Fqm$, and the $\lambda$-dimensional subspaces
$(\SS_0, \ldots, \SS_{n-1})$ of $\Fqm$ ($0 \leq \lambda \leq m$), the
subspace subcode $\CCSSi$ is defined to be the set of codewords of
$\CC$ such that the $i$-th components lies in $\SS_i$.

  \[ \CCSSi \eqdef \{\cv \in \CC ~|~ \forall i \in \IInt{0}{n-1},\ c_i
\in \SS_i\}.
  \]
\end{definition}

\if\longversion1
\begin{remark} When $\SS_0 = \cdots = \SS_{n-1} = \Fq$, then we find
the usual definition of {\em subfield subcode}.
\end{remark}

\begin{remark}
  It is possible to give a more general definition where the $\SS_i$'s
  do not have the same dimension $\lambda$. However, such a broader
  definition would be useless in the present article.
\end{remark}
\fi

\begin{proposition}
  \label{prop:SS_parameters} Let $\CC$ be a linear code of length $n$
and dimension $k$ over $\Fqm$ and $\SS \subseteq \Fqm$ be a subspace
of dimension $\lambda \leq m$. Then
\begin{equation}\label{eq:dim_SScode}
  \dim_{\Fq} \CCSS \geq km - n(m-\lambda).
\end{equation}
\end{proposition}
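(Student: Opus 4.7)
The plan is to exhibit $\CCSS$ as the kernel of a natural $\Fq$-linear map and then invoke the rank-nullity theorem. The key point is that $\CC$, being a $k$-dimensional $\Fqm$-vector space, has $\Fq$-dimension exactly $km$, while the ``obstruction'' measuring how far a codeword is from lying in $\SS^n$ takes values in a space of controllable $\Fq$-dimension.

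Concretely, I would first consider the quotient $\Fq$-vector space $\Fqm / \SS$, which has $\Fq$-dimension $m - \lambda$. This induces a quotient map $\pi : \Fqm \to \Fqm/\SS$, and componentwise application yields an $\Fq$-linear map
\[
\phi : \map{\CC}{(\Fqm/\SS)^n}{(c_0, \ldots, c_{n-1})}{(\pi(c_0), \ldots, \pi(c_{n-1})).}
\]
Viewing $\CC$ as an $\Fq$-vector space of dimension $km$, this map is well-defined and $\Fq$-linear.

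Next, I would identify the kernel: a codeword $\cv \in \CC$ satisfies $\phi(\cv) = 0$ if and only if $c_i \in \SS$ for every $i \in \IInt{0}{n-1}$, which is exactly the defining condition of $\CCSS$. Hence $\ker \phi = \CCSS$.

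The conclusion then follows from the rank-nullity theorem applied over $\Fq$:
\[
\dim_{\Fq} \CCSS = \dim_{\Fq} \CC - \dim_{\Fq} \mathrm{Im}\,\phi \geq km - \dim_{\Fq} (\Fqm/\SS)^n = km - n(m-\lambda).
\]
There is no real obstacle here: the only thing to be careful about is to work with $\Fq$-dimensions throughout (since $\CCSS$ is generally not $\Fqm$-linear) and to remember that $\dim_{\Fq} \CC = km$ rather than $k$. The inequality is an equality precisely when $\phi$ is surjective, which fails in general and explains why only a lower bound can be asserted.
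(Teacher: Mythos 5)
Your proof is correct: the paper itself gives no argument for this proposition and simply cites \cite[Theorem 2 (1)]{J95}, and your kernel-plus-rank-nullity argument (identifying $\CCSS$ as the kernel of the componentwise quotient map $\CC \to (\Fqm/\SS)^n$ and using $\dim_{\Fq}\CC = km$) is exactly the standard proof behind that reference. It is also the same projection-kernel idea the authors use themselves in the proof of Proposition~\ref{prop:typical_dim_SS}, so nothing further is needed.
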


\begin{proof} See for instance \cite[Theorem 2 (1)]{J95}.
\end{proof}

\if\longversion1
\begin{example}
  The following example comes from \cite{HMS98}.
  
Consider $\CC$ the Reed--Solomon code over $\F_{2^4}$ of length 15 and
dimension 9. This code has minimum distance 7. Any element of
$\F_{2^4}$ can be decomposed over the $\F_2$-basis $(1, \alpha,
\alpha^2, \alpha^3)$, where $\alpha$ is a root of the irreducible
polynomial $X^4+X+1$. Let $\SS$ be the subspace spanned by $(1,
\alpha, \alpha^2)$. The code $\CCSS$ is the subset of codewords of
$\CC$ that have no component in $\alpha^3$. Hence, if one uses this
code for communication, there is no need to send the $\alpha^3$
component, since it is always zero.

So this subspace subcode can be seen as an $\F_2$-linear code of
length 15 over the set of binary 3-tuples. But the code is not a
linear code over $\F_{2^3}$. The minimum distance of $\CCSS$ is at
least 7, because it cannot be less than the minimum distance of the
parent code. The number of codewords in $\CCSS$ is $2^{22}$.
As a comparison, one other way to create a code of length 15 over
binary 3-tuples is by shortening the generalised BCH code $[63,52,7]$
over $\F_{2^3}$. This gives a $[15,4,\geq 7]$ code over $\F_{2^3}$
which has $2^{12}$ codewords. 
\end{example}
\fi

Similarly to the case of subfield subcodes, inequality
(\ref{eq:dim_SScode}) is typically an equality as explained in the
following statement that we prove because of a lack of references.

\begin{proposition}\label{prop:typical_dim_SS}
  Let $\RC$ be a uniformly random code among the codes of length $n$
  and dimension $k$ over $\Fqm$. Let $\SS_0, \dots, \SS_{n-1}$ be
  $\Fq$--subspaces of $\Fqm$ of dimension $\lambda$.  Suppose that
  $km > n(m-\lambda)$.  Then, for any integer $\ell$, we have
  \[
    \Prob{\dim_{\Fq}\SScode{\RC}{(\SS_0, \dots, \SS_{n-1})} \geq km -
      n(m-\lambda) + \ell} \leq q^{-\ell} \left(
      \frac{1}{1-q^{-mn}} + \frac{1}{q^{km - n(m - \lambda)}}
    \right).
  \]
  In particular, for fixed values of $q, m$ and $\lambda$, this
  probability is in $O(q^{-\ell})$ when $n \to \infty$.
\end{proposition}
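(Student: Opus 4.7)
The plan is to apply Markov's inequality directly to the cardinality of the subspace subcode. Set $M \eqdef km - n(m-\lambda)$ and observe that $\dim_{\Fq}\SScode{\RC}{(\SS_0, \dots, \SS_{n-1})} \geq M + \ell$ if and only if the subcode contains at least $q^{M+\ell}$ elements. It thus suffices to bound the expected cardinality $\Esp{|\SScode{\RC}{(\SS_0, \dots, \SS_{n-1})}|}$ and then apply Markov.

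First I would compute this expectation. Since the uniform distribution on $[n,k]$ codes over $\Fqm$ is invariant under the transitive action of $GL_n(\Fqm)$ on $\Fqm^n \setminus \{0\}$, the probability $p \eqdef \Prob{\cv \in \RC}$ does not depend on the choice of nonzero $\cv$. Summing over all nonzero vectors yields $p \cdot (q^{mn} - 1) = \Esp{|\RC \setminus \{0\}|} = q^{mk} - 1$, whence $p = (q^{mk} - 1)/(q^{mn} - 1)$. Since $\SS_0 \times \cdots \times \SS_{n-1}$ contains exactly $q^{n\lambda}$ vectors, linearity of expectation and the identity $q^{n\lambda+mk} = q^{M + mn}$ give
\[
\Esp{|\SScode{\RC}{(\SS_0, \dots, \SS_{n-1})}|} = 1 + (q^{n\lambda} - 1) \cdot \frac{q^{mk} - 1}{q^{mn} - 1} \leq 1 + \frac{q^M}{1 - q^{-mn}}.
\]

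Finally, Markov's inequality applied to $|\SScode{\RC}{(\SS_0, \dots, \SS_{n-1})}|$ yields
\[
\Prob{\dim_{\Fq}\SScode{\RC}{(\SS_0, \dots, \SS_{n-1})} \geq M + \ell} \leq \frac{\Esp{|\SScode{\RC}{(\SS_0, \dots, \SS_{n-1})}|}}{q^{M+\ell}} \leq q^{-\ell}\left(\frac{1}{1 - q^{-mn}} + \frac{1}{q^M}\right),
\]
which is the stated bound. The asymptotic claim follows at once: for fixed $q, m, \lambda$ and $M \geq 1$, the bracketed factor is uniformly bounded independently of $n$ (say by $3$ when $q \geq 2$), so the probability is in $O(q^{-\ell})$. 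There is no real obstacle; the only subtle point is choosing to apply Markov to the cardinality rather than the dimension and carefully tracking the contribution of the zero codeword so that the bound comes out in exactly the form given in the statement.
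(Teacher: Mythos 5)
Your proof is correct and follows essentially the same route as the paper: bound the expected cardinality of the subspace subcode and apply Markov's inequality to that cardinality, which yields exactly the stated bound. The only (cosmetic) difference is that you evaluate the expectation by summing $\Prob{\cv \in \RC}$ over the $q^{\lambda n}$ words $\cv$ of $\SS_0 \times \cdots \times \SS_{n-1}$, using the $GL_n(\Fqm)$--invariance of the uniform distribution on $[n,k]$ codes, whereas the paper sums over messages $\mv \in \Fqm^k$ for a uniformly random full-rank generator matrix; both computations give the same estimate.
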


\if\longversion1
\begin{proof}
  Let $\Grand$ be a uniformly random
  variable among the full rank matrices in $\Fqm^{k \times n}$ and
  \[
    \RC \eqdef \{\mv \Grand ~|~ \mv \in \Fqm^k\}.
  \]
  The code $\RC$ is uniformly random among the set of $[n,k]$ codes
  over $\Fqm$ (\cite[Lemma~3.12]{C20}).
  Let $\Phi$ be the $\Fq$--linear canonical projection
  \[
    \Phi : \Fqm^n \longrightarrow \prod_{i=0}^{n-1}\Fqm/\SS_i.
  \]
  Then, $\SScode{\RC}{(\SS_0, \dots, \SS_{n-1})}$ is the kernel of
  the restriction of $\Phi$ to $\RC$ and hence,
  \begin{align}
    \nonumber\Esp{|\SScode{\RC}{(\SS_0, \dots, \SS_{n-1})}|} &=
    \Esp{\sum_{\mv \in \Fqm^k} \mathds{{1}}_{\Phi(\mv \Grand) = 0}}\\
    \nonumber
    & = \sum_{\mv \in \Fqm^k} \Prob{\Phi(\mv \Grand) = 0} 
    \\
    \label{eq:bound_esp}
    & = 1 + \sum_{\mv \in \Fqm^k\setminus{\{0\}}} \Prob{\Phi(\mv \Grand) = 0}.
  \end{align}
  Since $\Grand$ is uniformly random among the full--rank matrices,
  then for any $\mv \in \Fqm^k \setminus \{0\}$, the vector
  $\mv \Grand$ is uniformly random in $\Fqm\setminus \{0\}$
  (\cite[Lemma~3.13]{C20}) and hence
  \begin{align*}
    \forall \mv \in \Fqm^k \setminus \{0\}, \quad \Prob{\Phi (\mv
    \Grand = 0)}& = \frac{\left|\ker \Phi \setminus \{0\}
                  \right|}{\left|\Fqm^n \setminus \{0\}\right|}\\
                &= \frac{\left|\prod_i\SS_i\right| - 1}{q^{mn}-1}\\
                &= \frac{q^{\lambda n} - 1}{q^{mn}-1} \leq q^{-n(m-\lambda)} \cdot \frac{1}{1-q^{-mn}}\cdot
  \end{align*}
  Thus, applied to \eqref{eq:bound_esp},
  \begin{align*}
    \Esp{|\SScode{\RC}{(\SS_0, \dots, \SS_{n-1})}|} & \leq 1 +
     |\Fqm^k \setminus \{0\}| \cdot q^{-n(m-\lambda)} \cdot \frac{1}{1-q^{-mn}} \\
     &\leq  1 + q^{km - n(m-\lambda)} \cdot \frac{1}{1-q^{-mn}}\cdot
  \end{align*}
  Finally, using Markov inequality, we get
  \begin{align*}
    \Prob{\dim_{\Fq}(\SScode{\RC}{(\SS_0, \dots, \SS_{n-1})}) \geq km - n(m-\lambda) + \ell} 
    &= \Prob{|\SScode{\RC}{(\SS_0, \dots, \SS_{n-1})}| \geq
      q^{km-n(m-\lambda) + \ell}}\\
    &\leq \frac{\Esp{|\SScode{\RC}{(\SS_0, \dots, \SS_{n-1})}|}}{q^{km-n(m-\lambda) + \ell}}\\
    & \leq q^{-\ell} \left( \frac{1}{1-q^{-mn}} +
      \frac{1}{q^{km - n(m - \lambda)}} \right).
  \end{align*}
\end{proof}
\else
\begin{proof}
  See \cite{CL20}.
\end{proof}
\fi

\subsection{How to represent subspace
  subcodes?}\label{subsec:expansion}

For a practical implementation, subspace subcodes may be represented
as codes over the subfield $\Fq$ with a higher length. For this sake
we introduce the {\em expansion operator} and give some of its
properties.

\subsubsection{The Expansion Operator}

\begin{definition}[Expansion of a vector or a code] For an $\Fq$--basis $\B$
  of $\Fqm$, let $\ExpVect_{\B}$ denote the expansion of a vector over
  the basis $\B$ defined by
\[ \ExpVect_{\B} : \map { \Fqm^\ell } { \Fq^{m\ell} } { (x_0, \ldots, x_{\ell-1}) } {
(x_{0,0}, \dots, x_{0, m-1}, \dots, x_{\ell-1,0}, \dots, x_{\ell - 1, m-1}),}
\]
where for any $i \in \IInt{0}{\ell-1}$,
$x_i = x_{i,0}b_0 + \cdots + x_{i,m-1}b_{m-1}$.
Given a linear code $\CC$ of
  length $n$ over $\Fqm$, denote
  $\ExpCode_{\B}(\CC)$ the linear code over $\Fq$ defined by
  \[ \ExpCode_\B(\CC) \eqdef \{ \ExpVect_{\B}(\cv) \,|\, \cv \in
  \CC\}.\]
We will
apply this operator to vectors or codes of different lengths $\ell$.
\end{definition}

Note that if $\Bdualall$
denotes the dual basis of $\B$ then
\[ 
  \ExpVect_{\B}(\xv) = (\Tr(b^{\ast}_0x_0), \ldots,
  \Tr(b^{\ast}_{m-1}x_0), \,\ldots\,, \Tr(b^{\ast}_0x_{\ell-1}),
  \ldots, \Tr(b^{\ast}_{m-1}x_{\ell - 1})).
\]

We also wish to define the expansion of a matrix so that the expansion
of a generator matrix of a code $\CC \subseteq \Fqm^n$ over a basis
$\B$ is a generator matrix of $\ExpCode_{\B}(\CC)$.  It turns out that
expanding the rows of a generator matrix is not sufficient. Indeed,
for a given codeword $\cv \in \CC$ and any $\alpha \in \Fqm$, the
vector $\ExpVect_\B(\alpha \cv)$ is in $\ExpCode_\B (\CC)$.  This
leads to the following definition.

\begin{definition}[Expansion of a matrix] Given $\Ball$ an $\Fq$-basis
  of $\Fqm$. Let $\ExpMat_{\B}$ denote the following operation.
  \[
    \ExpMat_\B :
    \map { \Fqm^{k \times n} } { \Fq^{mk \times mn} } {
      \left(

        \begin{array}{ccc} & \mv_{0} & \\ && \\
                           & \vdots & \\
          && \\
& \mv_{k-1} & \\
\end{array} \right)
} {
\left(\begin{array}{ccccc}
  & \ExpVect_{\B}( & b_0\mv_{0} & ) & \\
  && \vdots && \\
  & \ExpVect_{\B} (& b_{m-1}\mv_{0} &) & \\
  && \vdots && \\
  & \ExpVect_{\B} (& b_0\mv_{k-1} & ) & \\
  && \vdots && \\
  & \ExpVect_{\B} ( & b_{m-1}\mv_{k-1} & ) & 
\end{array}\right).
}
\]
\end{definition}

\begin{remark}
  Caution, applying $\ExpMat_\B$ to an $1 \times n$ matrix returns an
  $m \times mn$ matrix. It is {\bf not} equivalent to applying
  $\ExpVect_\B$ to the vector corresponding to this row.
\end{remark}

\if\longversion1
\begin{remark}
  $ \ExpMat_{\Bdual}(\Mm) = \transposeb{\ExpMat_{\B}(\transpose{\Mm})}.$
\end{remark}
\fi

\begin{proposition}[{\cite[Proposition 1]{KRW21}}]
  \label{prop:lift} Let $\CC$ be a linear code of dimension $k$ and
length $n$ over $\Fqm$. Let $\Gm$ denote a generator matrix of $\CC$
and $\Hm$ denote a parity-check matrix of $\CC$. Then, for any fixed
$\Fq$-basis $\B$ of $\Fqm$, the following hold.
\begin{enumerate}[(i)]
\item\label{item:1st_statement} For all $\xv \in \Fqm^k$, we have
  $\ExpVect_\B(\xv\,\cdot\,\Gm) = \ExpVect_\B(\xv) \,\cdot\,
  \ExpMat_\B(\Gm)$.
\item\label{item:2nd_statement} For all $\yv \in \Fqm^n$, we have
  $\transpose{\ExpVect_\B(\transpose{(\Hm \,\cdot\,
      \transpose{\yv})})} = \ExpMat_{\Bdual}(\Hm) \,\cdot\,
  \transpose{\ExpVect_\B(\yv)}$.
\end{enumerate}
\end{proposition}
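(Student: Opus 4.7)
My plan is to prove (i) by unwinding both sides using the $\Fq$-linearity of the expansion operator, and then to deduce (ii) from (i) by transposing and invoking a purely matricial identity that relates $\ExpMat_\B$ to $\ExpMat_{\Bdual}$.

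For (i), I would write $\xv = (x_0, \dots, x_{k-1})$ with $x_i = \sum_{j=0}^{m-1} x_{i,j} b_j$, so that $\xv \cdot \Gm = \sum_{i,j} x_{i,j}\, b_j \mv_i$, where $\mv_i$ denotes the $i$-th row of $\Gm$. Since $\ExpVect_\B$ is $\Fq$-linear, the left-hand side becomes $\sum_{i,j} x_{i,j}\, \ExpVect_\B(b_j \mv_i)$. On the right, $\ExpVect_\B(\xv) \cdot \ExpMat_\B(\Gm)$ is, by the very definition of $\ExpMat_\B$, the linear combination of the rows $\ExpVect_\B(b_j \mv_i)$ of $\ExpMat_\B(\Gm)$ with coefficients given by the entries $x_{i,j}$ of $\ExpVect_\B(\xv)$. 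The two sums coincide.

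For (ii), I would observe that $\transpose{(\Hm \cdot \transpose{\yv})} = \yv \cdot \transpose{\Hm}$. The statement (i) makes sense for any rectangular matrix, not just a generator matrix, so applying it with $\transpose{\Hm}$ in place of $\Gm$ gives $\ExpVect_\B(\yv \transpose{\Hm}) = \ExpVect_\B(\yv) \cdot \ExpMat_\B(\transpose{\Hm})$. Transposing this identity yields
\[
\transpose{\ExpVect_\B(\yv \transpose{\Hm})} = \transpose{\ExpMat_\B(\transpose{\Hm})} \cdot \transpose{\ExpVect_\B(\yv)}.
\]
Thus (ii) reduces to the matrix identity $\ExpMat_{\Bdual}(\Hm) = \transpose{\ExpMat_\B(\transpose{\Hm})}$.

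The main obstacle is this last identity, where one must carefully track the appearance of the dual basis. The key ingredient is that $\B$ is itself the dual of $\Bdual$, because the relation $\Tr(b_i b_j^*) = \delta_{ij}$ is symmetric in the two bases. Consequently, for any $u \in \Fqm$, the $l$-th coordinate of $u$ in basis $\B$ is $\Tr(b_l^* u)$, while its $p$-th coordinate in basis $\Bdual$ is $\Tr(b_p u)$. Writing $\Hm = (h_{i,j})$ and computing entries, the $((i,l),(j,p))$-entry of $\transpose{\ExpMat_\B(\transpose{\Hm})}$ comes out as $\Tr(b_l^* b_p h_{i,j})$, whereas the corresponding entry of $\ExpMat_{\Bdual}(\Hm)$ comes out as $\Tr(b_p b_l^* h_{i,j})$; these are equal by commutativity of multiplication in $\Fqm$, which concludes the proof.
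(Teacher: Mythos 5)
Your proof is correct. Note that the paper itself gives no proof of this proposition: it is imported verbatim from \cite[Proposition~1]{KRW21}, so there is no internal argument to compare against. Your two ingredients are both sound: for (i), the $\Fq$-linearity of $\ExpVect_\B$ together with the row ordering of $\ExpMat_\B(\Gm)$ (row $mi+j$ being $\ExpVect_\B(b_j\mv_i)$, matched against entry $mi+j$ of $\ExpVect_\B(\xv)$, namely $x_{i,j}$) gives the identity at once, and nothing in this argument uses that $\Gm$ generates $\CC$, which legitimises your application of (i) to $\transpose{\Hm}$. For (ii), your reduction to the matrix identity $\ExpMat_{\Bdual}(\Mm) = \transposeb{\ExpMat_{\B}(\transpose{\Mm})}$ is exactly the remark that the long version of the paper states without proof, and your entrywise verification of it is correct: since the dual basis relation $\Tr(b_i b_j^{\ast}) = \delta_{ij}$ is symmetric, the dual of $\Bdual$ is $\B$, so coordinates in $\Bdual$ are given by traces against $\B$, and both sides have $((i,l),(j,p))$-entry $\Tr(b_l^{\ast} b_p h_{i,j})$. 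Thus your write-up in fact proves both the proposition and that unproved remark.
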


\begin{cor}\label{cor:HmGmExp}
  Let $\Gm$ and $\Hm$ be a generator and a parity-check matrix of
  $\CC$ respectively. Let $\B$ denote an $\Fq$-basis of $\Fqm$.  Then
  $\ExpMat_\B(\Gm)$ and $\ExpMat_{\Bdual}(\Hm)$ are respectively a
  generator matrix and a parity-check matrix of $\ExpCode_{\B}(\CC)$.
\end{cor}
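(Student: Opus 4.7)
The plan is to deduce both statements directly from Proposition~\ref{prop:lift} together with the observation that the vector expansion $\ExpVect_\B : \Fqm^n \to \Fq^{mn}$ is an $\Fq$--linear isomorphism (since $\B$ is a basis), and in particular $\dim_{\Fq} \ExpCode_\B(\CC) = m \dim_{\Fqm}\CC = mk$.

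For the generator matrix claim, I would first observe that for every $j \in \IInt{0}{k-1}$ and every $i \in \IInt{0}{m-1}$, the row $b_i \mv_j$ of $\Gm$ scaled by $b_i$ lies in $\CC$ (since $\CC$ is $\Fqm$--linear), so its expansion $\ExpVect_\B(b_i \mv_j)$ lies in $\ExpCode_\B(\CC)$. Hence all the $mk$ rows of $\ExpMat_\B(\Gm)$ belong to $\ExpCode_\B(\CC)$. Conversely, any element of $\ExpCode_\B(\CC)$ has the form $\ExpVect_\B(\xv \,\cdot\, \Gm)$ for some $\xv \in \Fqm^k$; by Proposition~\ref{prop:lift}~\eqref{item:1st_statement} this equals $\ExpVect_\B(\xv)\,\cdot\,\ExpMat_\B(\Gm)$, an $\Fq$--linear combination of rows of $\ExpMat_\B(\Gm)$. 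Thus the rows generate $\ExpCode_\B(\CC)$, and comparing the $mk$ rows with $\dim_{\Fq}\ExpCode_\B(\CC) = mk$ forces them to be $\Fq$--linearly independent.

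For the parity-check matrix claim, I would use Proposition~\ref{prop:lift}~\eqref{item:2nd_statement} in both directions. If $\cv \in \CC$, then $\Hm\,\cdot\,\transpose{\cv} = 0$, so applying the proposition gives $\ExpMat_{\Bdual}(\Hm) \,\cdot\, \transpose{\ExpVect_\B(\cv)} = 0$; therefore $\ExpCode_\B(\CC) \subseteq \ker \ExpMat_{\Bdual}(\Hm)$. Conversely, any $\zv \in \Fq^{mn}$ is uniquely of the form $\zv = \ExpVect_\B(\yv)$ with $\yv \in \Fqm^n$, and by the same proposition $\ExpMat_{\Bdual}(\Hm) \,\cdot\, \transpose{\zv} = 0$ is equivalent to $\ExpVect_\B(\transpose{(\Hm \,\cdot\, \transpose{\yv})})^{\intercal} = 0$, hence (by injectivity of $\ExpVect_\B$) to $\Hm \,\cdot\, \transpose{\yv} = 0$, that is, to $\yv \in \CC$ and thus $\zv \in \ExpCode_\B(\CC)$. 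So $\ker\ExpMat_{\Bdual}(\Hm) = \ExpCode_\B(\CC)$, which exactly says $\ExpMat_{\Bdual}(\Hm)$ is a parity-check matrix.

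There is no real obstacle here: the statement is a packaging of Proposition~\ref{prop:lift} together with a dimension/bijectivity check. The only point requiring minor care is to use the \emph{bijectivity} of $\ExpVect_\B$ when arguing the reverse inclusion for the parity-check matrix, so that the argument covers all vectors in $\Fq^{mn}$ and not only those arising from codewords.
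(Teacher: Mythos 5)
Your proof is correct and follows exactly the route the paper intends: the corollary is stated as an immediate consequence of Proposition~\ref{prop:lift} (no separate proof is given), with item~\eqref{item:1st_statement} yielding the generator matrix claim and item~\eqref{item:2nd_statement}, combined with the bijectivity of $\ExpVect_\B$, yielding the parity-check claim. Your added dimension count and the explicit use of bijectivity for the reverse inclusion are precisely the routine details the paper leaves implicit.
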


\begin{definition}[Block]\label{def:block}
  Given a vector $\vv \in \Fqm^n$, an $\Fq$--basis $\B$ of $\Fqm$ and
  a non negative integer $i < n$, the $i$--th {\em block} of the
  expanded vector $\ExpVect_{\B}(v) \in \Fq^{mn}$ is the length $m$
  vector composed by the entries of index $mi, mi+1, \dots, mi+m-1$ of
  $\ExpVect_{\B}(v)$. It corresponds to the decomposition over $\B$ of
  the $i$-th entry of $\vv$. We extend this definition to matrices,
  where the $i$-th block of an expanded matrix means the $mk \times m$
  matrix whose rows correspond to the $i$-th block of each row of the
  expanded matrix. In particular, the expansion in a basis $\B$ of
  some $\xv \in \Fqm^n$ is the concatenation of $n$ blocks of length
  $m$.
\end{definition}

\subsubsection{Expansion over various bases}

We have seen in Definition~\ref{def:subspace-generalised} that we
could define a subspace subcode with different subspaces for each
entry. Similarly, we can define an expansion with regard to a different
basis for each entry.

\begin{definition} Given $\ell$ bases $(\B_0,\ldots, \B_{\ell-1})$ of
  $\Fqm$, we define $\ExpVect_{{(\B_i)}_i}(\xv)$ as the expansion of
  $\xv \in \Fqm^\ell$, whose $i$\textsuperscript{th} block is the
  expansion of $x_i$ over the basis $\B_i$.  For a linear code
  $\CC \subseteq \Fqm^n$, we
  define $\ExpCode_{{(\B_i)}_i}(\CC)$ in a similar fashion.
Finally, given a matrix $\Mm \in \Fqm^{k\times n}$ and an additional $\Fq$--basis
$\bar{\B} = (\bar{b}_0, \dots , \bar{b}_{m-1})$ of $\Fqm$, we define
\[
  \ExpMat_{{(\B_i)}_i}^{\bar{\B}}(\Mm) \eqdef
\left(\begin{array}{ccccc}
  & \ExpVect_{{(\B_i)}_i}( & \bar{b}_0\mv_{0} & ) & \\
  && \vdots && \\
  & \ExpVect_{{(\B_i)}_i} (& \bar{b}_{m-1}\mv_{0} &) & \\
  && \vdots && \\
  & \ExpVect_{{(\B_i)}_i} (& \bar{b}_0\mv_{k-1} & ) & \\
  && \vdots && \\
  & \ExpVect_{{(\B_i)}_i} ( & \bar{b}_{m-1}\mv_{k-1} & ) & 
\end{array}\right).  
\]
\end{definition}

\if\longversion1
Using dual bases we get an explicit description:
\[\ExpVect_{(\B_i)_i} (\xv) = 
(\Tr(b^{\ast}_{0,0}x_0), \ldots, \Tr(b^{\ast}_{0,m-1}x_0), \,\ldots\,,
\Tr(b^{\ast}_{\ell-1,0}x_{\ell-1}) \ldots,
\Tr(b^{\ast}_{\ell-1,m-1}x_{\ell - 1})),
\] where $\B_i = (b_{i,0}, \ldots, b_{i,m-1})$.
\fi

The properties of Proposition~\ref{prop:lift}
still hold for various bases.

\begin{remark} Note that contrary to the expansion of codes, the
  expansion of a matrix depends on the choice of a basis $\bar{\B}$ for
  the vertical expansion. When considering the code spanned by an
  expansion matrix, different choices of $\bar{\B}$ yield the same
  code, so we will omit the vertical expansion basis in the expansion
  matrix operator.
\end{remark}

\subsubsection{Squeezing: the inverse of expansion}
\label{sub:squeeze}

We can define the ``inverse'' of the expansion operator.

\begin{definition}[Squeezing] Let $\Ball$ be a basis of
$\Fqm$. Let $\xv =
(x_{0,0},\ldots,x_{0,m-1},\ldots,x_{n-1,0},\ldots,
x_{n-1,m-1}) \in \Fq^{mn}$.  We define the \emph{squeezed vector}
of $\xv$ with respect to the basis $\B$ as
  \[\SqueezeVect_\B(\xv) \eqdef \bigg( \sum_{j=0}^{m-1} x_{0,j}
b_j, \ldots, \sum_{j=0}^{m-1} x_{n-1,j}b_j \bigg) \in \Fqm^n.\]
  Let $\CC$ be an $[m \times n, k]$--code over $\Fq$.  We define the
\emph{squeezed code} of $\CC$ with respect to $\B$ as
  \[\SqueezeCode_\B(\CC) \eqdef \left\{ \SqueezeVect_\B(\cv) \,|\, \cv \in \CC
    \right\}.\] Finally, given a matrix $\Mm \in \Fq^{mk \times mn}$,
  then $\SqueezeMat_\B(\Mm) \in \Fq^{mk \times n}$ denotes the matrix
  whose rows are obtained by squeezing each row of the matrix $\Mm$
  over $\B$.
\end{definition}

\begin{remark}
  Beware that if $\Mm$ is obtained by expanding a matrix of rank $r$ over the
  basis $\B$, then $\SqueezeMat_{\B}(\ExpMat_{\B}(\Mm))$ is a
  $km \times n$ matrix and hence is {\bf not} equal to $\Mm$ but
  it generates the same code.
\end{remark}

\begin{proposition} Let $\CC$ be an $[n,k]$ code over $\Fqm$ and $\B$ be
a basis of $\Fqm$. Then,
  \[ \SqueezeCode_\B(\ExpCode_\B(\CC)) = \CC. \]
\end{proposition}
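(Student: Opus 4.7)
The plan is to show that $\SqueezeVect_\B$ is a left inverse to $\ExpVect_\B$ at the level of vectors, and then lift this pointwise identity to the codes. First I would unfold the definitions: for any $\cv = (c_0, \dots, c_{n-1}) \in \Fqm^n$, using the uniqueness of the decomposition of each $c_i$ over the $\Fq$--basis $\B = (b_0, \dots, b_{m-1})$, write $c_i = \sum_{j=0}^{m-1} c_{i,j} b_j$ with $c_{i,j} \in \Fq$. Then by definition $\ExpVect_\B(\cv) = (c_{0,0}, \dots, c_{0,m-1}, \dots, c_{n-1,0}, \dots, c_{n-1,m-1})$, and applying $\SqueezeVect_\B$ recombines each block of $m$ coordinates back into $\sum_j c_{i,j} b_j = c_i$. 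Hence
\[
\SqueezeVect_\B \circ \ExpVect_\B = \mathrm{id}_{\Fqm^n}.
\]

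Next I would pass from vectors to codes. Since $\ExpVect_\B$ is $\Fq$--linear and $\CC$ is an $\Fq$--subspace of $\Fqm^n$, the set $\ExpCode_\B(\CC) = \{\ExpVect_\B(\cv) \mid \cv \in \CC\}$ is already an $\Fq$--linear code (no additional closure is needed), so every element of $\ExpCode_\B(\CC)$ has the form $\ExpVect_\B(\cv)$ for some $\cv \in \CC$. Applying the pointwise identity above yields
\[
\SqueezeCode_\B(\ExpCode_\B(\CC)) = \{\SqueezeVect_\B(\ExpVect_\B(\cv)) \mid \cv \in \CC\} = \{\cv \mid \cv \in \CC\} = \CC,
\]
which is the desired equality.

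There is essentially no obstacle: the whole content is the fact that the coordinates of an element of $\Fqm$ with respect to the fixed basis $\B$ are uniquely determined, which makes squeezing undo expansion on the nose. The only small subtlety worth mentioning explicitly is that, unlike the matrix-level situation flagged in the remark just before the proposition, at the vector level there is no discrepancy in dimensions, so $\SqueezeVect_\B \circ \ExpVect_\B$ really is the identity (not merely a map generating the same code).
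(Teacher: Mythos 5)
Your proof is correct: the key point that squeezing inverts expansion coordinatewise, because the decomposition over the fixed basis $\B$ is unique, is exactly why the paper states this proposition without proof as an immediate consequence of the definitions. Your remark distinguishing the vector-level identity from the matrix-level caveat ($\SqueezeMat_\B(\ExpMat_\B(\cdot))$ only generating the same code) is also accurate and consistent with the paper's remark preceding the proposition.
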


Similarly to the expansion operators, we can define the squeezing
operators with a different basis for each block.

\subsubsection{Representation of subspace subcodes}
\label{ss:representation_SSCodes}

Let $\CC$ be a code of length $n$ and dimension $k$ over the field
$\Fqm$ and $\SS$ denote an $\Fq$-subspace of $\Fqm$ of dimension
$\lambda \leq m$. Let
$\BS = (b_0, \ldots, b_{\lambda-1}) \in \Fqm^\lambda$ be an
$\Fq$-basis of $\SS$. Then any vector
$\cv = (c_0, \dots, c_{n-1}) \in \SS^n$, \ie whose entries are all in
$\SS$ can be expanded as
\[
  \ExpVect_{\BS}(\cv) \eqdef (c_{0,0}, \dots, c_{0,\lambda-1}, \dots, c_{n-1,0}, \dots,
  c_{n-1,\lambda-1}),
\]
where the ${c_{i,j}}'s$ are the coefficients of
the decomposition of $c_i$ in the $\BS$.
\if\longversion1
\begin{remark}\label{rem:well-def_for_SS}
  Note that the previous definition makes sense only
  for vectors in $\SS^n$.
\end{remark}
\fi
Next, the subspace subcode $\SScode{\CC}{\SS}$
can be represented as
\[
\ExpCode_{\BS}(\SScode{\CC}{\SS}) \eqdef \{\ExpVect_{\BS}(\cv) ~|~ \cv
\in \SScode{\CC}{\SS}\}.
\]
\if\longversion1
Here again, as noticed in Remark~\ref{rem:well-def_for_SS}, the notion
is well--defined only for codes with entries in $\SS$.

Similarly to
Definition~\ref{def:block}, a {\em block} refers to a set of the form
$\IInt{i\lambda}{(i+1)\lambda -1}$. That is to say, a set of
$\lambda = \dim \SS$ consecutive indexes of the expanded code,
corresponding to the decomposition of a single entry in $\SS$ in the
basis $\BS$.
\fi

\subsection{Subspace-subcodes of Reed-Solomon codes}

Expanding codes, in particular Reed--Solomon codes, over the base
field has been studied since the 1980's. For instance, in \cite{KL85,
  KL88}, Kasami and Lin investigate the weight distribution of
expanded binary Reed--Solomon codes. Sakakibara, Tokiwa and Kasahara
extend their work to $q$-ary Reed--Solomon codes \cite{STK89}.

But the idea behind subspace subcodes, which consists in keeping only
the subset of codewords that are defined over a subspace of the field,
first appears in a paper by Solomon \cite{S93c}. In a joint work with
McEliece \cite{MS94}, they define the notion of \emph{trace-shortened}
codes, which is a special case of subspace subcodes where
$\lambda = m-1$ and where the considered subspace $\SS$ is the kernel
of the trace map. These articles focus uniquely on Reed--Solomon
codes. Still, this point of view turns out to be the most general one
since a subspace subcode of a GRS code can always be regarded as a
subspace subcode of an RS code by changing the subspaces as explained
by the following statements.

\begin{proposition}
  \label{prop:scalar}
  Let $\CC \subseteq \Fqm^n$, $\SS_0, \dots, \SS_{n-1}$
  be $\Fq$--subspaces of $\Fqm$ and $\av \in {(\Fqm^\times)}^n$.
  Then,
  \[
    \SScode{(\CC \star \av)}{(\SS_0, \dots, \SS_{n-1})}
    = \SScode{\CC}{(a_0^{-1}\SS_0, \dots , a_{n-1}^{-1} \SS_{n-1})} \star \av.
  \]
  Moreover, let $\xv, \yv \in \Fqm^n$ be a support and a multiplier, then
  \[
    \SScode{ \GRS{k}{\xv}{\yv}}{(\SS_0, \dots, \SS_{n-1})} =
    \SScode{\RS{k}{\xv}}{(y_0^{-1}\SS_0, \dots, y_{n-1}^{-1}\SS_{n-1})} \star \yv.
  \]
\end{proposition}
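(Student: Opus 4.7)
The plan is to prove the first identity by a direct set equality, then derive the second as an immediate corollary.

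For the first identity, I would observe that since $\av \in (\Fqm^\times)^n$ has only nonzero entries, the componentwise multiplication map $\Phi_{\av}\colon \Fqm^n \to \Fqm^n$, $\cv \mapsto \cv \star \av$, is an $\Fqm$--linear bijection. By definition of $\CC \star \av$ (\ie $\{\cv \star \av \mid \cv \in \CC\}$, which makes sense here because $\av$ is a single vector), one has $\CC \star \av = \Phi_{\av}(\CC)$. It then suffices to show that $\Phi_{\av}$ maps the subspace constraint $(a_0^{-1}\SS_0, \dots, a_{n-1}^{-1}\SS_{n-1})$ to the constraint $(\SS_0, \dots, \SS_{n-1})$, which is obvious: for each $i$, $c'_i \in a_i^{-1}\SS_i$ if and only if $a_i c'_i \in \SS_i$, since $a_i \neq 0$ and each $\SS_i$ is an $\Fq$--subspace stable under multiplication by elements of $\Fq^\times$ but, more importantly, $a_i^{-1}\SS_i$ is defined precisely as the preimage of $\SS_i$ under multiplication by $a_i$.

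Concretely, I would write a chain of equivalences: $\cv \in \SScode{(\CC \star \av)}{(\SS_0, \dots, \SS_{n-1})}$ iff $\cv = \cv' \star \av$ with $\cv' \in \CC$ and $c'_i a_i \in \SS_i$ for every $i$, iff $\cv'$ belongs to $\CC$ with $c'_i \in a_i^{-1}\SS_i$ for every $i$, iff $\cv' \in \SScode{\CC}{(a_0^{-1}\SS_0, \dots, a_{n-1}^{-1}\SS_{n-1})}$, iff $\cv \in \SScode{\CC}{(a_0^{-1}\SS_0, \dots, a_{n-1}^{-1}\SS_{n-1})} \star \av$.

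For the second identity, I would simply recall that by Definition~\ref{def:GRS} one has $\GRS{k}{\xv}{\yv} = \RS{k}{\xv} \star \yv$, and then apply the first identity with $\CC = \RS{k}{\xv}$ and $\av = \yv$. No real obstacle is expected here; the main care is merely notational, namely ensuring that the two occurrences of the symbol $\star$ (one for a ``vector--vector'' product yielding a code, the other for the general $\star$--product of codes) are used consistently.
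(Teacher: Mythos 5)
Your proof is correct: the chain of equivalences via the bijection $\cv \mapsto \cv \star \av$ (valid since all $a_i \neq 0$) together with the observation $c'_i \in a_i^{-1}\SS_i \iff a_i c'_i \in \SS_i$, and the identity $\GRS{k}{\xv}{\yv} = \RS{k}{\xv} \star \yv$ from Definition~\ref{def:GRS}, is exactly the straightforward verification the paper relies on (it states Proposition~\ref{prop:scalar} without proof, deferring details to the long version). Your care about the meaning of $\star$ with a single vector as second factor is the only subtle point, and you handle it correctly.
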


The notion is then generalised to any kind of subspace and any
code by Jensen in \cite{J95} under the name \emph{subgroup subcodes}.
In his thesis \cite{H95} and in \cite{HMS98}, Hattori studies the
dimension of subspace subcodes of Reed--Solomon codes. Some
conjectures of Hattori are later proved by Spence in \cite{S04b}. Cui
and Pei extend the results to generalised Reed--Solomon codes in
\cite{JJ01}.  Then, Wu proposes a more constructive approach of these
codes using the equivalent of the expansion operator in \cite{W11}.
The idea of using distinct bases for each position is introduced in
\cite{vDT99}.

\subsection{An instantiation of McEliece with SSRS codes}
\label{sec:generic-scheme}

Let us first present a generic encryption scheme based on subspace
subcodes of GRS codes. This cryptosystem will be referred to as the
\emph{Subspace Subcode of Reed--Solomon} (SSRS) scheme. We will later
prove that the cryptosystem of \cite{KRW21} is a sub-instance of the
SSRS scheme.

\subsubsection{Parameters}

The cryptosystem is publicly parametrised by:
\begin{itemize}
\item $q$ a prime power;
\item $m$ an integer;
\item $\lambda$ such that $0 < \lambda < m$;
\item $n,k$ such that $0 \leq k < n \leq q^m$ and $km > (m -
  \lambda)n$.
\end{itemize}

\subsubsection{Key generation}
\label{sec:SSRS-keygen}

\begin{itemize}
\item Generate a uniformly random vector $\xv \in \Fqm^n$ with
  distinct entries.
\item Choose $n$ uniformly random $\lambda$--dimensional vector
  subspaces $\SS_0, \dots, \SS_{n-1} \subseteq \Fqm$
  with respective bases $\B_{\SS_0}, \dots, \B_{\SS_{n-1}}$.
\item Let $\Gpub \in \Fq^{(km - n(m - \lambda)) \times \lambda n}$
  denote a generator matrix of the code
  \[
    \Cpub \eqdef \ExpCode_{(\B_{\SS_0}, \dots,
      \B_{\SS_{n-1}})}\left(\RS{k}{\xv}_{|(\SS_0, \ldots,
        \SS_{n-1})}\right).
  \]
  If $\Gpub$ is not full-rank, abort and restart the process.  See
  Section~\ref{subsec:practical_construction} for the practical
  computation on $\Gpub$.
\item The public key is $\Gpub$ and the secret key is
  $(\xv, \B_{\SS_0}, \ldots, \B_{\SS_{n-1}})$.
\end{itemize}

\begin{lem}[Public Key Size] The public key is a matrix of size
  $m (n-k) \times \lambda n$ over $\Fq$. Only the non-systematic part
  is transmitted. Hence the public key size in bits is
  \[m(n-k)(\lambda n - m(n-k))\log_2(q).\]
\end{lem}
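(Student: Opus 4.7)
The plan is a straightforward dimension count that reads off the shape of $\Gpub$ from the key generation procedure and Proposition~\ref{prop:SS_parameters}, then tallies the entries of its systematic form.

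First I would determine the $\Fq$--length of $\Cpub$. Every codeword of $\SScode{\RS{k}{\xv}}{(\SS_0, \dots, \SS_{n-1})}$ has its $i$-th coordinate in the $\lambda$-dimensional subspace $\SS_i$, so expansion along the basis $\B_{\SS_i}$ produces $\lambda$ symbols of $\Fq$ per coordinate, hence $\Cpub$ has length $\lambda n$ over $\Fq$. Next I would pin down its $\Fq$--dimension. Proposition~\ref{prop:SS_parameters} gives the lower bound $\dim_{\Fq} \SScode{\RS{k}{\xv}}{(\SS_0, \dots, \SS_{n-1})} \geq km - n(m-\lambda)$, and the restricted expansion map $\prod_i \SS_i \to \Fq^{\lambda n}$ is an $\Fq$--linear isomorphism, hence preserves dimension. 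Since the key generation restarts whenever $\Gpub$ fails to be full rank (at the declared size), I may assume that $\Gpub$ attains this bound with equality, so it has $k' \eqdef km - n(m-\lambda) = \lambda n - m(n-k)$ rows and $\lambda n$ columns over $\Fq$.

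It remains to count the free symbols in the systematic form. Writing $\Gpub = [I_{k'} \mid A]$ up to a permutation of columns, only the block $A$ needs to be transmitted; its size is $k' \times (\lambda n - k')$, which equals $\bigl(\lambda n - m(n-k)\bigr) \times m(n-k)$. Hence $A$ contains $m(n-k)\bigl(\lambda n - m(n-k)\bigr)$ entries in $\Fq$, each encoded on $\log_2 q$ bits, which yields the announced total of $m(n-k)(\lambda n - m(n-k))\log_2 q$ bits. Note that this product is symmetric in its two factors, so the formula is insensitive to whether one records the row count as $k'$ or as the codimension $\lambda n - k' = m(n-k)$.

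The argument is essentially a combinatorial count and poses no real obstacle. The only non-routine input is Proposition~\ref{prop:SS_parameters}, coupled with the observation that the expansion operator is an $\Fq$--linear isomorphism on $\prod_i \SS_i$ and the restart mechanism of the key generation, which together guarantee that $\Gpub$ has the stated rank.
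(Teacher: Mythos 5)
Your count is correct and matches the paper, which states this lemma without proof as an immediate counting consequence of the key-generation step; your dimension argument via Proposition~\ref{prop:SS_parameters} together with the full-rank restart is exactly the implicit reasoning. The only cosmetic mismatch is that the lemma records the key as an $m(n-k)\times \lambda n$ (parity-check-shaped) matrix whereas you count the rows of $\Gpub$ as $\lambda n - m(n-k)$; as you yourself note, the non-systematic block has $m(n-k)\bigl(\lambda n - m(n-k)\bigr)$ entries either way, so the stated bit size is unaffected.
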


\subsubsection{Encryption}
\label{sec:SSRS-enc}

Let $\mv \in \Fq^{mk-(m-\lambda)n}$ be the plaintext. Denote
\[t \eqdef \lfloor \frac{n-k}{2} \rfloor.\]
Choose
$\ev \subseteq \Fq^{(m-\lambda) n}$ uniformly at random among vectors
of $\Fq^{(m-\lambda) n}$ with exactly $t$ non-zero blocks (see
Definition~\ref{def:block}).

\subsubsection{Decryption}
\label{sec:SSRS-dec}
From $\yv \in \Fq^{\lambda n}$, construct a vector $\yv' \in \Fq^{mn}$
by completing each block of size $\lambda$ with $m-\lambda$ entries
set to zero.

Denote
\(
  \yv'' = \SqueezeVect_{(\B_i)_i}(\yv').
\)
According to the definition of $\ev$, the vector $\yv'' \in \Fqm^n$ is
at distance $t$ of the code $\RS{k}{\xv}$. Hence, by decoding, one
computes the unique $\cv \in \RS{k}{\xv}$ at distance $\leq t$ from
$\yv''$ and the expansion of $\cv$ yields $\mv \Gpub$.

%%% Local Variables:
%%% mode: latex
%%% TeX-master: "../xgrs"
%%% End:

\section{Further properties of the expansion operator}
\label{sec:tools}

We now introduce some properties of the expansion operators. More
specifically, in order to analyse the XGRS cryptosystem, we study how
this operator behaves with respect to other operations (especially
those used in the key generation): puncturing/shortening, computing
the dual, changing the expansion basis. We also consider the relation
with the square product operation, as this is a natural distinguisher
for GRS-based codes.

In this section, for the sake of clarity, all properties will be
defined considering the same basis for each entry, but everything
works exactly the same way if one considered expansion with a different
basis for each entry, as different columns of $\Fqm$ (or blocks of columns
of $\Fq$ corresponding to the expansion of same column of $\Fqm$) do
not interact.

\subsection{Subspace subcodes as shortening of expanded codes}

This lemma explains how to construct the parity-check matrix of a
subspace subcode from the parity-check matrix of the parent code. This
result is important to perform computations over the subspace
subcodes.

\begin{lem}\label{lem:JLM}
  For integers $n$ and $\lambda < m$, denote $\JLM$ the
  subset of $\IInt{0}{mn-1}$ consisting of the last $m-\lambda$
  entries of each block of length $m$
  \begin{equation}\label{eq:def_JLM}
    \JLM \eqdef \left\{im+j,\, i \in \IInt{0}{n-1}, \, j
      \in \IInt{\lambda}{m-1}\right\}.
  \end{equation} 
  Let $\Ball$ be an $\Fq$-basis of $\Fqm$ such that
  $\BS = (b_0, \ldots, b_{\lambda-1})$ is a basis of $\SS$. Then,
  \[
    \ExpCode_{\BS}(\CCSS) = \sh{\ExpCode_\B(\CC)}{\JLM}.
  \]
  Equivalently, the following diagram is commutative.
  \begin{center}
  \begin{tikzcd}
  \CC
  \ar[r, "(\,\cdot\,)_{|\SS}"]
  \ar[d, "\mathbf{Exp}_{\B}"]
  & [5em]
  \CCSS
  \ar[d, "\mathbf{Exp}_{\BS}"]
  \\ [3em]
  \ExpCode_{\B}(\CC)
  \ar[r, "\sh{\,\cdot\,}{\JLM}"]
  &
  \ExpCode_{\BS}(\CCSS)
  \\
\end{tikzcd}
\end{center}
%%% Local Variables:
%%% mode: latex
%%% TeX-master: "../xgrs"
%%% End:

\end{lem}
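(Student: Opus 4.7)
The plan is to prove equality of the two codes by a direct chase through the definitions, exploiting the fact that the chosen basis $\B$ is adapted to $\SS$, meaning that the first $\lambda$ vectors of $\B$ span $\SS$.

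First I would establish the key observation: for any $\cv = (c_0, \ldots, c_{n-1}) \in \CC$, writing each $c_i = \sum_{j=0}^{m-1} c_{i,j} b_j$ with $c_{i,j} \in \Fq$, the codeword $\cv$ lies in $\CCSS$ if and only if $c_{i,j} = 0$ for all $i \in \IInt{0}{n-1}$ and $j \in \IInt{\lambda}{m-1}$. This is precisely the condition that $\ExpVect_{\B}(\cv)$ vanishes on the index set $\JLM$ defined in \eqref{eq:def_JLM}. Moreover, when this condition holds, puncturing $\ExpVect_{\B}(\cv)$ at $\JLM$ keeps exactly the coordinates $c_{i,0}, \ldots, c_{i,\lambda-1}$ for each $i$, which is by definition $\ExpVect_{\BS}(\cv)$, since $\BS = (b_0, \ldots, b_{\lambda-1})$.

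Then I would prove both inclusions. For the forward inclusion, given $\cv \in \CCSS$, the expansion $\ExpVect_{\B}(\cv)$ belongs to $\ExpCode_{\B}(\CC)$ and is supported outside $\JLM$, hence lies in the subset of $\ExpCode_{\B}(\CC)$ whose entries in $\JLM$ vanish; puncturing at $\JLM$ yields $\ExpVect_{\BS}(\cv) \in \sh{\ExpCode_{\B}(\CC)}{\JLM}$. For the reverse inclusion, any element of $\sh{\ExpCode_{\B}(\CC)}{\JLM}$ is obtained by puncturing at $\JLM$ some $\ExpVect_{\B}(\cv)$ with $\cv \in \CC$ such that the $\JLM$-entries of $\ExpVect_{\B}(\cv)$ are zero; by the observation above, such a $\cv$ lies in $\CCSS$ and the punctured vector equals $\ExpVect_{\BS}(\cv)$.

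There is no real obstacle here; the statement is essentially a bookkeeping fact, and the only delicate point is keeping track of the two different expansion operators $\ExpVect_{\B}$ (blocks of size $m$) and $\ExpVect_{\BS}$ (blocks of size $\lambda$) and verifying that they agree after puncturing/shortening at $\JLM$. The commutativity of the diagram then follows by applying the two established inclusions to any $\cv \in \CC$: following the top-right path gives $\ExpVect_{\BS}(\cv \text{ if } \cv \in \CCSS)$, while the bottom-left path gives the shortening of $\ExpVect_{\B}(\cv)$ at $\JLM$, and both procedures produce the same $\Fq$-linear code by the set equality just proved.
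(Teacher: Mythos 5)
Your proof is correct: the equivalence $\cv\in\CCSS \Leftrightarrow \ExpVect_{\B}(\cv)$ vanishes on $\JLM$ (which rests on the uniqueness of coordinates in the basis $\B$ adapted to $\SS$), together with the identification $\pu{\ExpVect_{\B}(\cv)}{\JLM}=\ExpVect_{\BS}(\cv)$ for such $\cv$, gives both inclusions, and the reverse inclusion correctly uses that every element of $\ExpCode_{\B}(\CC)$ is of the form $\ExpVect_{\B}(\cv)$ with $\cv\in\CC$. Your route differs in flavour from the paper's: you argue entirely on the primal side, chasing codewords through the definitions of expansion and shortening, whereas the paper justifies the lemma on the dual side, expanding a parity-check matrix $\Hm$ of $\CC$ over the dual basis (Corollary~\ref{cor:HmGmExp}) and invoking the shortening/puncturing duality of Proposition~\ref{prop:dual_pu_sh} to conclude that puncturing the last $m-\lambda$ columns of each block yields a parity-check matrix of $\ExpCode_{\BS}(\CCSS)$. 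Your argument is more elementary and self-contained; the paper's has the practical by-product, used later for computations on the public code, of producing an explicit parity-check matrix of the expanded subspace subcode. One cosmetic remark: your closing sentence about the diagram is phrased awkwardly (the diagram is nothing more than a restatement of the set equality, so once the equality is proved there is nothing left to verify), but this is presentation, not a gap.
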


\if\longversion1
Let $\Hm \in \Fqm^{k \times n}$ denote a
parity--check matrix of $\CC$.  Complete the basis
$\BS=(b_0, \ldots, b_{\lambda-1})$ with $m - \lambda$ additional
elements $(b_\lambda, \ldots, b_{m-1}) \in \Fqm^{m-\lambda}$ such that
$\B = (b_0, \ldots, b_{m-1})$ forms an $\Fq$--basis of $\Fqm$.
According to Corollary~\ref{cor:HmGmExp}, the matrix
$\ExpMat_{\B^*}(\Hm)$ is a parity--check matrix of $\ExpCode_\B(\CC)$
and, from Proposition~\ref{prop:dual_pu_sh}, removing (puncturing) the
last $m-\lambda$ columns of each block of this matrix provides a
parity--check matrix of $\ExpCode_{\BS}(\CCSS)$.
\fi

\subsection{Puncturing and shortening}

\begin{lem}
  \label{lem:pu-sh-commute}
  Let $\CC$ be an $[n,k]$ code over $\Fqm$. Let $\Lcal$ denote a
  subset of $\IInt{0}{n-1}$. Then the following equalities hold.
  \[ \ExpCode_{\B}(\pu{\CC}{\Lcal}) = \pu{\ExpCode_{\B}(\CC)}{\Lexp}, \]
  \[ \ExpCode_{\B}(\sh{\CC}{\Lcal}) = \sh{\ExpCode_{\B}(\CC)}{\Lexp}, \]
  where $\Lexp$ denotes the set of all columns generated from
  expanding columns in $\Lcal$, that is
\[\Lexp \eqdef \bigcup_{i \in \Lcal} \{i+j, 0 \leq j < m \}.\]
\end{lem}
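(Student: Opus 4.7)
The plan is to dispatch the puncturing statement directly from the definition of the expansion, and then obtain the shortening statement by combining puncturing with the fact that zeroing out a coordinate over $\Fqm$ is equivalent to zeroing out the whole corresponding block of $m$ coordinates over $\Fq$.

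First, I would unwind the definitions at the vector level. For $\cv = (c_0, \dots, c_{n-1}) \in \Fqm^n$, the vector $\ExpVect_\B(\cv)$ is the concatenation of the $n$ blocks $(c_{i,0}, \dots, c_{i,m-1})$, $i \in \IInt{0}{n-1}$, where $c_i = \sum_j c_{i,j} b_j$. Thus removing the coordinate $c_i$ from $\cv$ and then expanding gives exactly the same vector as expanding $\cv$ first and then removing the block of indices $\{im, im+1, \dots, im+m-1\}$. In other words, at the vector level
\[
  \ExpVect_\B(\pu{\cv}{\Lcal}) = \pu{\ExpVect_\B(\cv)}{\Lexp},
\]
where $\Lexp = \bigcup_{i \in \Lcal}\{im, im+1, \dots, im+m-1\}$. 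Passing from vectors to codes is then immediate since both $\ExpCode_\B(\cdot)$ and $\pu{\cdot}{(\cdot)}$ are defined setwise, which proves the first equality.

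For the shortening, I would start from the definition $\sh{\CC}{\Lcal} = \pu{\CC_{\Lcal=0}}{\Lcal}$, where $\CC_{\Lcal=0} \eqdef \{\cv \in \CC ~|~ \forall i \in \Lcal,\ c_i = 0\}$. The key observation is that, because $\B$ is an $\Fq$--basis of $\Fqm$, for any $\cv \in \CC$ one has $c_i = 0$ if and only if the $i$-th block of $\ExpVect_\B(\cv)$ is entirely zero, i.e.\ if and only if all coordinates of $\ExpVect_\B(\cv)$ indexed by $\{im, \dots, im+m-1\}$ vanish. Consequently,
\[
  \ExpCode_\B(\CC_{\Lcal=0}) = \bigl(\ExpCode_\B(\CC)\bigr)_{\Lexp = 0}.
\]
Applying the puncturing result proved above to this identity yields
\[
  \ExpCode_\B(\sh{\CC}{\Lcal}) = \pu{\ExpCode_\B(\CC_{\Lcal=0})}{\Lexp}
  = \pu{\bigl(\ExpCode_\B(\CC)\bigr)_{\Lexp=0}}{\Lexp} = \sh{\ExpCode_\B(\CC)}{\Lexp},
\]
which is the desired second equality.

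There is no real obstacle here: everything reduces to the fact that $\ExpVect_\B$ acts independently block by block, so both puncturing and the ``zero-component'' condition translate transparently from the indexing set $\Lcal$ to its blockwise expansion $\Lexp$. The only care needed is to be explicit about the block-structure of $\Lexp$ (in particular to correct the missing factor $m$ in the displayed definition, reading $\Lexp = \bigcup_{i\in\Lcal}\{im+j ~|~ 0 \le j < m\}$), which is already implicit in Definition~\ref{def:block} and in the description of $\JLM$ in Lemma~\ref{lem:JLM}.
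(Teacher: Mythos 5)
Your proof is correct, and the puncturing half coincides with the paper's argument (expansion acts block by block, so puncturing a column before expanding equals puncturing the corresponding block of $m$ columns after expanding). For the shortening half, however, you take a genuinely different route: you argue directly that, $\B$ being an $\Fq$--basis, the condition $c_i = 0$ is equivalent to the vanishing of the whole $i$-th block of $\ExpVect_\B(\cv)$, so the ``zero-coordinate'' subcode commutes with expansion, and you then invoke the puncturing identity you just proved. The paper instead deduces shortening from puncturing by duality: it combines Proposition~\ref{prop:dual_pu_sh} (shortening is the dual of puncturing) with the dual-basis expansion lemma (the statement following this one, yielding Corollary~\ref{cor:dual}, namely $\Dual(\ExpCode_\B(\CC)) = \ExpCode_{\Bdual}(\Dual(\CC))$). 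Your version is more elementary and self-contained, since it needs no trace/dual-basis machinery and no forward reference to the dual lemma; the paper's version is shorter on the page and reuses results it needs anyway, at the cost of routing the argument through $\Bdual$ and two applications of Corollary~\ref{cor:dual}. Your remark that the displayed definition of $\Lexp$ should read $\bigcup_{i\in\Lcal}\{mi+j ~|~ 0 \leq j < m\}$ (a factor $m$ is missing in the statement as printed) is also well taken and consistent with Definition~\ref{def:block} and with $\JLM$ in Lemma~\ref{lem:JLM}.
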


\begin{proof}
  The result is straightforward for puncturing. The expansion
  operation is independent for each column, hence puncturing a column
  before expanding is equivalent to puncturing the corresponding block
  of $m$ columns. As for shortening, the shortening operation is the
  dual of puncturing operation, hence the result is a consequence of
  the next lemma.
\end{proof}

\subsection{Dual code}

\begin{lem}[\cite{W11}, Lemma~1]
  Let $\B$ be a basis and $\Bdual$ denote the dual basis. For all
  $\av, \bv \in \Fqm^n$, if $\av$ and $\bv$ are orthogonal, i.e.
  $\av \cdot \transpose{\bv} = 0,$ then $\ExpVect_\B(\av)$ and
  $\ExpVect_{\Bdual}(\bv)$ are orthogonal
\[ \ExpVect_{\B}(\av) \cdot \transposeb{\ExpVect_{\Bdual}(\bv)} = 0.\]
\end{lem}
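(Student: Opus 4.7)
The plan is to reduce the claim to the defining property of the dual basis, namely $\Tr(b_j b^*_k) = \delta_{j,k}$, after expressing the scalar product $\ExpVect_\B(\av) \cdot \transposeb{\ExpVect_{\Bdual}(\bv)}$ as a sum of traces. Because the expansion is computed independently on each coordinate, everything will reduce to a per-coordinate identity, which is the main computational step.

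Concretely, I would first fix notation: write $a_i = \sum_{j=0}^{m-1} a_{i,j}\, b_j$ and $b_i = \sum_{k=0}^{m-1} b_{i,k}\, b^*_k$ with $a_{i,j}, b_{i,k} \in \Fq$, so that
\[
\ExpVect_{\B}(\av) \cdot \transposeb{\ExpVect_{\Bdual}(\bv)} = \sum_{i=0}^{n-1} \sum_{j=0}^{m-1} a_{i,j}\, b_{i,j}.
\]
Next, I compute $a_i b_i$ in $\Fqm$ by expanding both factors:
\[
a_i b_i = \sum_{j,k} a_{i,j}\, b_{i,k}\, b_j\, b^*_k,
\]
and apply $\Tr$. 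Since $\Tr$ is $\Fq$--linear and $\Tr(b_j b^*_k) = \delta_{j,k}$ by the very definition of the dual basis (Definition~\ref{def:dual_basis}), only the diagonal terms $j=k$ survive, yielding the per-coordinate identity
\[
\Tr(a_i b_i) \;=\; \sum_{j=0}^{m-1} a_{i,j}\, b_{i,j}.
\]

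Summing this over $i$ and using the $\Fq$--linearity of $\Tr$ once more gives
\[
\ExpVect_{\B}(\av) \cdot \transposeb{\ExpVect_{\Bdual}(\bv)} = \sum_{i=0}^{n-1} \Tr(a_i b_i) = \Tr\!\Bigl(\sum_{i=0}^{n-1} a_i b_i\Bigr) = \Tr\bigl(\av \cdot \transpose{\bv}\bigr),
\]
which vanishes by the orthogonality hypothesis $\av \cdot \transpose{\bv}=0$. There is no real obstacle here; the only subtlety is keeping the roles of $\B$ and $\Bdual$ straight so that the cross-terms $b_j b_k^*$ (and not, say, $b_j b_k$) appear under the trace, since it is precisely these cross-terms for which the Kronecker collapse occurs.
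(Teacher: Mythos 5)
Your proof is correct and is essentially the standard argument that the paper delegates to the cited reference (Wu, Lemma~1): expressing the bilinear form of the expansions as $\sum_i \Tr(a_i b_i) = \Tr\bigl(\av \cdot \transpose{\bv}\bigr)$ via the dual-basis relation $\Tr(b_j b^*_k) = \delta_{j,k}$. No gaps; the per-coordinate trace identity is exactly the right reduction, and it even gives the slightly stronger fact that the two expanded vectors' inner product always equals $\Tr\bigl(\av \cdot \transpose{\bv}\bigr)$.
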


\begin{cor}
  \label{cor:dual}
  Let $\CC$ be an $[n,k]$ code over $\Fqm$. Let $\Ball$ be a basis of $\Fqm$. Then the following equality holds.
  \[ \Dual(\ExpCode_\B(\CC)) = \ExpCode_{\Bdual}(\Dual(\CC)), \]
where $\Bdual$ denotes the dual basis of $\B$.
\end{cor}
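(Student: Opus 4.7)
The plan is to establish the equality by proving one inclusion and then matching dimensions.

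First, I would prove the inclusion $\ExpCode_{\Bdual}(\Dual(\CC)) \subseteq \Dual(\ExpCode_\B(\CC))$. Let $\bv \in \Dual(\CC)$ and $\av \in \CC$ be arbitrary. By definition of the dual, $\av \cdot \transpose{\bv} = 0$, so the lemma of Wu recalled just above gives $\ExpVect_\B(\av) \cdot \transposeb{\ExpVect_{\Bdual}(\bv)} = 0$. Since $\ExpCode_\B(\CC)$ is spanned (as an $\Fq$-vector space) by $\{\ExpVect_\B(\av) \mid \av \in \CC\}$, this shows that $\ExpVect_{\Bdual}(\bv)$ is orthogonal to every element of $\ExpCode_\B(\CC)$. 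Hence $\ExpVect_{\Bdual}(\bv) \in \Dual(\ExpCode_\B(\CC))$, and the claimed inclusion follows.

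Next, I would check that the two codes have the same dimension over $\Fq$. The expansion operator, applied to an $[n,k]$ code over $\Fqm$, yields an $\Fq$-linear code of length $mn$ and $\Fq$-dimension $mk$ (because $\ExpVect_\B$ is an $\Fq$-linear bijection from $\Fqm^n$ onto $\Fq^{mn}$, so it preserves $\Fq$-dimension). Therefore
\[
  \dim_{\Fq} \Dual(\ExpCode_\B(\CC)) = mn - mk = m(n-k),
\]
while applying the same observation to $\Dual(\CC)$, which is $\Fqm$-linear of dimension $n-k$, gives
\[
  \dim_{\Fq} \ExpCode_{\Bdual}(\Dual(\CC)) = m(n-k).
\]

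Combining the inclusion with the equality of dimensions yields the claimed identity. I do not expect any serious obstacle: the only subtle point is to correctly invoke the previous lemma with $\B$ on one side and $\Bdual$ on the other, and to remember that while $\CC$ is $\Fqm$-linear, dimensions here must be counted over $\Fq$ since $\ExpCode_\B(\CC)$ is only $\Fq$-linear in general.
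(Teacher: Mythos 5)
Your proof is correct and takes essentially the same route the paper intends: the corollary is deduced directly from Wu's lemma, which gives the inclusion $\ExpCode_{\Bdual}(\Dual(\CC)) \subseteq \Dual(\ExpCode_\B(\CC))$, and equality follows since both sides have $\Fq$-dimension $m(n-k)$ (the expansion map being an $\Fq$-linear bijection of $\Fqm^n$ onto $\Fq^{mn}$). The dimension count and the invocation of the lemma with $\B$ on one side and $\Bdual$ on the other are both handled correctly.
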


\subsection{Changing the expansion basis}

\begin{lem}
  \label{lem:basis-change}
  Let $\CC$ be an $[n,k]$ code over $\Fqm$. Let $(\B_i)_i$ be $n$
  bases of $\Fqm$. Let $(\Qm_i) \in (\Fq^{m \times m})^n$
  denote $n$ invertible $m \times m$ matrices. The following equality
  holds.
  \[ \ExpCode_{(\B_i \cdot \transpose{(\Qm_i^{-1})})_i}(\CC) =
    \ExpCode_{\B_i}(\CC) \cdot 
    \begin{pmatrix}
      \Qm_0 & & \\
      & \ddots & \\
      & & \Qm_n \\
    \end{pmatrix}
    .\]
\end{lem}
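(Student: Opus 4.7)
The plan is to reduce the statement to a block-by-block computation, using the fact that the expansion operator acts independently on each coordinate of a vector in $\Fqm^n$.

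First, I would verify the behaviour at a single position. Fix $i$, set $\Qm \eqdef \Qm_i$ and write $\B \eqdef \B_i = (b_0, \dots, b_{m-1})$, viewed as a row vector in $\Fqm^m$. Since $\Qm$ is invertible over $\Fq$, the $m$-tuple $\B' \eqdef \B \cdot \transpose{(\Qm^{-1})}$ is again an $\Fq$--basis of $\Fqm$. For any $v \in \Fqm$, let $c, c' \in \Fq^m$ be the row vectors of coordinates of $v$ in $\B$ and in $\B'$ respectively. By definition,
\[
  v = \B \cdot \transpose{c} = \B' \cdot \transpose{c'} = \B \cdot \transpose{(\Qm^{-1})} \cdot \transpose{c'}.
\]
Since $\B$ is a basis, identification of the coordinates on $\B$ yields $\transpose{c} = \transpose{(\Qm^{-1})} \cdot \transpose{c'}$, that is, $c' = c \cdot \Qm$. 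In words, the coordinate block in the new basis is obtained from the coordinate block in the old basis by right-multiplication by $\Qm$.

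Second, I would assemble this blockwise. Given $\vv = (v_0, \dots, v_{n-1}) \in \Fqm^n$, by definition $\ExpVect_{(\B_i)_i}(\vv)$ is the concatenation of the coordinate blocks of each $v_i$ in $\B_i$, and similarly for the new bases. Applying the single-position identity independently in each block gives
\[
  \ExpVect_{(\B_i \cdot \transpose{(\Qm_i^{-1})})_i}(\vv) = \ExpVect_{(\B_i)_i}(\vv) \cdot \Diag(\Qm_0, \dots, \Qm_{n-1}).
\]

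Finally, letting $\vv$ range over $\CC$ gives the desired code identity: right-multiplication by the (invertible) block-diagonal matrix $\Diag(\Qm_0, \dots, \Qm_{n-1})$ is a bijection between the two expanded codes. The only delicate point is keeping the row/column conventions and the transposes straight when identifying coordinates in the two bases; beyond that, the argument is elementary linear algebra carried out independently in each of the $n$ blocks.
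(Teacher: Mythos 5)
Your proof is correct and follows essentially the same route as the paper's: reduce to a single block, show that the coordinate vector in the basis $\B_i \cdot \transposeb{\Qm_i^{-1}}$ equals the coordinate vector in $\B_i$ right-multiplied by $\Qm_i$, then extend entry-wise to all codewords. The only cosmetic difference is that you phrase the block computation via matrix identities and uniqueness of coordinates, whereas the paper substitutes $b_i = \sum_j d_j q_{i,j}$ and rearranges the sums explicitly.
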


\begin{proof}
  Let $\cv$ be a codeword of $\CC$. We only focus on the first
  entry of $\cv$ which we denote by $c$.
  We also denote the elements of $\B_0$ by $(b_0, \dots, b_{m-1})$
  and the entries of $\Qm_0$ by ${(q_{i,j})}_{i,j}$.
  The decomposition of $c$ in
  $\B_0$ is $c = c_{0}b_{0} + \cdots + c_{m}b_{m}$.
  Let $\mathcal{D} = (d_0, \ldots, d_{m-1})$ be the basis 
  $\B_0 \cdot \transposeb{\Qm_0^{-1}}$. For any $i \in \IInt{0}{m-1}$,
  we have $b_{i} = \sum_{j=0}^{m-1} d_j q_{i,j}$.
  Expressing $c$ in this new basis gives
  \[
    c = \sum_{i=0}^{m-1} c_{i} b_{i} =
    \sum_{i}\bigg(\sum_{j}d_j q_{i,j} \bigg) = \sum_{j}
    \bigg(\sum_{i} x_i q_{i,j}\bigg) d_j.
  \]
  Therefore, 
  \(\ExpVect_\B(c) \cdot \Qm_0 = \ExpVect_{\mathcal{D}}(c).\)
  This holds for any entry of any codeword $\cv \in \CC$.
\end{proof}

\subsection{Scalar multiplication in \Fqm}

\begin{lem}
  \label{lem:Fqm-scalar-multiplication}
  Let $\CC$ be an $[n,k]$ code over $\Fqm$. Let $(\B_i)_i$ be $n$
  basis of $\Fqm$. Let $\av = (a_0, \ldots, a_{n-1}) \in \Fqm^{n}$
  denote a vector of length $n$ over $\Fqm$. The following equality
  holds.
\begin{align*}
  \ExpCode_{(\B_i)_i}(\CC) & = \ExpCode_{(a_i \B_i)_i}( \{ (\cv \star \av) \,|\, \cv \in \CC \} ) \\
                           & = \ExpCode_{(a_i \B_i)_i}(\CC \star \av).
\end{align*}
\end{lem}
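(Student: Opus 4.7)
The plan is to reduce the equality of codes to a blockwise equality between expansion vectors, and then conclude by taking the span over all $\cv \in \CC$. First I note that for $a_i \B_i$ to form an $\Fq$--basis of $\Fqm$ we need $a_i \neq 0$; so the statement implicitly assumes $\av \in (\Fqm^\times)^n$, which matches the intended use (compare with Proposition~\ref{prop:scalar}). The two right-hand sides in the display are just two notations for the same code, so there is really only one equality to prove.

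Next, I would fix a codeword $\cv = (c_0, \dots, c_{n-1}) \in \CC$ and argue entry by entry. Let $\B_i = (b_{i,0}, \dots, b_{i,m-1})$ and write the decomposition $c_i = \sum_{j=0}^{m-1} c_{i,j}\, b_{i,j}$ with $c_{i,j} \in \Fq$. Multiplying by $a_i \in \Fqm^\times$ gives
\[
  a_i c_i = \sum_{j=0}^{m-1} c_{i,j}\, (a_i b_{i,j}),
\]
which is precisely the decomposition of $a_i c_i$ in the basis $a_i \B_i = (a_i b_{i,0}, \dots, a_i b_{i,m-1})$. Hence the $i$-th block of $\ExpVect_{(a_j \B_j)_j}(\cv \star \av)$ coincides with the $i$-th block of $\ExpVect_{(\B_j)_j}(\cv)$, namely $(c_{i,0}, \dots, c_{i,m-1})$.

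Concatenating these blocks for $i = 0, \dots, n-1$, I obtain the pointwise identity
\[
  \ExpVect_{(a_i \B_i)_i}(\cv \star \av) = \ExpVect_{(\B_i)_i}(\cv),
\]
valid for every $\cv \in \CC$. Taking the $\Fq$--linear span of both sides as $\cv$ ranges over $\CC$ gives
\[
  \ExpCode_{(a_i \B_i)_i}(\CC \star \av) = \ExpCode_{(\B_i)_i}(\CC),
\]
which is the desired equality. There is no real obstacle: the argument is a direct verification that rescaling both the vector and the basis by the same factor leaves the coordinate expansion invariant.
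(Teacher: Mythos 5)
Your proof is correct: the paper states this lemma without proof (treating it as immediate), and your blockwise verification that $a_i c_i = \sum_j c_{i,j}(a_i b_{i,j})$, so that $\ExpVect_{(a_i\B_i)_i}(\cv \star \av) = \ExpVect_{(\B_i)_i}(\cv)$ for every $\cv \in \CC$, is exactly the natural argument, in the same entry-by-entry style as the paper's proof of Lemma~\ref{lem:basis-change}. Your observation that one must implicitly have $\av \in (\Fqm^\times)^n$ for $a_i\B_i$ to be a basis is a legitimate and worthwhile precision, consistent with how the lemma is applied (the multiplier $\yv$ of a GRS code has nonzero entries).
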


%%% Local Variables:
%%% mode: latex
%%% TeX-master: "../xgrs"
%%% End:

\section{The XGRS cryptosystem}\label{sec:SSRS_cryptosystem}

\subsection{The cryptosystem}

We describe here the cryptosystem which we call {\em XGRS}, presented
in \cite{KRW21} by Khathuria, Rosenthal and Weger. Next, we explain
why XGRS is a sub-instance of the SSRS scheme.

\subsubsection{Parameters}

The cryptosystem is publicly parametrised by:
\begin{itemize}
\item $q$ a prime power;
\item $m$ an integer; 
\item $\lambda$ such that $2 \leq \lambda < m$; 
\item $n,k$ such that $0 \leq k < n \leq q^m$ and $km > (m - \lambda)n$.
\end{itemize}

\begin{table}[h]
  \centering
  \begin{tabular}{|c|c|c|c|c|c|} \hline
    $q$ & $m$ & $\lambda$ & $n$ & $k$ & Public Key Size (kB) \\ \hline 
    13 & 3 & 2 & 1258 & 1031 & 579 \\ 
    7 & 4 & 2 & 1872 & 1666 & 844 \\ \hline
  \end{tabular}
  \medskip
  \caption{Parameters proposed for the XGRS scheme \cite{KRW21}}
  \label{tab:parameters}
\end{table}

\begin{remark}
  As suggested by the parameters in Table~\ref{tab:parameters}, $m$ is
  a small integer. The preprint version of the paper \cite{KRW19}
  proposed to use $m=2$ with a slightly modified key generation. The
  proposed parameters are now $m=3$ and $m=4$.
\end{remark}

\subsubsection{Key Generation}

\begin{itemize}
\item Generate uniformly random vectors
  $(\xv, \yv) \in \Fqm^n \times (\Fqm^\times)^n$ such that $\xv$ has
  distinct entries. Denote $\CC = \GRS{k}{\xv}{\yv}$ and let $\Hsec$
  be a parity-check matrix of $\CC$.
\item Choose $\gamma$, a primitive element of $\Fqm/\Fq$, \ie a
  generator of the field extension.  We consider the basis $\Bgammaall$
  of $\Fqm$.  
\item Set
  $\Hm \eqdef \ExpMat_{\Bgamma^{\ast}}(\Hsec) \in \Fq^{m(n-k) \times mn}$
  a parity-check matrix of $\ExpCode_{\Bgamma}(\CC)$.
\item For any $i \in \IInt{0}{n-1}$, choose $\L_i$ a random subset of
  $\IInt{(i-1)m}{im-1}$ of size $|\L_i|=m-\lambda$. Set
  $\L = \cup_i \L_i$.
\item Set $\HmL \eqdef \pu{\Hm}{\L} \in \Fq^{m(n-k) \times \lambda n}$.
\item For any $i \in \IInt{0}{n-1}$, choose $\Qm_i$ a random
  $\lambda \times \lambda$ invertible matrix.
  Denote by $\Qm$ the block-diagonal matrix having
  $\Qm_0, \ldots, \Qm_{n-1}$ as diagonal blocks.
\item Denote by $\Sm$ the invertible matrix of $\Fq$ such that
  $\Sm \,\cdot\, \HmL \,\cdot\, \Qm$ is in systematic form.
\item Set $\Hpub \eqdef \Sm \,\cdot\, \HmL \,\cdot\, \Qm.$
\item The public key is $\Hpub$, the private key is
  $(\xv, \yv, \Qm, \L, \gamma)$.
\end{itemize}

\begin{remark}
  Compared to the cryptosystem presented in \cite{KRW21}, we omitted
  the block permutation. Indeed, applying a block permutation after
  expanding is equivalent to applying the permutation before the
  expansion and then expanding. As we start with a GRS code chosen
  uniformly at random, applying a permutation on the columns does not
  change the probability distribution of the public keys.
\end{remark}

\subsubsection{Encryption}

Recall that $t \eqdef \lfloor \frac{n-k}{2} \rfloor$ the
error--correcting capacity of a GRS code of length $n$ and dimension
$k$.  The message is encoded as a vector $\yv \in \Fq^{\lambda n}$
whose support is included in $t$ blocks of length $\lambda$, \ie there
exist positions $i_0, \ldots, i_{t-1} \in \IInt{0}{n-1}$, such that
\[\Sup(\yv) \subseteq \bigcup_{0 \leq \ell \leq t-1}
\IInt{\lambda(i_\ell-1)}{\lambda i_\ell - 1}.\]
The ciphertext is then defined as
$\transpose{\cv} = \Hpub \,\cdot\, \transpose{\yv}$.

\subsubsection{Decryption}

In order to decrypt the ciphertext, a user knowing the private key
should:
\begin{itemize}
\item generate $\Hsec$ from $\xv$ and $\yv$.
\item compute $\cv' = \cv \,\cdot\, \transpose{\Sm^{-1}}$;
\item compute $\cv'' = \SqueezeVect_{\B_\gamma}(\cv')$;
\item find $\yv'' \in \Fqm^n$ of weight $|\yv''| \leq t$ such that
  $\transpose{\cv''} = \Hsec \transpose{\yv''}$ (\ie decode in $\CC$);
\item compute $\yv' = \pu{\ExpVect_{\B_\gamma}(\yv'')}{\Lcal}$;
\item finally recover $\yv = \yv' \,\cdot\, \transposeb{\Qm^{-1}}$.
\end{itemize}

\subsection{Relation with the SSRS scheme}
\label{subsub:relation_with_gen_scheme}
\label{subsec:practical_construction}

To conclude this section, we show that the XGRS scheme is a
sub-instance of the SSRS scheme presented in
Section~\ref{sec:generic-scheme}.

\begin{proposition}
  The XGRS scheme with secret key $(\xv, \yv, \Qm, \L, \gamma)$ is
  equivalent to the SSRS scheme with secret key
  $(\xv, \SS_0, \dots, \SS_{n-1})$ where the subspaces $\SS_i$ are
  defined as follows.
  \begin{itemize}
   \item Let $\B_i^{(0)} \eqdef \pu{\Bgamma}{\L_i} \in \Fqm^\lambda$ where
    $\L_i \eqdef \{j-mi, \forall j \in \Lcal \cap \IInt{im}{(i+1)m-1}\}$.
  \item Set
    $\B_i^{(1)} \eqdef y_i^{-1} \B_i^{(0)} \cdot
    \transpose{(\Qm_i^{-1})}$.
  \item $\SS_i$ is the subspace of $\Fqm$ spanned by the elements of
    $\B_i^{(1)}$.
  \end{itemize}
\end{proposition}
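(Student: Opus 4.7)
The plan is to unfold the XGRS key generation step by step and translate each operation into SSRS form, using the toolbox of Section~\ref{sec:tools}. I begin by observing that since $\Sm$ is invertible on the left, it does not affect $\Cpub$; hence $\Cpub$ is the code with parity-check matrix $\HmL \cdot \Qm$. Rewriting the parity check equation $\HmL \Qm \transpose{v} = 0$ as $\HmL \transpose{u} = 0$ with $v = u \cdot \transposeb{\Qm^{-1}}$ shows that
\[
  \Cpub = \CC_0 \cdot \transposeb{\Qm^{-1}},
\]
where $\CC_0$ denotes the code with parity-check matrix $\HmL$ and where $\transposeb{\Qm^{-1}}$ is the block-diagonal matrix with blocks $\transposeb{\Qm_i^{-1}}$.

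Next I would describe $\CC_0$ as an expanded subspace subcode. Corollary~\ref{cor:HmGmExp} gives that $\Hm$ is a parity check of $\ExpCode_{\Bgamma}(\GRS{k}{\xv}{\yv})$, so by Proposition~\ref{prop:dual_pu_sh} (dual of puncturing equals shortening of dual) the code $\CC_0$, whose parity check is obtained by puncturing $\Hm$ at $\Lcal$, equals $\sh{\ExpCode_{\Bgamma}(\GRS{k}{\xv}{\yv})}{\Lcal}$. Applying Lemma~\ref{lem:JLM} block by block, and handling the fact that in XGRS the $\L_i$ may lie anywhere inside the $i$-th block via a within-block permutation of $\Bgamma$, yields
\[
  \CC_0 = \ExpCode_{(\B_i^{(0)})_i}\bigl(\SScode{\GRS{k}{\xv}{\yv}}{(\SS_0^{(0)},\dots,\SS_{n-1}^{(0)})}\bigr),
\]
where $\B_i^{(0)} = \pu{\Bgamma}{\L_i}$ and $\SS_i^{(0)} = \Span{\B_i^{(0)}}{\Fq}$.

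I would then push the multiplier $\yv$ into the subspaces and the bases. Proposition~\ref{prop:scalar} rewrites the subspace subcode as $\SScode{\RS{k}{\xv}}{(y_i^{-1}\SS_i^{(0)})_i} \star \yv$, and Lemma~\ref{lem:Fqm-scalar-multiplication} absorbs the $\star \yv$ into the expansion bases, giving
\[
  \CC_0 = \ExpCode_{(y_i^{-1}\B_i^{(0)})_i}\bigl(\SScode{\RS{k}{\xv}}{(y_i^{-1}\SS_i^{(0)})_i}\bigr).
\]
Finally, Lemma~\ref{lem:basis-change} shows that the right multiplication by $\transposeb{\Qm^{-1}}$ coming from the first step is again a change of expansion basis: the new basis $\B_i^{(1)}$ is obtained from $y_i^{-1}\B_i^{(0)}$ by an $\Fq$-linear transformation determined by $\Qm_i$. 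Since $\Qm_i$ is $\Fq$-invertible, $\B_i^{(1)}$ is still an $\Fq$-basis of the same subspace $\SS_i = y_i^{-1}\SS_i^{(0)}$, and one obtains
\[
  \Cpub = \ExpCode_{(\B_i^{(1)})_i}\bigl(\SScode{\RS{k}{\xv}}{(\SS_0,\dots,\SS_{n-1})}\bigr),
\]
which is exactly the SSRS public code with subspaces $(\SS_0,\dots,\SS_{n-1})$ and expansion bases $\B_{\SS_i} = \B_i^{(1)}$.

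The main obstacle is keeping conventions straight throughout. On the one hand, Lemma~\ref{lem:basis-change} must be applied in the direction corresponding to \emph{right} multiplication by the block-diagonal matrix with blocks $\transposeb{\Qm_i^{-1}}$, and one must be careful with transposes and inverses, bases being viewed as row vectors that get right-multiplied by matrices, in order to identify the explicit form of $\B_i^{(1)}$. On the other hand, Lemma~\ref{lem:JLM} is phrased for shortening at the last $m-\lambda$ positions of each block whereas the $\L_i$ in XGRS are arbitrary within their block; this is handled by a within-block reordering of $\Bgamma$ and deserves a sentence in the write-up. Apart from these bookkeeping points, no step is deep.
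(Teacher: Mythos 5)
Your proof is correct and follows the same chain of reductions as the paper's: discard $\Sm$, pull the block-diagonal $\Qm$ out as a right multiplication by $\transposeb{\Qm^{-1}}$, identify $\Dualb{\Fqspan{\HmL}}$ as a shortening of $\ExpCode_{\Bgamma}(\GRS{k}{\xv}{\yv})$ via Proposition~\ref{prop:dual_pu_sh}, Corollary~\ref{cor:HmGmExp} and Corollary~\ref{cor:dual}, and then absorb $\yv$ and the $\Qm_i$ into the subspaces and expansion bases via Proposition~\ref{prop:scalar}, Lemma~\ref{lem:Fqm-scalar-multiplication} and Lemma~\ref{lem:basis-change}. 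The only real divergence is the order of operations: you shorten first and then perform a $\lambda\times\lambda$ basis change on the expansion of the subspace subcode, whereas the paper stays at the level of the full $m$-fold expansion, enlarging each block $\transposeb{\Qm_i^{-1}}$ to an $m\times m$ matrix acting as the identity on the removed positions and composing with a within-block permutation (its matrices $\Qm^{(2)}$ and $\Qm^{(3)}$), so that Lemma~\ref{lem:JLM} and Lemma~\ref{lem:basis-change} apply literally, and only then shortens at $\JLM$. Your route is slightly leaner but relies on two extensions of the stated lemmas that should be made explicit in the write-up: a version of Lemma~\ref{lem:basis-change} for $\lambda$-element bases of the subspaces with $\lambda\times\lambda$ invertible matrices (the proof carries over verbatim, since it is local to one entry), and the generalisation of Lemma~\ref{lem:JLM} to removed positions sitting anywhere inside each block, which you already flag. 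Finally, if you track the transposes carefully, right multiplication by the block $\transposeb{\Qm_i^{-1}}$ turns the basis $y_i^{-1}\B_i^{(0)}$ into $y_i^{-1}\B_i^{(0)}\cdot\Qm_i$ rather than literally into $y_i^{-1}\B_i^{(0)}\cdot\transposeb{\Qm_i^{-1}}$; since both are $\Fq$-bases of the same subspace $\SS_i$, this is immaterial for the proposition (only the span matters for the equivalence), but you should say so explicitly instead of leaving ``a transformation determined by $\Qm_i$'' unresolved.
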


\begin{proof}
  Let $\Cpub$ denote the public code of an instance of the XGRS scheme
  with private key $(\xv, \yv, \Qm, \L, \gamma)$, \ie $\Cpub$ is the
  code over $\Fq$ that admits the public key $\Hpub$ as parity-check
  matrix. We have
  \begin{align*}
    \Cpub & =  \Dualb{\Fqspan{\Hpub}} \\
          & = \Dualb{\Fqspan{  \HmL \cdot \Qm}}.
  \end{align*}
  
  Let us define $\Qm^{(1)} \eqdef \transposeb{\Qm^{-1}}$. This is
  still a block-diagonal matrix composed of $n$ blocks of size
  $\lambda \times \lambda$. We can rewrite this
  \[\Cpub =  \Dualb{\Fqspan{ \HmL}} \cdot \Qm^{(1)} .\]
  We can replace $\HmL$ by its definition:
  $\pu{\ExpMat_{\Bgamma^{\ast}}(\Hsec)}{\L}$. Next, we can swap the
  $\Dual$ and $\mathbf{Punct}$ operators according to
  Proposition~\ref{prop:dual_pu_sh}:
  \[
    \Cpub = \sh{\Dual
      \left(\ExpCode_{\Bgamma^{\ast}}\left(\Fqspan{\Hsec}\right)\right)}{\L}
    \cdot \Qm^{(1)}.
  \]
  We can then swap the $\Dual$ and $\ExpCode$ operators according to
  Corollary~\ref{cor:dual}.
  \[
    \Cpub =
    \sh{\ExpCode_{\Bgamma}\left(\Dualb{\Fqspan{\Hsec}}\right)}{\L}
    \cdot \Qm^{(1)} .
  \]

  Let $\Gsec$ be a generator matrix of the secret code
  $\Dualb{\Fqspan{\Hsec}}$, \ie a generator matrix of the code
  $\GRS{k}{\xv}{\yv}$. We have
  \[
    \Cpub = \sh{\ExpCode_{\Bgamma}\left(\Fqspan{ \Gsec}\right)}{\L}
    \cdot \Qm^{(1)}.
  \]

  Let us denote $\Qm^{(2)}$ the block-diagonal matrix obtained by
  replacing each $\lambda \times \lambda$ matrix of $\Qm^{(1)}$ by the
  $m \times m$ matrix obtained by inserting ``an identity row/column''
  at the positions corresponding to $\Lcal$. For instance, if
  $m = 3, \lambda = 2$ and the first element of $\L$ equals $1$, which
  means that the column $1$ is shortened, we add a column and a row in
  the middle of $\Qm^{(1)}_0$, \ie
  \[ 
    \text{if } 
    \qquad
    \Qm^{(1)}_0 = \left( 
      \begin{array}{cc}
        q_{00} & q_{01} \\
        q_{10} & q_{11}
      \end{array}
    \right) 
    , 
    \qquad
    \text{then }
    \qquad
    \Qm^{(2)}_0 = \left( 
      \begin{array}{ccc}
        q_{00} & 0 & q_{01} \\
        0 & 1 & 0 \\
        q_{10} & 0 & q_{11}
      \end{array}
    \right) 
    .
  \]
  Hence, we can write
  \[
    \Cpub = \sh{\Fqspan{ \ExpMat_{\Bgamma}(\Gsec) \cdot
        \Qm^{(2)}}}{\L}.
  \]
  We define $\Qm^{(3)}_i$ as the matrix obtained from $\Qm^{(2)}_i$ by
  permuting the columns so that the inserted columns are the
  $m-\lambda$ rightmost ones.  For instance in the previous example,
  we would have
  \[
    \Qm^{(3)}_0 = \left( 
      \begin{array}{ccc}
        q_{00} & q_{01} & 0  \\
        0 & 0 &  1\\
        q_{10} & q_{11} & 0 
      \end{array}
    \right) 
    .
  \]
  Therefore, $\Qm^{(3)} = \Qm^{(2)}\Pm$ where $\Pm$ is a
  block--diagonal matrix whose diagonal blocks are $m \times m$
  permutations matrices.
  Then, we replace $\Lcal$ by the set
  $\JLM = \{mi+j ~|~ 0 \leq i < n,\ \lambda \leq j < m\}$.
  Hence, we get
  \[ \Cpub = \sh{\Fqspan{ \ExpMat_{\Bgamma}(\Gsec) \cdot
        \Qm^{(3)}}}{\JLM}.
  \]
  We can apply the basis change explained in
  Lemma~\ref{lem:basis-change}.
  \[
    \Cpub = \sh{\Fqspan{ \ExpMat_{{(\B'_i)}_i}(\Gsec)}}{\JLM},
  \]
  where $\B'_i \eqdef \Bgamma \cdot \transposeb{(\Qm^{(3)}_i)^{-1}}$
  for all $i \in \IInt{0}{n-1}$.
  Finally, we apply Proposition~\ref{prop:scalar} and
  Lemma~\ref{lem:Fqm-scalar-multiplication} to replace the code
  $\GRS{k}{\xv}{\yv}$ by $\RS{k}{\xv}$. Hence, 
  \[
    \Cpub = \sh{\Fqspan{ \ExpMat_{{(\B_i)}_i}(\Gsec')}}{\JLM},
  \]
  where $\Gsec'$ is a generator matrix of $\RS{k}{\xv}$ and
  $\B_i \eqdef y_i^{-1} \B'_i$ for all $i \in \IInt{0}{n-1}$.  In
  other words, $\Cpub = \RS{k}{\xv}_{|(\SS_0, \ldots, \SS_{n-1})}$,
  where $\SS_i$ is the subspace spanned by the $\lambda$ first elements
  of $\B_i$. This is indeed an instance of the SSRS cryptosystem.
\end{proof}

%%% Local Variables:
%%% mode: latex
%%% TeX-master: "../xgrs"
%%% End:

\section{Twisted-square code and distinguisher}
\label{sec:distinguisher}

In this section, we first recall how the star product of codes may
distinguish some codes such as Reed--Solomon codes from random
ones. It turns out that the raw use of the star product is not
sufficient to distinguish SSRS codes from random codes. Therefore, in
the second part of this section, we introduce a new notion called {\em
  twisted star product}.  We first explore the case $m=3$, $\lambda=2$
to give some insight on the interest of defining this twisted star
product of two subspace subcodes. Then, we define this notion for
other parameters. Note that a similar notion appears in
Randriambololona's work on the existence of asymptotically good binary
codes with asymptotically good squares \cite{R13b}. In the third part
of the section, we focus on the dimension of the twisted square codes
and how it can be used as a distinguisher.

We refer to Figure~\ref{fig:recap} for a diagram summarising the
relations between the various statements to follow.

\subsection{Square code distinguisher}
The notion of star product has been recalled in
Section~\ref{subsec:starprod}. We recall the following result on the
generic behaviour of random codes with respect to this operation.

\begin{thm}({\cite[Theorem 2.3]{CCMZ15}}, informal)\label{thm:CCMZ}
For a linear code $\code R$ chosen at random over $\Fq$ of dimension
$k$ and length $n$, the dimension of $\code R^{\star 2}$ is typically
$\min (n, {k+1 \choose 2})$.
\end{thm}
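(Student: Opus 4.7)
The plan is to establish the two halves of the bound $\min(n, \binom{k+1}{2})$ separately. The upper bound is immediate and deterministic: fixing any $\Fq$-basis $\bv_0, \dots, \bv_{k-1}$ of $\RC$, the code $\sq{\RC}$ is spanned over $\Fq$ by the $\binom{k+1}{2}$ symmetric products $\bv_i \star \bv_j$ for $0 \le i \le j \le k-1$, and at the same time it is contained in $\Fq^n$. Hence $\dim \sq{\RC} \le \min(n, \binom{k+1}{2})$, with no randomness used. Equivalently, $\sq{\RC}$ is the image of the linear map $\mathrm{Sym}^2(\Fq^k) \to \Fq^n$ sending $\mat{e}_i \odot \mat{e}_j$ to $\bv_i \star \bv_j$, which bounds its dimension by that of its domain.

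For the reverse inequality, I would argue by genericity. Consider a $k \times n$ matrix $\Gm$ with algebraically independent indeterminate entries $X_{i,\ell}$, write $\mat{g}_i$ for its $i$-th row, and form the auxiliary matrix $\mat{M}(\Gm)$ of size $\binom{k+1}{2} \times n$ whose rows are the products $\mat{g}_i \star \mat{g}_j$ for $i \le j$. Setting $d \eqdef \min(n, \binom{k+1}{2})$, the determinant of any fixed $d \times d$ submatrix of $\mat{M}(\Gm)$ is a polynomial of total degree $2d$ in the $X_{i,\ell}$. As soon as at least one such minor is a \emph{nonzero} polynomial, the Schwartz--Zippel lemma guarantees that a uniform specialisation of the entries $X_{i,\ell}$ in $\Fq$ produces a rank-$d$ submatrix of $\mat{M}$ except on a fraction at most $2d/q$ of the choices, which yields the generic equality $\dim \sq{\RC} = d$.

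The main step is therefore the non-vanishing of the minor polynomial, i.e.\ exhibiting a single matrix $\Gm$ (possibly over an algebraic closure of $\Fq$) for which $\mat{M}(\Gm)$ has rank $d$. This is where the combinatorial work lies and it splits naturally into the two regimes: when $d = \binom{k+1}{2}$ one chooses generic enough rows so that all $\binom{k+1}{2}$ pairwise products are $\Fq$-independent; when $d = n$ one just needs generic columns so that the image already fills the whole ambient space. In either case the witness is produced by a direct construction, which is slightly tedious but presents no real conceptual difficulty. The quantitative form of the statement, including the precise dependence of the failure probability on $q$, $k$ and $n$ (and the handling of very small $q$, where Schwartz--Zippel becomes vacuous and must be replaced by a more careful counting argument), is worked out in full in \cite[Theorem~2.3]{CCMZ15}, to which I would ultimately refer.
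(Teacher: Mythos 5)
A preliminary remark: the paper contains no proof of this statement at all --- Theorem~\ref{thm:CCMZ} is an informal quotation of an external result, \cite[Theorem~2.3]{CCMZ15}, and it is used in the sequel only as a heuristic benchmark (its twisted analogue, Conjecture~\ref{conj:twisted_sq}, is not even proved but conjectured and tested experimentally). So there is no ``paper's own proof'' to compare against; what can be assessed is whether your sketch would actually establish the statement.

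Your upper bound is correct and indeed deterministic: $\sq{\RC}$ is spanned by the ${k+1 \choose 2}$ symmetric products of basis vectors, hence $\dim \sq{\RC} \leq \min\left(n, {k+1 \choose 2}\right)$. The genericity half is also sound as far as it goes: a witness matrix does exist over the algebraic closure (for $d = {k+1 \choose 2} \leq n$ one needs $n$ points of $\overline{\F}_q^{\,k}$ not all lying on a common quadric, and for $d = n \leq {k+1 \choose 2}$ one needs $n$ points imposing independent conditions on quadrics; both hold for points in general position), so some $d \times d$ minor of your matrix $\mat{M}(\Gm)$ is a nonzero polynomial of degree $2d$, and Schwartz--Zippel gives failure probability at most $2d/q$. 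The genuine gap is that this bound is vacuous precisely in the regime in which the theorem is invoked in this paper: $q$ is a small fixed constant ($q = 7$ or $13$ in Table~\ref{tab:parameters}) while $d = \Theta(n)$ or $\Theta(k^2)$ grows, so $2d/q > 1$ and your argument proves nothing about ``typical'' behaviour as $n \to \infty$ at fixed $q$. The parenthetical remark that small $q$ ``must be replaced by a more careful counting argument'' is not a routine patch --- it is the entire content of \cite{CCMZ15}, whose proof is not a polynomial-identity-testing/genericity argument at all but a probabilistic counting estimate carried out directly over $\Fq$ (bounding the probability that a random $[n,k]$ code has a square of deficient dimension). Since your proposal ultimately refers to \cite[Theorem~2.3]{CCMZ15} for exactly that part, as a proof of the statement it is circular: it reduces the theorem to the theorem being cited. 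As an expository sketch accompanying the citation (which is all the paper itself does), it is reasonable, but it should not be presented as a proof valid in the fixed-$q$ regime.
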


Theorem~\ref{thm:CCMZ} provides a distinguisher between random codes
and algebraically structured codes such as generalised Reed--Solomon
codes and their low--codimensional subcodes~\cite{W10, CGGOT14},
Reed--Muller codes~\cite{CB14}, polar codes~\cite{BCDOT16} some Goppa
codes \cite{COT14a, COT17}, high--rate alternant codes \cite{FGOPT11}
or algebraic geometry codes~\cite{CMP14,CMP17}.  For instance, in the
case of GRS codes, we have the following result.
\begin{proposition}\label{prop:sq_RS} Let $n, k, \xv, \yv$ be as in
Definition~\ref{def:GRS}. Then,
  \[ \sqb{\GRS{k}{\xv}{\yv}} = \GRS{2k-1}{\xv}{\yv \star \yv}.
  \] In particular, if $k \leq n/2$, then
 \[ \dim \sqb{\GRS{k}{\xv}{\yv}} = 2k-1.
 \]
\end{proposition}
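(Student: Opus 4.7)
The plan is to exploit the polynomial description of GRS codes and translate the componentwise product of codewords into polynomial multiplication.

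First, I would observe that by Definition~\ref{def:GRS} every codeword of $\GRS{k}{\xv}{\yv}$ is of the form $(y_0 f(x_0), \ldots, y_{n-1} f(x_{n-1}))$ for some $f \in \F_q[X]_{<k}$. Given two such codewords corresponding to polynomials $f, g \in \F_q[X]_{<k}$, their $\star$--product is exactly $(y_0^2 (fg)(x_0), \ldots, y_{n-1}^2 (fg)(x_{n-1}))$, which is the evaluation of the polynomial $fg \in \F_q[X]_{<2k-1}$ with multiplier $\yv \star \yv$. This gives the inclusion $\sqb{\GRS{k}{\xv}{\yv}} \subseteq \GRS{2k-1}{\xv}{\yv \star \yv}$.

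For the reverse inclusion, it suffices to exhibit every monomial $X^\ell$ with $0 \leq \ell \leq 2k-2$ as a product of two polynomials of degree less than $k$. Writing $\ell = a + b$ with $0 \leq a, b \leq k-1$ (which is possible since $\ell \leq 2k-2$), we have $X^\ell = X^a \cdot X^b$, so the evaluation of $X^\ell$ with multiplier $\yv \star \yv$ lies in $\sqb{\GRS{k}{\xv}{\yv}}$. Since these evaluations span $\GRS{2k-1}{\xv}{\yv \star \yv}$, the equality follows.

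Finally, for the dimension statement, assume $k \leq n/2$, so that $2k-1 < n$. The evaluation map $f \mapsto (y_0 f(x_0), \ldots, y_{n-1} f(x_{n-1}))$ from $\F_q[X]_{<2k-1}$ to $\F_q^n$ is injective because a nonzero polynomial of degree at most $2k-2 < n$ cannot vanish at all $n$ distinct points $x_0, \ldots, x_{n-1}$ (and the $y_i$ are nonzero). Hence $\dim \GRS{2k-1}{\xv}{\yv \star \yv} = 2k-1$. No real obstacle arises here; the only point requiring minimal care is checking that the degree bound $2k-2$ is compatible with the hypothesis $k \leq n/2$, which is immediate.
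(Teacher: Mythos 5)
Your proof is correct, and it is the standard argument: the paper states Proposition~\ref{prop:sq_RS} without proof (it is a classical fact underlying the square-code distinguishers of \cite{W10,CGGOT14}), and the expected justification is exactly the one you give — componentwise products of codewords correspond to products of polynomials of degree less than $k$, the reverse inclusion follows by factoring each monomial $X^\ell$ with $\ell \leq 2k-2$ as $X^aX^b$ with $a,b \leq k-1$, and injectivity of the evaluation map when $2k-1 \leq n$ gives the dimension count. No gaps.
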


Thus, compared to random codes whose square have dimension quadratic
in the dimension of the code, the square of a GRS code has a dimension
which is linear in that of the original code. This criterion allows to
distinguish GRS codes of appropriate dimension from random codes.
A rich literature of cryptanalysis of code--based encryption primitives
involves this operation.

In an initial version of the XGRS cryptosystem, submitted on ArXiv
\cite{KRW19}, such a distinguisher could be applied to the
cryptosystem and lead to an attack. The authors changed the
cryptosystem in order to avoid such attacks. The new parameters are
out of reach of a distinguisher based on the square code operation.
This motivates the introduction of an alternative product.

\subsection{The twisted square product}
\label{ss:twisted_star_prod}
In the remainder of this section, for the sake of simplicity, are
stated using the same subspace and expansion basis for all blocks but
they can be straightforwardly generalised to the case of various
subspaces and expansion bases.

\subsubsection{Motivation: the case $\lambda=2, m=3$}
In this section, we consider the case $\lambda=2, m=3$.  Let us
introduce a definition that is needed in the sequel.

\begin{definition}
  Let $\SS \subseteq \Fqm$ be an $\Fq$--vector space, we define
  the square subspace
  \[
  \SS^2 \eqdef \Fqspan{ab ~|~ a, b \in \SS}.
  \]
\end{definition}

\begin{lem} \label{lem:product}
  Let $\SS$ be a subspace of $\Fqm$ of dimension 2. Let
  $\BS = (\gamma_0, \gamma_1)$ be a basis of $\SS$.
  Let $a,b \in \SS$ such that 
  \[
  \ExpVect_\BS((a)) = (a_0,a_1) \quad {\rm and} \quad
  \ExpVect_\BS((b)) = (b_0,b_1). 
  \]
  Then,
  \[
  \ExpVect_{\BSp}((ab)) = (a_0b_0,a_0b_1+a_1b_0,a_1b_1),
  \]
  where $\BSp = (\gamma_0^2,\gamma_0\gamma_1,\gamma_1^2)$.
\end{lem}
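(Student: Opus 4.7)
The plan is to prove two things: first, that $\BSp$ is indeed an $\Fq$-basis of $\Fqm$ (so that the notation $\ExpVect_{\BSp}$ is well-defined), and second, that the coefficients of $ab$ in this basis are exactly those claimed. The second part is a routine bilinearity calculation, so the main obstacle is establishing the linear independence of $(\gamma_0^2,\gamma_0\gamma_1,\gamma_1^2)$ over $\Fq$.

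For the independence, I would argue by contradiction: suppose there exist $(\alpha,\beta,\delta) \in \Fq^3 \setminus \{0\}$ with
\[
\alpha\gamma_0^2 + \beta\gamma_0\gamma_1 + \delta\gamma_1^2 = 0.
\]
Since $\gamma_0 \neq 0$, dividing by $\gamma_0^2$ and setting $t \eqdef \gamma_1/\gamma_0 \in \Fqm$, one obtains $\alpha + \beta t + \delta t^2 = 0$. This exhibits $t$ as algebraic over $\Fq$ of degree at most $2$, so $t \in \F_{q^2}$. On the other hand $t \in \F_{q^3}$, hence $t$ belongs to $\F_{q^2} \cap \F_{q^3} = \Fq$ (using that $m=3$ is prime, so $\F_{q^3}$ has no intermediate subfield over $\Fq$). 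But then $\gamma_1 = t\gamma_0$ with $t \in \Fq$, contradicting the $\Fq$-linear independence of $\BS = (\gamma_0,\gamma_1)$. Thus $\BSp$ consists of three $\Fq$-linearly independent elements of the $3$-dimensional $\Fq$-vector space $\Fqm$, so it is a basis of $\Fqm$ (and incidentally this shows $\SS^2 = \Fqm$, justifying the notation $\BSp = \B_{\SS^2}$).

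Once $\BSp$ is known to be a basis, the identity is obtained by direct expansion: writing $a = a_0\gamma_0 + a_1\gamma_1$ and $b = b_0\gamma_0 + b_1\gamma_1$, bilinearity of the product in $\Fqm$ gives
\[
ab = a_0 b_0\, \gamma_0^2 + (a_0 b_1 + a_1 b_0)\, \gamma_0\gamma_1 + a_1 b_1\, \gamma_1^2.
\]
Reading off the coefficients in the ordered basis $\BSp = (\gamma_0^2,\gamma_0\gamma_1,\gamma_1^2)$ yields
\[
\ExpVect_{\BSp}((ab)) = (a_0 b_0,\, a_0 b_1 + a_1 b_0,\, a_1 b_1),
\]
which is the claimed identity.
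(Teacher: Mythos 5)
Your proof is correct and follows essentially the same route as the paper: the paper treats the bilinear expansion of $ab$ over $\BSp$ as immediate and establishes the $\Fq$-independence of $(\gamma_0^2,\gamma_0\gamma_1,\gamma_1^2)$ in the remark directly following the lemma, via the same substitution $\zeta = \gamma_1/\gamma_0$ and the observation that degree $\leq 2$ over $\Fq$ inside $\F_{q^3}$ forces $\zeta \in \Fq$, contradicting the independence of $\BS$. Your write-up merely makes explicit (via $\F_{q^2}\cap\F_{q^3}=\Fq$) what the paper leaves implicit, so there is no substantive difference.
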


\begin{remark}
  Note that when $m=3$ and $\dim \SS=2$, we have $\SS^2 = \F_{q^3}$.
  Indeed, let $(\gamma_0, \gamma_1)$ be a basis of $\SS$, if
  $\gamma_{0}^2, \gamma_{0}\gamma_{1}$ and $\gamma_{1}^2$ were not
  $\Fq$-independent, denoting $\zeta \eqdef \gamma_1 / \gamma_0$, then
  $1, \zeta$ and $\zeta^2$ would not be $\Fq$-independent
  either. Hence $\zeta$ would have degree $\leq 2$ over $\Fq$. But by
  definition $\zeta \not\in \Fq$.
\end{remark}

Consider the SSRS scheme with $m=3, \lambda=2$. The public key is a
generator matrix $\Gm$ of
$\ExpCode_{(\B_{\SS_0}, \dots,
  \B_{\SS_{n-1}})}\left(\SScode{\CC}{(\SS_0, \dots,
    \SS_{n-1})}\right)$.
From an attacker's point of view, the spaces $\SS_0, \dots, \SS_{n-1}$
and their bases $\B_{\SS_0}, \dots, \B_{\SS_{n-1}}$ are unknown.  But,
we have access to the entries of $\Gm$, in particular we have access
to the coefficients $a_0, a_1$ (resp. $b_0, b_1$) of the decomposition
in the basis $\B_i$ of the $i$--th entry of some codewords of
$\CC$. Hence, the coefficients of the product $ab$ in the basis
$(\gamma_0^2, \gamma_0 \gamma_1, \gamma_1^2)$ of $\SS_i^2$ can be
computed {\bf without knowing neither $\CC$ nor the basis
  $\B_i$}. This motivates the following definition.

\begin{definition}[Twisted product]
  Let $\av$ and $\bv$ in $\F_q^{2n}$ whose components are denoted
  \[ \av =
  (a_0^{(0)},a_0^{(1)},a_1^{(0)},a_1^{(1)},\ldots,a_{n-1}^{(0)},a_{n-1}^{(1)});\]
  \[ \bv =
  (b_0^{(0)},b_0^{(1)},b_1^{(0)},b_1^{(1)},\ldots,b_{n-1}^{(0)},b_{n-1}^{(1)}).\]
  We define the \emph{twisted product} of $\av$ and $\bv$ as
  \[
  \av \,\tstar\, \bv \eqdef (a_i^{(0)}b_i^{(0)}, a_i^{(0)}b_i^{(1)}+a_i^{(1)}b_i^{(0)} , a_i^{(1)}b_i^{(1)})_{0 \leq i \leq n-1} \in \Fq^{3n}.
  \]
  This definition extends to the product of codes, where the {\em twisted
    product} of two codes $\code A$ and $\code B \subseteq \F_q^{2n}$ is
  defined as
  \[
  \code A \tstar \code B \eqdef \Fqspan{ \av \,\tstar\, \bv ~|~ \av \in \code A, \ \bv \in \code B }.
  \]
  In particular, $\tsq{\code A}$ denotes the \emph{twisted square code}
  of a code $\code A$: $\tsq{\code A}\eqdef \code A \tstar \code A$.
\end{definition}

With this definition, we can rewrite Lemma~\ref{lem:product} for
vectors in the following way.

\begin{lem} \label{lem:twisted_product}
  Let $\SS$ be a subspace of $\Fqm$ of dimension 2. Let
  $\BS = (\gamma_0, \gamma_1)$ be a basis of $\SS$. Let
  $\av, \bv \in \Fqm^n$ such that all their entries lie in
  $\SS$. Then,
  \begin{equation*}
    \ExpVect_\BS(\av) \,\tstar\, \ExpVect_\BS(\bv) = \ExpVect_{\BSp}(\av \star \bv),
  \end{equation*}
  where $\BSp = (\gamma_0^2,\gamma_0\gamma_1,\gamma_1^2)$.  
\end{lem}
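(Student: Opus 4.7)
The plan is to reduce Lemma~\ref{lem:twisted_product} to its single-entry version, namely Lemma~\ref{lem:product}, which has already been established. The key observation is that both the expansion operator and the twisted product $\tstar$ act block-by-block: the $i$-th block of $\ExpVect_\BS(\av)$ (resp.\ $\ExpVect_\BS(\bv)$) depends only on $a_i$ (resp.\ $b_i$), and by the very definition of $\tstar$, the $i$-th block of $\ExpVect_\BS(\av) \,\tstar\, \ExpVect_\BS(\bv)$ depends only on the $i$-th blocks of the two operands. Likewise, since all entries of $\av$ and $\bv$ lie in $\SS$, the entries of $\av \star \bv$ lie in $\SS^2$, so $\ExpVect_{\BSp}(\av \star \bv)$ is well-defined, and its $i$-th block is $\ExpVect_{\BSp}(a_ib_i)$.

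I would therefore fix an index $i \in \IInt{0}{n-1}$, write $(a_i^{(0)}, a_i^{(1)})$ and $(b_i^{(0)}, b_i^{(1)})$ for the coordinates of $a_i$ and $b_i$ in the basis $\BS$, and apply Lemma~\ref{lem:product} to the pair $(a_i, b_i)$. This gives exactly
\[
\ExpVect_{\BSp}(a_ib_i) = \bigl(a_i^{(0)}b_i^{(0)},\ a_i^{(0)}b_i^{(1)} + a_i^{(1)}b_i^{(0)},\ a_i^{(1)}b_i^{(1)}\bigr),
\]
which is identical to the $i$-th block of $\ExpVect_\BS(\av) \,\tstar\, \ExpVect_\BS(\bv)$ as prescribed by the definition of the twisted product. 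Concatenating the equalities over $i = 0, \ldots, n-1$ yields the claim.

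There is essentially no obstacle here: the statement is the ``vector-level'' version of the already-proved ``scalar-level'' Lemma~\ref{lem:product}, and the twisted product was specifically defined so that the block-by-block lift is automatic. The only point worth a line of care is to note that $\SS^2$ has dimension $3 = m$ (as observed in the remark preceding the lemma), so that $\BSp = (\gamma_0^2, \gamma_0\gamma_1, \gamma_1^2)$ is indeed a basis of $\SS^2 = \Fqm$ and the expansion $\ExpVect_{\BSp}$ on the right-hand side is well-defined.
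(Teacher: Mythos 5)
Your proof is correct and matches the paper's approach: the paper presents Lemma~\ref{lem:twisted_product} as the block-wise (vector-level) restatement of Lemma~\ref{lem:product}, which is exactly the reduction you spell out, entry by entry, using the fact that $\ExpVect$ and $\tstar$ act independently on each block. Your added remark that $\BSp$ is a basis (so the right-hand expansion is well-defined) matches the remark preceding the lemma in the paper.
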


Extending this result to codes, we obtain the following theorem.

\begin{thm}\label{thm:twisted_sq_exp_code}
  Let $\CC$ be an $[n,k]$-code over $\Fqm$ and $\SS$ a subspace of
  $\Fqm$ of dimension $\lambda$. Let $\BS = (\gamma_{0},\gamma_{1})$ be
  an $\Fq$--basis of $\SS$. Then,
  \begin{equation}\label{eq:twisted_sq_exp_code_simple}
    \tsqb{\ExpCode_{\BS} (\SScode{\CC}{\SS})} =
    \ExpCode_{\BSp}\left(\left(\CCSS\right)^{\star 2}_{\Fq}\right),
  \end{equation}
  where $\BSp = (\gamma_{0}^2, \gamma_{0}\gamma_{1}, \gamma_{1}^2)$.
  This results generalises straightforwardly to an expansion over
  various subspaces.
\end{thm}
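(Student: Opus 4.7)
The plan is to deduce the theorem directly from Lemma~\ref{lem:twisted_product}, by noting that both the twisted star product on codes and the expansion operator interact nicely with $\Fq$-linear spans.

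First, I would unfold the definition of the twisted square:
\[
\tsqb{\ExpCode_{\BS}(\CCSS)} = \Fqspan{\ExpVect_{\BS}(\av) \,\tstar\, \ExpVect_{\BS}(\bv) \ |\ \av, \bv \in \CCSS}.
\]
Since every element of $\CCSS$ has all of its entries in $\SS$, Lemma~\ref{lem:twisted_product} applies to every pair $(\av, \bv)$ appearing on the right--hand side and yields
\[
\ExpVect_{\BS}(\av) \,\tstar\, \ExpVect_{\BS}(\bv) = \ExpVect_{\BSp}(\av \star \bv).
\]
Hence the generating set of the twisted square code coincides with $\{\ExpVect_{\BSp}(\av \star \bv) \ |\ \av, \bv \in \CCSS\}$.

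Next, I would exploit $\Fq$-linearity. The map $\ExpVect_{\BSp}$ is an $\Fq$-linear isomorphism from $(\SS^2)^n$ onto $\Fq^{3n}$ (since $(\gamma_0^2, \gamma_0\gamma_1, \gamma_1^2)$ is indeed an $\Fq$-basis of $\SS^2$ in the case $\lambda = 2$, $m = 3$, as noted in the preceding remark). Therefore taking $\Fq$-spans commutes with $\ExpVect_{\BSp}$, which gives
\[
\Fqspan{\ExpVect_{\BSp}(\av \star \bv) \ |\ \av, \bv \in \CCSS} = \ExpCode_{\BSp}\Bigl(\Fqspan{\av \star \bv \ |\ \av, \bv \in \CCSS}\Bigr) = \ExpCode_{\BSp}\bigl((\CCSS)^{\star 2}_{\Fq}\bigr).
\]
Chaining the two equalities yields~\eqref{eq:twisted_sq_exp_code_simple}.

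There is no real obstacle here, the work has already been done in Lemma~\ref{lem:twisted_product}; the only point requiring a line of justification is that $\ExpVect_{\BSp}$ really is well defined and injective on $(\SS^2)^n$, which uses the remark that $\gamma_0^2, \gamma_0\gamma_1, \gamma_1^2$ are $\Fq$-linearly independent. Finally, the generalisation to a family of subspaces $(\SS_0,\ldots,\SS_{n-1})$ with distinct bases $(\B_{\SS_i})_i$ is immediate: both operations $\tstar$ and $\ExpVect$ act blockwise on the $n$ positions, so the identity proven coordinate-wise through Lemma~\ref{lem:twisted_product} extends position by position, replacing $\BSp$ by the corresponding family $(\B_{\SS_i^2})_i$ and $\SS^2$ by the appropriate $\SS_i^2$ in each block.
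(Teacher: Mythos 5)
Your proof is correct and follows essentially the same route as the paper, which simply invokes Lemma~\ref{lem:twisted_product} together with the definition of $\ExpCode$ and the $\Fq$-linearity of $\ExpVect$; your spelling out that the span of the generators $\ExpVect_{\BSp}(\av \star \bv)$ equals the expansion of the span, justified by $\BSp$ being an $\Fq$-basis (here $\SS^2 = \Fqm$), is exactly the implicit content of that one-line argument. Nothing further is needed.
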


\begin{proof}
  This is a direct consequence of Lemma~\ref{lem:twisted_product} by
  definition of the $\ExpCode$ operator and by $\Fq$-linearity of
  $\ExpVect$.
\end{proof}

Note that in our case, because $\SS^2 = \Fqm$, the basis $\BSp$ in
\eqref{eq:twisted_sq_exp_code_simple} is a full basis of $\Fqm$.
Theorem~\ref{thm:twisted_sq_exp_code} is summarised by the following
diagram.

\begin{center}
\begin{tikzcd}
  \CC
  \ar[r, "(\,\cdot\,)_{|\SS}"]
  \ar[d, "\mathbf{Exp}_{\B}"]
  & [5em]
  \CCSS
  \ar[r, "\sqb{\,\cdot\,}_{\Fq}"]
  \ar[d, "\mathbf{Exp}_{\BS}"]
  & [4em]
  \sqb{\CCSS}_{\Fq}
  \ar[d, "\mathbf{Exp}_{\B_{\SS^2}}"]
  \\ [3em]
  \ExpCode_{\B}(\CC)
  \ar[r, "\sh{\,\cdot\,}{\JLM}"]
  &
  \ExpCode_{\BS}(\CC_{|\SS})
  \ar[r, "\tsqb{\,\cdot\,}"]
  &
  \ExpCode_{\B_{\SS^2}}\left(\sqb{\CCSS}_{\Fq}\right)
  \\
\end{tikzcd}
\end{center}

%%% Local Variables:
%%% mode: latex
%%% TeX-master: "../xgrs"
%%% End:

\subsubsection{General definition of the twisted square code}
\begin{remark}
  This section is a generalisation of the previous definitions and
  results. A reader only interested in the practical aspects of the
  attack can skip directly to
  Section~\ref{sec:twisted_sq_distinguisher}.
\end{remark}

For arbitrary $\lambda \geq 2$, we have the following definition.

\begin{definition}[Twisted square product, general case]
  Let $\av$ and $\bv$ in $\F_q^{\lambda n}$ whose components are denoted
  \begin{align*}
    \av &= (a_{0,0}, \dots, a_{0,\lambda - 1}, a_{1,0}, \dots,
          a_{1,\lambda - 1},\ldots,a_{n-1,0}, \dots, a_{n-1,\lambda
          - 1}),\\
    \bv &= (b_{0,0}, \dots, b_{0,\lambda - 1},b_{1,0}, \dots,
          b_{1,\lambda - 1},\ldots,b_{n-1, 0}, \dots, b_{n-1, \lambda - 1}).
  \end{align*}
  We define the \emph{twisted product}
  $\av \, \tstar\, \bv \in \Fq^{{\lambda + 1 \choose 2}n}$ of $\av$
  and $\bv$ such that for any $i \in \IInt{0}{n-1}$ and for $r,s$ such
  that $0 \leq r \leq s \leq \lambda - 1$,
  \[{(\av \,\tstar\, \bv)}_{i
    {\lambda + 1 \choose 2} + {s+1 \choose 2} + r} \eqdef \left\{
    \begin{array}{ccc}
      a_{i,r}b_{i,s} + a_{i,s}b_{i,r}& {\rm if} & r < s\\
      a_{i,r}b_{i,r} & {\rm if} & r = s.
    \end{array}
  \right.
  \]
  This definition extends to the product of codes, where the {\em twisted
    product} of two codes $\code A$ and $\code B \subseteq \F_q^{\lambda n}$ is
  defined as
  \[
  \code A \tstar \code B \eqdef \Fqspan{
    \av \,\tstar\, \bv ~|~ \av \in \code A, \ \bv \in \code B
  }.
  \]
  In particular, $\tsq{\code A}$ denotes the \emph{twisted square code}
  of a code $\code A$: $\tsq{\code A}\eqdef \code A \tstar \code A$.
\end{definition}

We are interested in the case where $\SS^2 = \Fqm$, because the goal
is to reconstruct a fully expanded code. For a random subspace
$\SS \subseteq \Fqm$ of dimension $\lambda$, its dimension is
typically $\min\left\{{\lambda+1 \choose 2},m\right\}$. The case
$m=3, \lambda=2$ is a special case where ${\lambda+1 \choose 2}=m$.
Hence, for other parameters $m,\lambda$ such that
${\lambda + 1 \choose 2} = m$, Theorem~\ref{thm:twisted_sq_exp_code}
generalises straightforwardly. But when ${\lambda + 1 \choose 2} > m$,
the twisted square code does not correspond to an expanded code. It is
as if the code were expanded over a generating family of $\Fqm$ which
is not a basis: the vectors are too long. A way to circumvent this and
to obtain a result similar to Theorem~\ref{thm:twisted_sq_exp_code} is
to shorten the twisted square code to cancel the useless columns and
obtain an expansion over a basis. This yields the following results.

\begin{lem} \label{lem:twisted_product_gen} Let $\SS$ be a subspace of
  $\Fqm$ of dimension $\lambda$ such that $\SS^2 = \Fqm$. Let
  $\BS = (\gamma_0, \ldots, \gamma_{\lambda-1})$ be an $\Fq$--basis of $\SS$.
  Let $\B_{\SS^2}$ denote the first $m$ elements of
  $(\gamma_0^2, \gamma_0 \gamma_1, \dots, \gamma_0\gamma_i,$
  $\gamma_1
  \gamma_i, \dots,$ $\gamma_i^2,\dots, \gamma_{\lambda - 1}^2)$.
  Let $\av, \bv \in \Fqm^n$ whose entries all lie in $\SS$.  Denote
  $\cv$ the vector of length ${\lambda + 1 \choose 2}n$ over $\Fq$
  defined as
  \[\cv \eqdef \ExpVect_\BS(\av) \,\tstar\, \ExpVect_\BS(\bv).\]
  Let $\KLM$ denote the set
  \begin{equation}\label{eq:J0}
    \KLM \eqdef \left\{ {\textstyle{ \lambda+1 \choose 2}}i+j, i \in \IInt{0}{n-1}, j \in \IInt{m}{\textstyle{ \lambda+1 \choose 2}-1} \right\}.
  \end{equation}
  If $\BSp$ is a basis of $\Fqm$ and for any $i\in \KLM$, the
    $i$--th entry of $\cv$ is zero, then,
  \begin{equation}
    \label{eq:product}
    \pu{\cv}{\KLM} = \ExpVect_{\BSp}(\av \star \bv).
  \end{equation}
\end{lem}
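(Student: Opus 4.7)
The plan is to reduce to a block-by-block verification. Both sides of~\eqref{eq:product} are concatenations of $n$ blocks indexed by $i \in \IInt{0}{n-1}$, and the $i$-th block depends only on the pair $(a_i, b_i) \in \SS \times \SS$. Indeed, the twisted product acts independently on each block of $\lambda$ coordinates of $\ExpVect_{\BS}(\av)$ and $\ExpVect_{\BS}(\bv)$, while the star product $\av \star \bv$ is entrywise. Hence it suffices to fix $i$ and prove the identity for a single pair $(a,b) = (a_i,b_i) \in \SS^2$, and then conclude by concatenation and $\Fq$--linearity of $\ExpVect$.

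First I would expand $a = \sum_{r=0}^{\lambda - 1} a_{i,r}\gamma_r$ and $b = \sum_{s=0}^{\lambda - 1} b_{i,s}\gamma_s$, and then compute
\[
  ab = \sum_{r,s} a_{i,r} b_{i,s}\,\gamma_r\gamma_s
     = \sum_{r=0}^{\lambda - 1} a_{i,r}b_{i,r}\,\gamma_r^2
     + \sum_{0 \leq r < s \leq \lambda - 1}\!\!\bigl( a_{i,r}b_{i,s}+a_{i,s}b_{i,r}\bigr)\gamma_r\gamma_s,
\]
by merging the symmetric terms $\gamma_r\gamma_s = \gamma_s\gamma_r$. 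The key point is then to recognise the coefficients in the right-hand side as precisely the $\binom{\lambda+1}{2}$ entries of the $i$-th block of $\cv = \ExpVect_\BS(\av) \,\tstar\, \ExpVect_\BS(\bv)$, via the combinatorial bijection $(r,s) \mapsto \binom{s+1}{2}+r$ used in the definition of $\tstar$. Writing $\gamma_{(0)},\gamma_{(1)},\dots$ for the ordered family of products $\gamma_r\gamma_s$ thus obtained (whose first $m$ terms are, by definition, the entries of $\BSp$), this identifies
\[
  a_i b_i = \sum_{k=0}^{\binom{\lambda+1}{2}-1} c_{i,k}\,\gamma_{(k)},
  \qquad \text{where } c_{i,k} \eqdef \cv_{i\binom{\lambda+1}{2}+k}.
\]

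Finally, the hypothesis that the entries of $\cv$ at positions in $\KLM$ vanish means that $c_{i,k}=0$ for $k \geq m$, so the sum truncates to
\[
  a_i b_i = \sum_{k=0}^{m-1} c_{i,k}\,\gamma_{(k)}.
\]
Since $\BSp = (\gamma_{(0)},\dots,\gamma_{(m-1)})$ is an $\Fq$--basis of $\Fqm$ by assumption, this is the unique decomposition of $a_ib_i$ along $\BSp$, so $(c_{i,0},\dots,c_{i,m-1})$ is exactly the $i$-th block of $\ExpVect_{\BSp}(\av \star \bv)$. Concatenating over $i \in \IInt{0}{n-1}$ yields $\pu{\cv}{\KLM} = \ExpVect_{\BSp}(\av \star \bv)$. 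No substantial obstacle is expected: the argument is bookkeeping, and the only delicate point is to check that the indexing $\binom{s+1}{2}+r$ in the definition of $\tstar$ matches the prescribed ordering of products $\gamma_r\gamma_s$ defining $\BSp$ and its extension. This generalises Lemma~\ref{lem:twisted_product} ($\lambda=2, m=3$), where $\binom{\lambda+1}{2}=m$ makes the set $\KLM$ empty and the puncturing step trivial.
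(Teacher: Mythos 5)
Your proof is correct and follows essentially the same route as the paper's: reduce to a single block, expand $ab$ over the products $\gamma_r\gamma_s$, identify the symmetrised coefficients with the entries of the twisted product, and use the vanishing of the last $\binom{\lambda+1}{2}-m$ entries of each block together with the basis property of $\BSp$ (the paper phrases this last step via the $\SqueezeVect$ operator, which is the same uniqueness-of-decomposition argument). The index bookkeeping $(r,s)\mapsto\binom{s+1}{2}+r$ that you flag does indeed match the prescribed ordering of $\BSp$, so there is no gap.
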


\begin{proof}
  Let $\cv$ be defined as in the statement. We want to prove that 
  \[\SqueezeVect_\BSp(\pu{\cv}{\KLM}) = \av \star \bv.\]
  This is equivalent to Equation~\eqref{eq:product} because $\BSp$ is
  a basis of $\Fqm$.  Without loss of generality, we only need to
  focus on the block corresponding to the first entry in $\Fqm$.
 
  Let $(a_0, \ldots, a_{\lambda-1})$ and
  $(b_0, \ldots, b_{\lambda-1})$ denote the decomposition of the first
  entries of $\av$ (\emph{resp.} $\bv$) over $\BS$.
  The first entry of $\av \star \bv$ is
  \[ \left(\sum_i a_i \gamma_i \right)\left(\sum_j a_j \gamma_j\right)
  = \sum_{0 \leq i \leq j < \lambda} c_{i,j} \gamma_i \gamma_j,\]
  where the coefficients $c_{i,j}$ match exactly the definition of the 
  twisted square product, hence correspond to the entries of $\cv$.

  Let $\Bfull$ denote the family
  $(\gamma_0^2, \gamma_0 \gamma_1, \dots, \gamma_0\gamma_i, \gamma_1
  \gamma_i, \dots, \gamma_i^2,\dots, \gamma_{\lambda - 1}^2)$.
  The last entries of each block of $\cv$ are equal to zero. This
  corresponds exactly to the elements of $\Bfull$ that are not in
  $\BSp$. We therefore have
  \[ \SqueezeVect_\BSp(\pu{\cv}{\KLM}) = \SqueezeVect_\Bfull(\cv) =
  \av \star \bv.\]
\end{proof}

This leads to the following main statement, which is the
generalisation of Theorem~\ref{thm:twisted_sq_exp_code}.

\begin{thm}
  \label{thm:generalise_connect}
  Let $\CC$ be an $[n,k]$ code over $\Fqm$ and $\SS$ a subspace of
  $\Fqm$ of dimension $\lambda$ such that $\SS^2 = \Fqm$. Let
  $\BS = (\gamma_{0},\ldots, \gamma_{\lambda-1})$ be an $\Fq$--basis
  of $\SS$. Then,
  \begin{equation}\label{eq:twisted_sq_exp_code}
    \sh{\tsqb{\ExpCode_{\BS} (\SScode{\CC}{\SS})}}{\KLM} \subseteq
    \ExpCode_{\BSp}\left(\sqb{\CCSS}_{\Fq}\right),
  \end{equation}
  where $\BSp$ and $\KLM$ are defined as in
  Lemma~\ref{lem:twisted_product_gen}, provided $\BSp$ is a basis of
  $\Fqm$. This result generalises straightforwardly to an expansion
  over various subspaces and bases.
\end{thm}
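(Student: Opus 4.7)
The plan is to reduce the statement to a pointwise version of Lemma~\ref{lem:twisted_product_gen}, extended by $\Fq$-linearity to the whole shortened twisted square code.

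First, I would pick an arbitrary $\zv \in \sh{\tsqb{\ExpCode_{\BS}(\CCSS)}}{\KLM}$. By definition of the shortening, there exists $\tilde{\zv} \in \tsqb{\ExpCode_{\BS}(\CCSS)}$ whose entries at positions in $\KLM$ all vanish and such that $\zv = \pu{\tilde{\zv}}{\KLM}$. By definition of the twisted square, $\tilde{\zv}$ can be written as an $\Fq$-linear combination
\[
\tilde{\zv} = \sum_{j} \mu_j \, \bigl(\ExpVect_{\BS}(\av_j) \,\tstar\, \ExpVect_{\BS}(\bv_j)\bigr),
\]
with $\mu_j \in \Fq$ and $\av_j, \bv_j \in \CCSS$.

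Next, I would introduce the $\Fq$-linear map $\phi : \Fq^{{\lambda+1 \choose 2}n} \to \Fqm^n$ acting block by block, where on each block of length ${\lambda+1 \choose 2}$ it takes the formal linear combination with coefficients given by the family $\Bfull = (\gamma_0^2, \gamma_0\gamma_1, \ldots, \gamma_{\lambda-1}^2)$. The key observation, which is exactly the content of the computation in the proof of Lemma~\ref{lem:twisted_product_gen}, is that for all $\av, \bv \in \SS^n$,
\[
\phi\bigl(\ExpVect_{\BS}(\av) \,\tstar\, \ExpVect_{\BS}(\bv)\bigr) = \av \star \bv.
\]
Applying this to $\tilde{\zv}$ and using $\Fq$-linearity, we obtain
\[
\phi(\tilde{\zv}) = \sum_j \mu_j \, (\av_j \star \bv_j) \in \sqb{\CCSS}_{\Fq}.
\]

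Finally, I would translate the vanishing of $\tilde{\zv}$ on $\KLM$ into an expansion statement. By construction of $\Bfull$ and of $\KLM$ (cf. Equation~\eqref{eq:J0}), the positions of $\KLM$ correspond precisely to the elements of $\Bfull$ that are \emph{not} in $\BSp$; since $\BSp$ is assumed to be an $\Fq$-basis of $\Fqm = \SS^2$, the remaining ${\lambda + 1\choose 2} - m$ elements of $\Bfull$ in each block are redundant. Thus, inside each block of $\tilde{\zv}$, only the coordinates corresponding to $\BSp$ are nonzero, and they encode the decomposition of the corresponding entry of $\phi(\tilde{\zv})$ in $\BSp$. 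Equivalently, $\pu{\tilde{\zv}}{\KLM} = \ExpVect_{\BSp}(\phi(\tilde{\zv}))$. Combining with the previous step yields
\[
\zv = \pu{\tilde{\zv}}{\KLM} = \ExpVect_{\BSp}(\phi(\tilde{\zv})) \in \ExpCode_{\BSp}\bigl(\sqb{\CCSS}_{\Fq}\bigr),
\]
which proves the desired inclusion. The generalisation to varying subspaces and bases $(\SS_i, \B_{\SS_i})$ is immediate since all constructions are performed block by block and the blocks are independent.

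The only delicate point is the transition from the individual twisted products (to which Lemma~\ref{lem:twisted_product_gen} applies only conditionally, since the generic twisted product need not vanish on $\KLM$) to an arbitrary $\Fq$-linear combination $\tilde{\zv}$ that happens to vanish on $\KLM$ after cancellations. This is handled by introducing the auxiliary map $\phi$, whose $\Fq$-linearity allows us to commute the linear combination with the passage from expanded vectors in $\BS$ to expanded vectors in $\BSp$, once the vanishing on $\KLM$ is enforced by the shortening operation.
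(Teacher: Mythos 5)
Your proposal is correct and follows essentially the same route as the paper: it reduces the statement to the computation in Lemma~\ref{lem:twisted_product_gen} and handles the fact that codewords of the shortened twisted square are $\Fq$--linear combinations of twisted products by invoking linearity of the relevant operators, your auxiliary map $\phi$ being exactly the block-wise squeezing $\SqueezeVect_{\Bfull}$ already used in the paper's argument. The only difference is that you spell out explicitly the linearity step which the paper merely declares to generalise straightforwardly.
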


\begin{proof}
  We intend to apply Lemma~\ref{lem:twisted_product_gen}. This lemma
  has two conditions.
  The first condition is met by shortening the left-hand term. Indeed,
  the effect of shortening is that we keep only the words whose
  entries indexed by $\KLM$ are all equal to zero.
  The second condition ($\BSp$ being a basis) is a hypothesis.
  
  Compared to Lemma~\ref{lem:twisted_product_gen} and its proof, one
  should be careful that in general $\sh{\tsq{\AC}}{\KLM}$ (where
  $\AC$ denotes $\ExpCode_{\BS} (\SScode{\CC}{\SS})$) is not spanned
  by words of the form $\pu{\av \tstar \bv}{\KLM}$ with
  $\av, \bv \in \ExpCode_{\B}(\AC)$ but by linear combinations, \ie
  words of the form
  \[
    \pu{\av_0 \tstar \bv_0 + \cdots + \av_s \tstar \bv_s}{\KLM},\quad
    \mathrm{for}\quad \av_0, \dots, \av_s, \bv_0, \dots, \bv_s \in \AC.
  \]
  Therefore, one needs to apply the very same reasoning as that of the
  proof of Lemma~\ref{lem:twisted_product_gen} replacing
  $\av \tstar \bv$ by a sum of such vectors. This is not a problem and
  the proof generalises straightforwardly, since all the involved
  operators are linear.

  Finally, because of the shortening operation, we only obtain an
  inclusion and not an equality.
\end{proof}

\begin{remark}
  In the special case ${\lambda+1 \choose 2} = m$, $\KLM = \emptyset$,
  therefore the shortening is useless and the inclusion in
  \eqref{eq:twisted_sq_exp_code} is an equality.
\end{remark}

\begin{remark}
  In the sequel, we see that under a reasonable
  conjecture and some condition, the inclusion in
  \eqref{eq:twisted_sq_exp_code} is an equality.
\end{remark}

\subsection{Dimension of the twisted square of subspace subcodes}
\label{sec:twisted_sq_distinguisher}

\subsubsection{Typical dimension of the twisted square of a
  random subspace subcode}

Similarly to Theorem~\ref{thm:CCMZ} on squares of random codes, we
expect that twisted squares of random codes typically have the largest
possible dimension. For this reason, we state the following conjecture
which is confirmed by our experimental observations using the computer
algebra software {\em Sage}.

\begin{conjec}\label{conj:twisted_sq}
  For any positive integer $k$ such that $2k \leq n$, any $\Fq$--subspace
  $\SS \subseteq \Fqm^n$ of dimension $\lambda \geq 2$ such that
  $\SS^2 = \Fqm$ and any $\Fq$--basis $\BS$ of $\SS$, let $\RC$
  denote an $[n,k]$ code chosen uniformly at random, then
  \begin{equation*}
    \Prob{\dim_{\Fq} \tsq{\left(\ExpCode_{\BS}{(\RC_{|\SS})}\right)} =
      \min\left\{ {\lambda + 1 \choose 2}n,
        {km - n(m-\lambda) + 1 \choose 2} \right\}}
    \underset{k \to \infty}{\longrightarrow} 1.
  \end{equation*}
\end{conjec}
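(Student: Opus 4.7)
I would try to adapt the standard proof method for Theorem~\ref{thm:CCMZ} to the twisted setting. Set $d \eqdef km - n(m - \lambda)$ and $\AC \eqdef \ExpCode_{\BS}(\RC_{|\SS})$. Proposition~\ref{prop:typical_dim_SS} already gives $\dim_{\Fq} \AC = d$ with probability $1 - O(q^{-\ell})$, so one can work conditional on this event. Since the twisted product
\[ \tstar \, : \, \AC \times \AC \longrightarrow \Fq^{\binom{\lambda+1}{2}n} \]
is symmetric $\Fq$-bilinear, it factors through $\operatorname{Sym}^2 \AC$, a space of $\Fq$-dimension $\binom{d+1}{2}$, and $\tsq{\AC}$ is the image of the induced $\Fq$-linear map $\mu \colon \operatorname{Sym}^2 \AC \to \Fq^{\binom{\lambda+1}{2}n}$. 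The bound $\dim \tsq{\AC} \leq \min\{\binom{\lambda+1}{2}n, \binom{d+1}{2}\}$ is therefore automatic, and the conjecture amounts to showing that $\mu$ has maximal rank with probability tending to $1$.

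\textbf{Plan.} Following the outline of \cite[Theorem 2.3]{CCMZ15}, in the regime $\binom{d+1}{2} \leq \binom{\lambda+1}{2}n$ the goal is to prove injectivity of $\mu$: sum, over all nonzero $\tau \in \operatorname{Sym}^2 \AC$, the probability that $\mu(\tau) = 0$, and show that the resulting expected number of kernel elements tends to $0$ as $k \to \infty$. In the opposite regime, surjectivity is handled dually, by summing over fixed hyperplanes $H \subsetneq \Fq^{\binom{\lambda+1}{2}n}$ the probability that $\mathrm{Im}(\mu) \subseteq H$. In both cases, the probability is evaluated using the joint randomness of the parent $\Fqm$-linear code $\RC$, exploiting the fact that distinct columns of a random generator matrix of $\RC$ are independent. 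Theorem~\ref{thm:twisted_sq_exp_code} (and its generalisation, Theorem~\ref{thm:generalise_connect}) can then be used to translate the analysis from the twisted square of $\AC$ to the ordinary $\Fq$-square of $\RC_{|\SS}$, a setting where CCMZ-type techniques are more directly available.

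\textbf{Main obstacle.} The principal difficulty---and the reason the statement is only conjectural---is that $\AC$ is far from a uniformly random $\Fq$-linear subspace of $\Fq^{\lambda n}$: its distribution is induced from that of $\RC$ via the subspace-subcode operation, which endows $\AC$ with a strong block structure (the $\lambda$ coordinates corresponding to a single position of $\RC$ are tightly correlated, as they jointly encode a single element of $\SS$). The first-moment estimate must therefore be performed in the parameters of $\RC$, rather than in those of $\AC$ directly. A plausible concrete route is to parameterise $\RC_{|\SS}$ by solving the $(m-\lambda)n$ $\Fq$-linear equations expressing that each entry of a generic codeword of $\RC$ lies in $\SS$, and then to bound the probability that unexpected $\Fq$-linear relations appear among twisted products of rows of the resulting generator matrix. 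Carrying out this column-by-column bookkeeping while retaining sufficient independence to obtain a $q^{-\Omega(k)}$ decay is where the block structure clashes hardest with the generic CCMZ randomness hypothesis, and is the genuinely delicate step.
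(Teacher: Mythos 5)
This statement is Conjecture~\ref{conj:twisted_sq} in the paper: the authors do not prove it. It is motivated by analogy with Theorem~\ref{thm:CCMZ} and supported only by the Sage experiments reported in Section~\ref{ex:num_app} (Table~\ref{tab:param_tests}), so there is no proof in the paper against which to match your argument. Your proposal correctly recognises this conjectural status, and what you offer is an attack plan, not a proof: the reduction through $\operatorname{Sym}^2\AC$ and the resulting upper bound $\dim\tsq{\AC}\leq\min\bigl\{\binom{\lambda+1}{2}n,\binom{d+1}{2}\bigr\}$ are correct and essentially automatic, and the first/second-moment scheme modelled on \cite{CCMZ15} is the natural route. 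You also put your finger on exactly the right obstruction, namely that $\ExpCode_{\BS}(\RC_{|\SS})$ is only $\Fq$-linear with a strong block structure inherited from the $\Fqm$-linear randomness of $\RC$, so the CCMZ independence hypotheses do not apply directly; this is precisely why the statement remains open.

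Two smaller caveats if you ever try to carry the plan out. First, your conditioning step overstates what Propositions~\ref{prop:SS_parameters} and~\ref{prop:typical_dim_SS} give: with $\ell=1$ the bound of Proposition~\ref{prop:typical_dim_SS} is roughly $q^{-1}$, a constant, so the event $\dim_{\Fq}\AC = km-n(m-\lambda)$ is only guaranteed with probability about $1-O(1/q)$, not $1-o(1)$ as $k\to\infty$; any rigorous treatment has to handle (or absorb) the event that the subspace subcode is larger than expected. Second, Theorems~\ref{thm:twisted_sq_exp_code} and~\ref{thm:generalise_connect} translate the problem into the $\Fq$-square $\sqb{\RC_{|\SS}}_{\Fq}$ of the subspace subcode (in the general case only up to the shortening at $\KLM$ and an inclusion), which is still an $\Fq$-linear code whose distribution is far from the uniform model of \cite{CCMZ15}, so the translation does not by itself make the moment computation available. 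In short: your proposal is consistent with the paper, correctly identifies the open core, but should not be read as closing it.
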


It is worth noting that in general ${\lambda + 1 \choose 2} > m$.  In
such a case, as already mentioned before stating
Lemma~\ref{lem:twisted_product_gen}, the code
$\tsqb{\ExpCode_{\BS}{\RC_{|\SS}}}$ represents something {\bf which is
  not an expansion of a code with respect to a basis of $\Fqm$} but
rather a kind of expansion with respect to a family of generators of
the set $\SS^2$. This family is denoted $\Bfull$ in the proof of
Lemma~\ref{lem:twisted_product_gen}. This set spans $\SS^2$ but is not
linearly independent in general. Hence, given a vector with entries in
$\Fqm^n$, the decomposition with respect to this family of generators
is not unique. For this reason, it is difficult to identify the
twisted square code with the expansion of another code.

To ensure the unique decomposition, the key idea is to proceed as in
the statement of Theorem~\ref{thm:generalise_connect} and to {\bf
  shorten} the twisted square code on the positions of the set $\KLM$
introduced in~(\ref{eq:J0}), \ie shortening the last
${\lambda + 1 \choose 2} - m$ positions of each block of length
${\lambda + 1 \choose 2}$. According to
Theorem~\ref{thm:generalise_connect}, a codeword in
$\sh{\tsq{\ExpCode_{\BS}(\SScode{\RC}{\SS})}}{\KLM}$ is the expansion
of a codeword of $\sq{\RC}$ in a given basis of $\Fqm$.  The latter
property is in general not satisfied by codewords of
$\tsq{\ExpCode_{\BS}(\SScode{\RC}{\SS})}$.  Therefore, this shortened
code turns out to be a more relevant object of study and its dimension
is of particular interest in the sequel.  This dimension is the
purpose of the following statement.
\begin{cor}\label{cor:dim_twisted_square_shortened_random}
  Let $\RC$ be a uniformly random $[n,k]$ code over $\Fqm$ and
  $\SS \subseteq \Fqm$ be a subspace such that $\SS^2 = \Fqm$. Denote
  by $\KLM$ the set introduced in~(\ref{eq:J0}).  Then, under
  Conjecture~\ref{conj:twisted_sq}, we typically have
  \begin{align}\label{eq:typical_square}
    \begin{split}
    \dim_{\Fq} & \sh{\tsq{\ExpCode_{\BS}(\SScode{\RC}{\SS})}}{\KLM} \geq \\
    &\quad \qquad \qquad \min \left\{mn, {km - n(m-\lambda) + 1 \choose 2} -
      n\left({\lambda + 1 \choose 2} - m \right) \right\}.
    \end{split}
  \end{align}
\end{cor}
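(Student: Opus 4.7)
The plan is to derive this corollary as a direct consequence of Conjecture~\ref{conj:twisted_sq} combined with the elementary lower bound on the dimension of a shortened code. Recall that for any $\Fq$--linear code $\CC$ of length $N$ and any subset $\Lcal \subseteq \IInt{0}{N-1}$, shortening at $\Lcal$ amounts to intersecting $\CC$ with the set of words vanishing on $\Lcal$ (which is defined by $|\Lcal|$ linear equations) and then deleting those now-zero positions. Hence we have the standard bound $\dim_{\Fq} \sh{\CC}{\Lcal} \geq \dim_{\Fq} \CC - |\Lcal|$, together with the trivial ceiling $\dim_{\Fq} \sh{\CC}{\Lcal} \leq N - |\Lcal|$.

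First I would compute $|\KLM|$. By its definition in~(\ref{eq:J0}), $\KLM$ is the union over $i \in \IInt{0}{n-1}$ of ${\lambda+1 \choose 2} - m$ positions inside the $i$-th block of length ${\lambda+1 \choose 2}$, so
\[
|\KLM| = n \left({\lambda+1 \choose 2} - m\right).
\]
Next I would apply Conjecture~\ref{conj:twisted_sq} to $\ExpCode_{\BS}(\SScode{\RC}{\SS})$: its twisted square typically has dimension
\[
\min\left\{{\lambda+1 \choose 2}\, n,\; {km - n(m-\lambda) + 1 \choose 2}\right\}.
\]
Subtracting $|\KLM|$ from this value, and using the simplification ${\lambda+1 \choose 2} n - n\bigl({\lambda+1 \choose 2} - m\bigr) = mn$ for the first term of the minimum, yields exactly the claimed lower bound
\[
\min\left\{mn,\; {km - n(m-\lambda) + 1 \choose 2} - n \left({\lambda+1 \choose 2} - m\right)\right\}.
\]

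Finally I would note that the ambient length of the shortened code is ${\lambda+1 \choose 2} n - |\KLM| = mn$, so the ceiling $\dim_{\Fq} \sh{\cdot\,}{\KLM} \leq mn$ is automatically respected, consistently with the capping by $mn$ appearing in the minimum. The argument presents no real obstacle: it is a direct combination of the conjecture with the shortening dimension bound, with only a small bit of bookkeeping on ${\lambda+1 \choose 2} - m$. The only conceptually subtle point worth flagging is that shortening may, a priori, impose fewer than $|\KLM|$ independent linear conditions on the twisted square, which is precisely why the statement is phrased as an inequality rather than an equality; upgrading it to an equality would require further structural analysis, as indeed foreshadowed by the remark following Theorem~\ref{thm:generalise_connect}.
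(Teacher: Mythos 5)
Your argument is correct and is precisely the intended derivation: the paper states this corollary without further justification because it follows, exactly as you say, from Conjecture~\ref{conj:twisted_sq} combined with the bound $\dim_{\Fq} \sh{\CC}{\Lcal} \geq \dim_{\Fq}\CC - |\Lcal|$ and the count $|\KLM| = n\left({\lambda+1 \choose 2} - m\right)$, using $\min\{a,b\} - c = \min\{a-c,\, b-c\}$. Your closing remark on why the statement is only an inequality is also consistent with the paper's discussion around Theorem~\ref{thm:generalise_connect}.
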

  
\subsubsection{Typical dimension of the twisted square of a subspace
    subcode of a RS code}

On the other hand, subspace subcodes of Reed--Solomon codes have a
different behaviour. Indeed, Theorem~\ref{thm:generalise_connect} 
yields the following result.

\begin{cor}\label{cor:dim_sq_SSRS}
  Given a GRS code $\CC = \GRS{k}{\xv}{\yv}$ and an
  $\Fq$--subspace  $\SS \subseteq \Fqm$ of dimension $\lambda < m$
  such that $\SS^2 = \Fqm$.
  Denote by $\KLM$ the set introduced in~(\ref{eq:J0}). 
  Then,
\begin{equation}\label{eq:dim_square_SSRS}
    \dim_{\Fq} \left(
      \tsq{
        \sh{\ExpCode_{\BS}(
          \SScode{\CC}{\SS}
          )}
        {\KLM}}
    \right) \leq \min \{mn,m(2k-1)\}.
\end{equation}  
\end{cor}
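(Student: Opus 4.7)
\medskip

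\noindent\textbf{Proof plan.} The strategy is to chain the inclusion provided by Theorem~\ref{thm:generalise_connect} with the classical computation of the square of a GRS code from Proposition~\ref{prop:sq_RS}, and then to argue that the $\ExpCode$ operator preserves $\Fq$--dimensions. At no point will we need to understand the finer structure of $\sqb{\CCSS}_{\Fq}$; only a crude upper bound is required.

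First, I would apply Theorem~\ref{thm:generalise_connect} to the subspace subcode $\SScode{\CC}{\SS}$, yielding the inclusion
\[
\sh{\tsqb{\ExpCode_{\BS}(\SScode{\CC}{\SS})}}{\KLM} \ \subseteq\ \ExpCode_{\BSp}\!\left(\sqb{\CCSS}_{\Fq}\right),
\]
so that the $\Fq$--dimension of the left-hand side is at most that of the right-hand side. (As is clear from context, the shortening is to be applied after the twisted square, as in Theorem~\ref{thm:generalise_connect}.) Because $\BSp$ is an $\Fq$--basis of $\Fqm$, the expansion map $\ExpVect_{\BSp}$ is an $\Fq$--linear isomorphism $\Fqm^n \to \Fq^{mn}$, hence
\[
\dim_{\Fq} \ExpCode_{\BSp}\!\left(\sqb{\CCSS}_{\Fq}\right) \ = \ \dim_{\Fq} \sqb{\CCSS}_{\Fq}.
\]

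Second, I would bound $\dim_{\Fq}\sqb{\CCSS}_{\Fq}$ from above. Since $\CCSS$ is an $\Fq$--subspace of $\CC$, we have $\sqb{\CCSS}_{\Fq} \subseteq \sqb{\CC}_{\Fq}$. Moreover, any $\Fq$--linear combination of products is also an $\Fqm$--linear combination of the same products, so $\sqb{\CC}_{\Fq} \subseteq \sqb{\CC}_{\Fqm}$. By Proposition~\ref{prop:sq_RS}, the latter is the GRS code $\GRS{2k-1}{\xv}{\yv \star \yv}$ of $\Fqm$--dimension $\min(2k-1,n)$, and hence of $\Fq$--dimension $m\min(2k-1,n) \leq m(2k-1)$. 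Combining the two inclusions gives $\dim_{\Fq}\sqb{\CCSS}_{\Fq} \leq m(2k-1)$.

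Finally, the length of $\sh{\tsqb{\ExpCode_{\BS}(\SScode{\CC}{\SS})}}{\KLM}$ is ${\lambda+1\choose 2}n - |\KLM| = mn$, which provides the trivial bound $mn$ on its dimension. Taking the minimum of the two bounds yields \eqref{eq:dim_square_SSRS}. There is really no substantial obstacle here: the whole content of the statement is already contained in Theorem~\ref{thm:generalise_connect} and the fact that the $\Fq$--dimension of the square of a GRS code is linear in~$k$, while the analogous bound for a random code would be quadratic (cf.\ Corollary~\ref{cor:dim_twisted_square_shortened_random}), which is precisely what will drive the distinguisher in the next section.
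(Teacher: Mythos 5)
Your proof is correct and follows exactly the route the paper intends: the corollary is presented there as an immediate consequence of Theorem~\ref{thm:generalise_connect}, combined with the observation that $\sqb{\CCSS}_{\Fq}$ sits inside the square of the GRS code, whose $\Fq$--dimension is at most $m(2k-1)$, while the length $mn$ of the shortened code gives the other bound. Your reading of the shortening as applied after the twisted square (matching Theorem~\ref{thm:generalise_connect}) is also the intended one.
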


\subsubsection{The distinguisher}\label{subsec:distinguisher}
Putting the previous statements together, the twisted product provides
a distinguisher between expanded subspace subcodes of GRS codes and
expanded subspace subcodes of random
codes.

\begin{thm}\label{thm:distinguisher}
  Let $k$ be a positive integer,
  $\SS \subseteq \Fqm^n$ of dimension $\lambda \geq 2$ an
  $\Fq$--subspace such that $\SS^2 = \Fqm$, $\BS$ an $\Fq$--basis. Let
  $\code D$ be defined as $\ExpCode_{\BS}(\SScode{\CC}{\SS})$, where
  $\CC$ is either a random $[n, k]$ code over $\Fqm$ or an $[n, k]$
  GRS code over $\Fqm$. Suppose also that
  \begin{equation}\label{eq:condition_distinguisher}
  m(2k-1) < \min \left\{mn, {km - n(m-\lambda) + 1 \choose 2} -
    n\left({\lambda + 1 \choose 2} - m \right) \right\}.
  \end{equation}
  Then, assuming Conjecture~\ref{conj:twisted_sq}, the computation of
  $ \dim_{\Fq}\sh{\tsq{\code D}}{\KLM} $ provides a polynomial-time
  algorithm which decides whether $\CC$ is an RS code or a random code
  and succeeds with high probability. This extends straightforwardly
  to the case of multiple spaces and bases.
\end{thm}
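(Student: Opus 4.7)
The plan is to combine Corollary~\ref{cor:dim_sq_SSRS} with Corollary~\ref{cor:dim_twisted_square_shortened_random} to exhibit a gap between the values of $\dim_{\Fq}\sh{\tsq{\code D}}{\KLM}$ that the two candidate distributions on $\CC$ produce, and then to argue that this invariant can be computed in polynomial time. The soundness of the distinguisher is then simply a comparison of two explicit bounds.

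More concretely, when $\CC = \GRS{k}{\xv}{\yv}$, Corollary~\ref{cor:dim_sq_SSRS} gives the deterministic upper bound
\[
\dim_{\Fq}\sh{\tsq{\code D}}{\KLM} \,\leq\, \min\{mn,\, m(2k-1)\},
\]
whereas when $\CC$ is drawn uniformly at random, Corollary~\ref{cor:dim_twisted_square_shortened_random}, together with Conjecture~\ref{conj:twisted_sq}, yields with high probability
\[
\dim_{\Fq}\sh{\tsq{\code D}}{\KLM} \,\geq\, \min\left\{mn,\, {km - n(m-\lambda) + 1 \choose 2} - n\left({\lambda + 1 \choose 2} - m\right)\right\}.
\]
The assumption~\eqref{eq:condition_distinguisher} is precisely the statement that the GRS upper bound is strictly less than the random-code lower bound. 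Therefore, computing $\dim_{\Fq}\sh{\tsq{\code D}}{\KLM}$ and comparing it to the threshold $m(2k-1)$ provides a decision procedure: output \emph{GRS} if the dimension lies below this threshold and \emph{random} otherwise. By construction, the procedure is always correct on GRS inputs and, under Conjecture~\ref{conj:twisted_sq}, correct with high probability on random inputs.

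For the complexity analysis, given a generator matrix of $\code D$ of size at most $(km - n(m-\lambda)) \times \lambda n$ over $\Fq$, a spanning family of $\tsq{\code D}$ is obtained by forming the twisted products of every pair of generators, yielding at most $\binom{\dim_{\Fq}\code D + 1}{2}$ vectors of length $\binom{\lambda+1}{2} n$. Puncturing on the complement of $\KLM$ (to perform the shortening, after one first imposes zeros on $\KLM$ by linear elimination) and computing the rank are standard linear algebra steps; all sizes involved are polynomial in $n$, $m$, $\lambda$ and $\log q$.

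The main obstacle is the reliance on Conjecture~\ref{conj:twisted_sq}: the random-code lower bound used in the soundness argument is, at this stage, only supported by experiments and by the analogy with Theorem~\ref{thm:CCMZ} about ordinary squares of random codes. A rigorous proof would presumably proceed by adapting Cascudo--Cramer--Mirandola--Z\'emor's counting argument to the twisted product, which is non trivial because the twisted product involves a non-symmetric pairing defined block-by-block rather than coordinate-by-coordinate. Finally, the extension to the case of distinct subspaces $\SS_i$ and distinct expansion bases $\B_{\SS_i}$ requires no new idea, since the twisted product is defined blockwise and the two corollaries involved admit the block-by-block generalisation already emphasised in the excerpt.
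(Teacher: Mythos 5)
Your proof follows exactly the paper's route: the theorem is obtained by combining the GRS upper bound of Corollary~\ref{cor:dim_sq_SSRS} with the conjectural random-code lower bound of Corollary~\ref{cor:dim_twisted_square_shortened_random}, with condition~\eqref{eq:condition_distinguisher} guaranteeing the gap, and the dimension computed by pairwise twisted products, shortening at $\KLM$ and Gaussian elimination in polynomial time. Your additional remarks on the conjectural status of the random-code bound and on the blockwise extension to several subspaces and bases are consistent with the paper's treatment, so nothing is missing.
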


\begin{remark}\label{rem:rate}
  Condition (\ref{eq:condition_distinguisher}) entails in particular
  $2k \leq n$, which is a necessary condition for the distinguisher to
  succeed. Indeed, if $2k > n$, the square code of the GRS code spans
  the whole space $\Fqm^n$. Hence it cannot be distinguished from a
  random code.  When this condition is not met, it is sometimes
  possible to shorten the code so that the shortened code meets this
  condition. This is addressed in
  Section~\ref{subsec:broadening_by_shortening}.
\end{remark}

\subsubsection{Experimental results}
\label{ex:num_app}

Using the computer algebra software \emph{Sage}, we tested the
behaviour of the dimension of the twisted square (shortened at $\KLM$)
of subspace subcodes either of random codes or of RS codes. For each
parameter set (see Table~\ref{tab:param_tests}), we ran more than 100
tests and none of them yielded dimensions of the twisted square that
was different from the bounds given either by
Conjecture~\ref{conj:twisted_sq} or by
Corollary~\ref{cor:dim_sq_SSRS}.

\begin{table}[h] \centering
  \begin{tabular}{|c|c|c|c|c|c|c|c|} \hline \multirow{3}{*}{Parent code}&
\multirow{3}{*}{$q$} & \multirow{3}{*}{$m$} &
\multirow{3}{*}{$\lambda$} & \multirow{3}{*}{$n$} &
\multirow{3}{*}{$k$} & Bounds on & Actual\\ & & & & & &
the dimension&Dimension\\ & & & & & & of $\sh{\tsq{\CC}}{\KLM}$&of
$\sh{\tsq{\CC}}{\KLM}$\\ \hline Random & 7 & 3 & 2 & 120 & 55 & $\geq 360$
& 360 \\ \hline RS & 7 & 3 & 2 & 120 & 55 & $\leq 327$ & 327 \\ \hline Random
& 7 & 5 & 3 & 160 & 75 & $\geq 800$ & 800 \\ \hline RS & 7 & 5 & 3 & 160 & 75
& $\leq 745$ & 745 \\ \hline
  \end{tabular} \medskip
  \caption{Parameter sets for the tests. The code $\CC$ is the
    shortening at $m-\lambda$ positions per block of the expansion of
    a parent code. The parent code is either random or a Reed--Solomon
    code, as indicated in the first column of the table. The
    penultimate column gives the bounds on the dimension of the
    twisted square code shortened at $\KLM$: a lower bound for random
    codes (Corollary~\ref{cor:dim_twisted_square_shortened_random})
    and an upper bound for SSRS codes
    (Corollary~\ref{cor:dim_sq_SSRS}).  The last column gives the
    actual dimension of the twisted square code computed using
    \emph{Sage}. For each set of parameters, at least 100 tests were
    run and the actual dimension never differed from the bounds.  We
    observe in particular that the bounds stated in
    Corollaries~\ref{cor:dim_twisted_square_shortened_random}
    and~\ref{cor:dim_sq_SSRS} are typically equalities.}
  \label{tab:param_tests}
\end{table}

In particular, these experiments show that bounds
(\ref{eq:typical_square}) and~(\ref{eq:dim_square_SSRS}) are typically
equalities. Note that this observation is not necessary to distinguish
the codes but it will be useful for the attack presented in
Section~\ref{sec:attack}.

\subsubsection{Broadening the range of the distinguisher by
  shortening}\label{subsec:broadening_by_shortening}

Similarly to the works \cite{CGGOT14,COT17}, the range of the
distinguisher can be broadened by shortening the public code.  This
can make the distinguisher work in some cases when $2k \geq n$. The
idea is to shorten some blocks of length $\lambda$ (corresponding to a
given position of the original code in $\Fqm^n$). For each shortened
block the degree $k$ is decreased by~$1$. Indeed, from
Lemma~\ref{lem:pu-sh-commute} shortening a whole block corresponds to
shortening the corresponding position of the parent code over
$\Fqm$. 

Let us investigate the condition for this to work. Let $s_0$ be the
least positive integer such that $2(k-s_0)-1 < n-s_0$, \ie
\[
s_0\eqdef 2k-n.
\]
If one shortens the public code at $s \geq s_0$ blocks, which
corresponds to $s(m-\lambda)$ positions, we can apply
Theorem~\ref{thm:distinguisher} on the shortened code. The condition
of the theorem becomes
\[
  m(2(k-s)-1)  < 
  \min \left\{m(n-s), {m(k-s)- (n-s)(m-\lambda) + 1 \choose 2} - (n-s)\left(
      {\lambda + 1 \choose 2} - m\right)\right\}.
\]

\begin{example} Consider the parameters of XGRS in the first row of
  Table~\ref{tab:parameters}. Suppose we shorten $s = 820$ blocks of
  the public key (\ie $1260$ positions of the parent GRS code). It
  corresponds to reduce to $n' = n-s = 438$ and $k' = k-s = 211$.  The
  shortened public key will have dimension $195$.  Thus, the twisted
  square of the shortened public key will typically have dimension
  $1263$ while the twisted square of an expanded subspace subcode of a
  random code would have full length, \ie $3 (n-s) = 1314$.
\end{example}

\subsubsection{Limits of the distinguisher: the ``$m/2$ barrier''.}
Suppose that $\lambda \leq \frac m 2$ and let $\CC$ be a GRS code of
dimension $k$ and $\SS$ a subspace of dimension $\lambda$ such that
the SSRS code reaches the typical dimension (see
Propositions~\ref{prop:SS_parameters} and~\ref{prop:typical_dim_SS}),
\ie $\dim_{\Fq} \CCSS = km - n(m-\lambda)$.

For this dimension to be positive, we must have
\[ k > n\left(1 - \frac{\lambda}{m}\right) > \frac n 2\cdot \]

This is incompatible with the necessary condition $2k\leq n$ (see
Remark~\ref{rem:rate}) and cannot be overcome by shortening blocks as
described in Section~\ref{subsec:broadening_by_shortening}.  Hence,
\textbf{whenever $\lambda \leq m/2$, the distinguisher is
  ineffective.}

\begin{remark} In \cite{COT14a,COT17} a distinguisher on so--called
  wild Goppa codes over quadratic extensions is established using the
  square code operation after a suitable shortening.  This corresponds
  precisely to the case $\lambda = 1$ and $m=2$ which, according to the
  previous discussion, should be out of reach of the distinguisher. The
  reason why this distinguisher is efficient for these parameters is
  precisely because the dimension of such codes significantly exceeds
  the lower bound of Proposition~\ref{prop:SS_parameters} (see
  \cite{SKHN76,COT14}).
\end{remark}

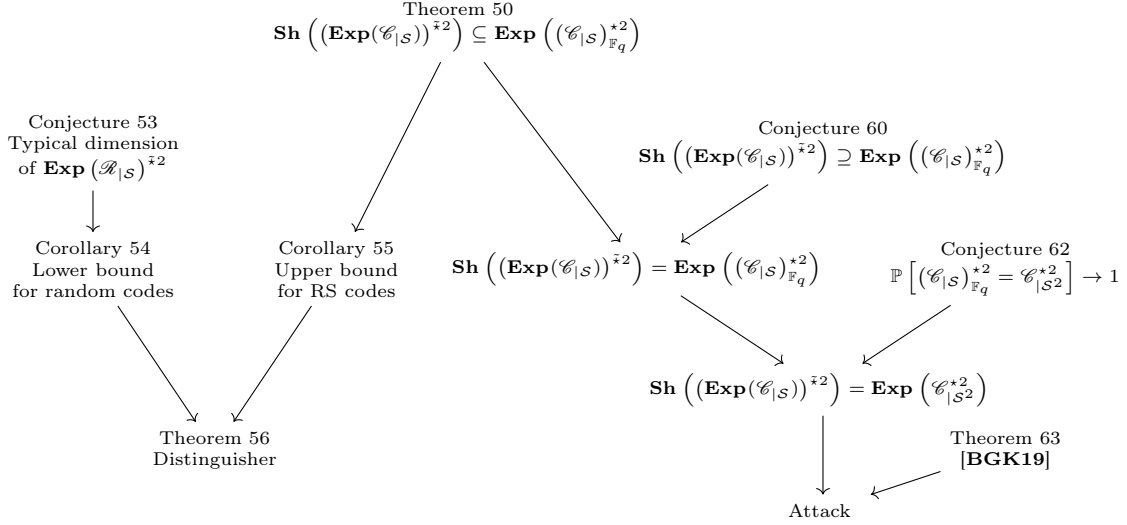
\begin{figure}[h!]
  \centering

    \begin{tikzpicture}[scale=0.8, every node/.style={minimum width=1.5cm, font=\scriptsize}]

    \node[align=center] (thm43) at (0,2)
    {
      Theorem~\ref{thm:generalise_connect}\\
      $\sh{\tsqb{\ExpCode (\SScode{\CC}{\SS})}}{} \subseteq
      \ExpCode_{}\left(\sqb{\CCSS}_{\Fq}\right)$
    };
    \node[align=center] (conj52) at (6,0) {
      Conjecture~\ref{conj:equality_generalise_connect}\\
      $\sh{\tsqb{\ExpCode(\SScode{\CC}{\SS})}}{} \supseteq
      \ExpCode_{}\left(\sqb{\CCSS}_{\Fq}\right)$
    };
    \node[align=center] (conj46) at (-6,0)
    {
      Conjecture~\ref{conj:twisted_sq}\\
      Typical dimension\\
      of $\ExpCode \tsqb{\RC_{|\SS}}$
    };
    \node[align=center] (equal1) at (3,-2)
    {
      $\sh{\tsqb{\ExpCode_{} (\SScode{\CC}{\SS})}}{} =
      \ExpCode_{}\left(\sqb{\CCSS}_{\Fq}\right)$
    };
    \node[align=center] (cor48) at (-2,-2)
    {
      Corollary~\ref{cor:dim_sq_SSRS}\\
      Upper bound\\
      for RS codes
    };
    \node[align=center] (cor47) at (-6,-2) {
      Corollary~\ref{cor:dim_twisted_square_shortened_random}\\
      Lower bound\\
      for random codes
    };
    \node[align=center] (thm49) at (-4,-5) {
      Theorem~\ref{thm:distinguisher}\\
      Distinguisher
    };
    \node[align=center] (conj54) at (9,-2)
    {
      Conjecture~\ref{conj:equality_sq_SScode}\\
      $\Prob{\sqb{\CCSS}_{\Fq} = \sq{\CC}_{|\SS^2}} \to 1$
    };
    \node[align=center] (equal2) at (6,-4)
    {
      $\sh{\tsqb{\ExpCode_{} (\SScode{\CC}{\SS})}}{} =
      \ExpCode_{}\left(\sq{\CC}_{|\SS^2}\right)$
    };
    \node[align=center] (thm57) at (9,-5)
    {
      Theorem~\ref{thm:SS_BGK}\\
      \cite{BGK19}
    };
    \node[align=center] (attack) at (6,-6)
    {
      Attack
    };

    \draw[->] (thm43) -- (equal1);
    \draw[->] (thm43) -- (cor48);
    \draw[->] (conj52) -- (equal1);
    \draw[->] (equal1) -- (equal2);
    \draw[->] (conj54) -- (equal2);
    \draw[->] (equal2) -- (attack);
    \draw[->] (thm57) -- (attack);
    \draw[->] (cor48) -- (thm49);
    \draw[->] (cor47) -- (thm49);
    \draw[->] (conj46) -- (cor47);
  \end{tikzpicture}

%%% Local Variables:
%%% mode: latex
%%% TeX-master: "../xgrs"
%%% End:
  
  \caption{Informal summary of the statements. Any statement is the consequence of the statements pointing to it.}
  \label{fig:recap}
\end{figure}

%%% Local Variables:
%%% mode: latex
%%% TeX-master: "../xgrs"
%%% End:

\section{Attacking the SSRS scheme}
\label{sec:attack}

In this section, we describe how to use these tools to attack the SSRS
scheme. For the sake of convenience, we first focus on the parameters
with $m=3$, $\lambda=2$ and then discuss the general case.

\subsection{Further conjectures for the attack}
As explained in Section~\ref{ex:num_app} our experiments show that
Inequalities (\ref{eq:typical_square}) and~(\ref{eq:dim_square_SSRS})
are typically equalities. This encourages us to state the following
two conjectures.

\begin{conjec}\label{conj:equality_generalise_connect}
  Let $\SS, \B_\SS, \B_{\SS^2}, \KLM$ be as in
  Theorem~\ref{thm:generalise_connect} and suppose that
  Equation~\eqref{eq:condition_distinguisher} is satisfied. If $\CC$
  is an $[n,k]$ GRS code, then, with high probability, the inclusion
  of Equation~\eqref{eq:twisted_sq_exp_code} is an equality, \ie
  \[
    \sh{\tsqb{\ExpCode_{\BS} (\SScode{\CC}{\SS})}}{\KLM} =
    \ExpCode_{\BSp}\left(\sqb{\CCSS}_{\Fq}\right).
  \]
\end{conjec}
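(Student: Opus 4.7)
The plan is a dimension-comparison argument. Theorem~\ref{thm:generalise_connect} already furnishes the inclusion $\sh{\tsqb{\ExpCode_{\BS}(\SScode{\CC}{\SS})}}{\KLM} \subseteq \ExpCode_{\BSp}(\sqb{\CCSS}_{\Fq})$; since both sides are $\Fq$-linear, proving equality amounts to matching $\Fq$-dimensions.

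First, I would identify the $\Fq$-dimension of the right-hand side exactly. Under Conjecture~\ref{conj:equality_sq_SScode}, one has $\sqb{\CCSS}_{\Fq} = \sq{\CC}_{|\SS^2}$ with high probability. The requirement that $\BSp$ be a basis of $\Fqm$ forces $\SS^2 = \Fqm$, so the restriction to $\SS^2$ is vacuous and $\sqb{\CCSS}_{\Fq} = \sq{\CC}$. By Proposition~\ref{prop:sq_RS}, $\sq{\CC} = \GRS{2k-1}{\xv}{\yv \star \yv}$, whose $\Fqm$-dimension is $2k-1$ in the regime $2k \leq n$, a regime forced by hypothesis~\eqref{eq:condition_distinguisher} (see Remark~\ref{rem:rate}). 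Since expansion over a basis is an $\Fq$-linear isomorphism on $\Fqm^n$, this yields $\dim_{\Fq} \ExpCode_{\BSp}(\sqb{\CCSS}_{\Fq}) = m(2k-1)$. Corollary~\ref{cor:dim_sq_SSRS} gives the matching upper bound $\dim_{\Fq}\sh{\tsqb{\ExpCode_{\BS}(\CCSS)}}{\KLM} \leq m(2k-1)$, so combined with the Theorem~\ref{thm:generalise_connect} inclusion we already get $\dim_{\Fq}\mathrm{LHS} \leq m(2k-1) = \dim_{\Fq}\mathrm{RHS}$.

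The remaining and main step is the reverse dimension inequality $\dim_{\Fq}\sh{\tsqb{\ExpCode_{\BS}(\CCSS)}}{\KLM} \geq m(2k-1)$, which would then collapse everything to equality. Equivalently, one must show that shortening $\tsq{\ExpCode_{\BS}(\CCSS)}$ at the $\KLM$ positions drops its $\Fq$-dimension by exactly $n\left(\binom{\lambda+1}{2} - m\right)$, i.e.\ that the projection of $\tsq{\ExpCode_{\BS}(\CCSS)}$ onto the $\KLM$ coordinates is surjective for generic parameters. The natural strategy is to exploit multiplicative redundancies among evaluations of monomials of $\Fqm[X]_{<k}$ (identities of the form $X^a \cdot X^{b+1} = X^{a+1} \cdot X^b$) to produce zero-relations $\sum_j \av_j \star \bv_j = 0$ in $\sq{\CC}$ whose twisted-product counterparts $\sum_j \ExpVect_{\BS}(\av_j) \tstar \ExpVect_{\BS}(\bv_j)$ nonetheless contribute prescribed nonzero coefficients at $\KLM$ positions. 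These coordinates encode exactly the ambiguity in representing elements of $\SS^2$ via the overcomplete family $(\gamma_r \gamma_s)_{0 \leq r \leq s \leq \lambda - 1}$ rather than the basis $\BSp$.

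The hard part is this surjectivity step: proving that the above zero-relations generically span the full $\KLM$ projection requires a Vandermonde-style independence argument that simultaneously controls the evaluation points $\xv$, the multiplier $\yv$, and the combinatorics of the pairs $(r,s)$ indexing $\KLM$. The experimental observations of Section~\ref{ex:num_app} that the bound of Corollary~\ref{cor:dim_sq_SSRS} is typically attained with equality are precisely the statement needed, so the conjecture is at least numerically well-founded, and the above reduction isolates the single combinatorial obstacle to a full proof.
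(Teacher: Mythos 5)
This statement is not proved in the paper at all: it is stated as Conjecture~\ref{conj:equality_generalise_connect}, supported only by the numerical experiments of Section~\ref{ex:num_app} and used as a standing assumption of the attack (see Figure~\ref{fig:recap}). Your proposal does not close that gap either. The part of your argument that is carried out is conditional on Conjecture~\ref{conj:equality_sq_SScode}, which is a second, independent unproven assumption of the paper and is not among the hypotheses of the statement; and the decisive step, the reverse inequality $\dim_{\Fq}\sh{\tsqb{\ExpCode_{\BS}(\CCSS)}}{\KLM} \geq m(2k-1)$, is explicitly left open (``the hard part is this surjectivity step''). So what you have is a conditional reduction that essentially restates the conjecture rather than a proof of it.

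Beyond the admitted gap, the reduction itself is imprecise at the key point. Writing $\AC \eqdef \ExpCode_{\BS}(\CCSS)$, one has $\dim_{\Fq}\sh{\tsq{\AC}}{\KLM} = \dim_{\Fq}\tsq{\AC} - \dim_{\Fq} P$, where $P$ is the image of $\tsq{\AC}$ under the projection onto the coordinates indexed by $\KLM$. Hence surjectivity of that projection is not equivalent to the lower bound you need: to conclude $\dim_{\Fq}\sh{\tsq{\AC}}{\KLM} \geq m(2k-1)$ you must also control $\dim_{\Fq}\tsq{\AC}$ from below, and the paper provides no such bound for subspace subcodes of GRS codes (Conjecture~\ref{conj:twisted_sq} concerns random codes only); conversely the desired equality could in principle hold with a non-surjective projection. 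Note also that when ${\lambda+1 \choose 2} = m$ (in particular $\lambda=2$, $m=3$, the XGRS Type~I case) the set $\KLM$ is empty and equality already holds unconditionally by Theorem~\ref{thm:twisted_sq_exp_code}, so the entire content of the conjecture lies in the regime ${\lambda+1 \choose 2} > m$ — precisely where your sketched strategy (relations $X^aX^{b+1}=X^{a+1}X^b$ plus a Vandermonde-style independence argument) stops short of an actual argument. The correct assessment of the statement is that it remains a conjecture, for you just as for the authors.
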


In addition, the right--hand term of the last equality satisfies the
following inclusion.

\begin{lem}\label{lem:inc_sq_SScode}
  Let $\CC \subseteq \Fqm^n$ and $\SS \subseteq \Fqm$ be an
  $\Fq$--vector space. Then
  \begin{equation}\label{eq:sq_SScode}
    \sqb{\CCSS}_{\Fq} \subseteq \left(\sq{\CC}\right)_{|\SS^2}.
  \end{equation}
\end{lem}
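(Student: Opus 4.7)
The plan is to verify the inclusion directly on an $\Fq$--generating family of the left--hand side. By definition, $\sqb{\CCSS}_{\Fq}$ is spanned over $\Fq$ by the elementary products $\av \star \bv$ with $\av, \bv \in \CCSS$. Since $\left(\sq{\CC}\right)_{|\SS^2}$ is $\Fq$--linear, it suffices to show that every such elementary product belongs to $\left(\sq{\CC}\right)_{|\SS^2}$.

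So, fix $\av, \bv \in \CCSS$ and set $\cv \eqdef \av \star \bv$. First, I would observe that $\cv \in \sq{\CC}$: this is immediate from $\CCSS \subseteq \CC$ and from the very definition of the star product of $\CC$ with itself. Second, I would check that every entry of $\cv$ lies in $\SS^2$: since $\av, \bv \in \CCSS$, we have $a_i, b_i \in \SS$ for every index $i$, hence $c_i = a_i b_i$ belongs to $\SS^2$ by the definition of the square subspace $\SS^2 = \Fqspan{xy ~|~ x, y \in \SS}$. These two observations together give $\cv \in \left(\sq{\CC}\right)_{|\SS^2}$, which concludes the argument.

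There is no real obstacle here: the statement is essentially a tautology once the definitions of $\CCSS$, of the star product, and of $\SS^2$ are written down. The only minor point worth stressing in the write--up is the $\Fq$--linearity step at the end, which is needed because a generic element of $\sqb{\CCSS}_{\Fq}$ is an $\Fq$--linear combination of elementary products rather than a single such product; but since $\left(\sq{\CC}\right)_{|\SS^2}$ is an $\Fq$--linear subcode, this extension is automatic.
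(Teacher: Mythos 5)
Your proof is correct and follows essentially the same route as the paper: check the inclusion on the $\Fq$--generators $\av \star \bv$ with $\av, \bv \in \CCSS$, noting that such a product lies in $\sq{\CC}$ and has all entries in $\SS^2$, then conclude by $\Fq$--linearity of the right--hand side. Nothing further is needed.
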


\begin{proof}
  It suffices to observe that the result holds on $\Fq$--generators.
  Let $\av, \bv \in \CCSS$. Then, $\av \star \bv \in \sq{\CC}$.  In
  addition, for any $i \in \{0, \dots, n-1\}$, we have
  $(\av \star \bv)_i \in \SS^2$. Thus,
  $\av \star \bv \in \SScode{(\sq{\CC})}{\SS^2}$.
\end{proof}

Moreover, Inclusion~(\ref{eq:sq_SScode}) turns out to be typically an equality
in the case of GRS codes as suggested by the following conjecture.

\begin{conjec}\label{conj:equality_sq_SScode}
  Let $\SS, \B_\SS, \B_{\SS^2}, \KLM$ be as in
  Theorem~\ref{thm:generalise_connect} and suppose that
  Equation~\eqref{eq:condition_distinguisher} is satisfied. If $\CC$
  is an $[n,k]$ GRS code, then, with high probability, the inclusion
  of Equation~\eqref{eq:sq_SScode} is an equality, \ie
  \[
    \sqb{\CCSS}_{\Fq} = \left(\sq{\CC}\right)_{|\SS^2}.
  \]
\end{conjec}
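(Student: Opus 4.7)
The plan is to translate the conjecture into a dimension equality for $\Fq$-vector spaces of polynomials, and then to show that the relevant square span achieves its maximum size. First, observe that since $\SS^2 = \Fqm$, the condition $c_i \in \SS^2$ is vacuous, so
\[
\left(\sq{\CC}\right)_{|\SS^2} \;=\; \sq{\CC} \;=\; \GRS{2k-1}{\xv}{\yv \star \yv}
\]
by Proposition~\ref{prop:sq_RS}. The right--hand side has $\Fq$-dimension $m(2k-1)$ provided $2k-1 \leq n$, which is guaranteed by condition~\eqref{eq:condition_distinguisher} (see Remark~\ref{rem:rate}). Combined with the inclusion $\sqb{\CCSS}_{\Fq} \subseteq (\sq{\CC})_{|\SS^2}$ from Lemma~\ref{lem:inc_sq_SScode}, the conjecture reduces to showing $\dim_{\Fq} \sqb{\CCSS}_{\Fq} \geq m(2k-1)$.

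Next, I would reformulate the statement in terms of polynomials. Define
\[
V_\SS \eqdef \{ f \in \Fqm[X]_{<k} \,|\, y_i f(x_i) \in \SS,\ \forall i \in \IInt{0}{n-1}\},
\]
an $\Fq$-subspace of $\Fqm[X]_{<k}$. Under $k \leq n$, the evaluation map $f \mapsto (y_0 f(x_0), \ldots, y_{n-1} f(x_{n-1}))$ is $\Fq$-linear and injective (even over $\Fqm$), hence identifies $V_\SS$ with $\CCSS$ as $\Fq$-vector spaces, and identifies the $\Fq$-span $V_\SS \cdot V_\SS$ of pairwise products with $\sqb{\CCSS}_{\Fq}$ sitting inside $\GRS{2k-1}{\xv}{\yv \star \yv}$. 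Thus it suffices to prove that $V_\SS \cdot V_\SS = \Fqm[X]_{<2k-1}$, or equivalently $\dim_{\Fq}(V_\SS \cdot V_\SS) \geq m(2k-1)$. The idea would then be to exhibit, generically, enough $\Fq$-linearly independent products: pick an $\Fq$-basis $(f_1, \dots, f_d)$ of $V_\SS$ with $d = km - n(m-\lambda)$ and analyse the symmetric products $f_i f_j$ in $\Fqm[X]_{<2k-1}$. In the regime~\eqref{eq:condition_distinguisher}, the count ${d+1 \choose 2}$ of such products comfortably exceeds the target $m(2k-1)$, so the question is one of linear independence.

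The main obstacle is precisely this independence question. The subspace $V_\SS$ is not a random $\Fq$-subspace of $\Fqm[X]_{<k}$: it is the kernel of the highly structured composition of evaluation at the distinct points $x_i$ with projection modulo $\SS$. Establishing that the $\Fq$-span of pairwise products achieves the maximal dimension $m(2k-1)$ would amount to a genericity result in the spirit of Theorem~\ref{thm:CCMZ}, but adapted to constrained $\Fq$-subspaces of polynomial rings rather than random linear codes. A plausible route is a probabilistic argument over the random key-generation choices: one picks $\Fq$-linear functionals on $\Fqm[X]_{<2k-1}$ vanishing on $V_\SS \cdot V_\SS$, expresses the vanishing as polynomial identities in the unknowns $\xv$ (and the basis vectors of $\SS$), and invokes a Schwartz--Zippel-type bound to conclude that no nontrivial relation holds with high probability. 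Carrying out such an argument rigorously, while controlling the failure probability and the interaction between the evaluation constraints and the multiplicative structure, is nontrivial; this is what prevents the statement from being upgraded to a theorem in the present paper, although the experimental evidence in Table~\ref{tab:param_tests} strongly supports it.
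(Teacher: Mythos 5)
The statement you are addressing is not proved in the paper at all: it is stated as Conjecture~\ref{conj:equality_sq_SScode}, supported only by the numerical experiments of Table~\ref{tab:param_tests} and then used as a working hypothesis for the attack of Section~\ref{sec:attack}. So there is no proof in the paper against which your argument can be matched, and indeed your proposal does not supply one either. What you do establish correctly is a clean reduction: since $\SS^2 = \Fqm$, the restriction to $\SS^2$ is vacuous, so by Proposition~\ref{prop:sq_RS} the right-hand side is $\GRS{2k-1}{\xv}{\yv \star \yv}$, of $\Fq$-dimension $m(2k-1)$ because condition~\eqref{eq:condition_distinguisher} forces $2k \leq n$ (Remark~\ref{rem:rate}); combined with Lemma~\ref{lem:inc_sq_SScode}, the conjecture is therefore equivalent to the lower bound $\dim_{\Fq} \sqb{\CCSS}_{\Fq} \geq m(2k-1)$, and your polynomial reformulation via $V_\SS$ is legitimate since evaluation is injective on $\Fqm[X]_{<2k-1}$ when $2k-1 \leq n$. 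This framing in terms of the $\Fq$-span of products of a structured $\Fq$-subspace of $\Fqm[X]_{<k}$ is a reasonable way to think about the conjecture and does not appear explicitly in the paper.

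The genuine gap is the one you yourself flag: the entire content of the statement is the claim that the products $fg$, for $f,g \in V_\SS$, span all of $\Fqm[X]_{<2k-1}$ over $\Fq$ with high probability, and your proposal offers only a sketch of a possible route (a Schwartz--Zippel-type argument over the key-generation randomness) without exhibiting the polynomial identities in question, a nonvanishing specialisation, or any control of the failure probability. Counting that ${d+1 \choose 2}$ exceeds $m(2k-1)$ says nothing by itself, precisely because $V_\SS$ is far from a generic $\Fq$-subspace; the interaction between the evaluation constraints defining $V_\SS$ and the multiplicative structure is the whole difficulty, and it is left untouched. In short, your text is a correct restatement and localisation of the problem rather than a proof, which is consistent with the fact that the authors themselves could only state the claim as a conjecture backed by experiments.
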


\subsection{The case $m=3$ and $\lambda = 2$}
\label{subsec:m=3_lambda=2}

\subsubsection{Constructing the square code}

Let $\Cpub$ be the public code of an instance of the SSRS scheme.
This code is described by a generator matrix $\Gpub$ which is the only
data we have access to.  We know that there exist unknown spaces
$\SS_0, \dots, \SS_{n-1}$ with bases $\B_{\SS_i} = (b_{i,0},b_{i,1})$
and an RS code over $\Fqm$ such that
\[ \Cpub =
  \ExpCode_{\left(\B_{\SS_i}\right)_i}\left(\SScode{\RS{k}{\xv}}{{(\SS_i)}_i}\right).
\]

We can compute the generator matrix of the twisted square code
$\Cpub^{\tstar 2}$, which according to
Theorem~\ref{thm:twisted_sq_exp_code} is equal to
\[
  \ExpCode_{\left(\B_{\SS_i^2}\right)_i}
  \left(\sqb{\RS{k}{\xv}_{|(\SS_i)_i}}_{\Fq}\right),
\]
where $\B_{\SS_i^2} \eqdef (b_{i,0}^2, b_{i,0} b_{i,1}, b_{i,1}^2)$.
Moreover, assuming Conjecture~\ref{conj:equality_sq_SScode}, this code is
likely to be equal to
\[
  \ExpCode_{\left(\B_{\SS_i^2}\right)_i} \left(\RS{2k-1}{\xv}\right).
\]
It is important to stress that, at this stage, we do not know the
value of $\xv$ nor the $\B_{\SS_i}$ or the $\B_{\SS^2_i}$.

\subsubsection{Finding the value of $\xv$}

We now have access to a fully expanded RS code (and not a subspace
subcode) and want to use this to find the value of $\xv$. In fact, the
authors of \cite{BGK19} propose an algorithm to solve this problem, by
using a generalisation of the algorithm of Sidelnikov and Shestakov
\cite{SS92} to recover the structure of GRS codes.

\begin{thm} \cite[\S~IV.B]{BGK19}\label{thm:SS_BGK} Let
  $\xv = (x_0, \dots, x_{n-1}) \in \Fqm^n$ be a vector with distinct
  entries and $\B_0, \dots, \B_{n-1}$ be an $n$--tuple of $\Fq$--bases
  of $\Fqm$. Let
  \[ \CC = \ExpCode_{{(\B_i)}_i}(\RS{k}{\xv}).\] There exists a polynomial
  time algorithm which
  \begin{itemize}
  \item[] {\bf takes as inputs} $\CC$, three distinct elements
    $x_0', x_1', x_2' \in \Fqm$ and an $\Fq$--basis $\B_0'$ of $\Fqm$;
  \item[] {\bf and returns}    $x_3', \dots, x_{n-1}' \in \Fqm^n$ and
  $\Fq$--bases $(\B_1', \dots, \B_{n-1}')$ of $\Fqm$ such that
  \[ \CC = \ExpCode_{(\B'_0, \dots, \B'_{n-1})}(\RS{k}{(x'_0, \dots, x'_{n-1})}).\]
  \end{itemize}
\end{thm}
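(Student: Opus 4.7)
The plan is to generalise the classical Sidelnikov--Shestakov algorithm \cite{SS92} to expanded Reed--Solomon codes. The first step is to identify the freedoms in the parametrisation $(\xv, (\B_i)_i)$ of $\CC = \ExpCode_{(\B_i)_i}(\RS{k}{\xv})$. Combining Proposition~\ref{prop:scalar}, Lemma~\ref{lem:basis-change} and Lemma~\ref{lem:Fqm-scalar-multiplication}, one checks that $\CC$ is invariant under (i) a joint projective action $x_i \mapsto (ax_i+b)/(cx_i+d)$ on the support together with a matching multiplier, (ii) a per-position scalar $y_i \in \Fqm^\times$ absorbed into the basis as $\B_i \mapsto y_i \B_i$, and (iii) an $\Fq$-linear change of basis $\B_i \mapsto \B_i \cdot \transpose{(\Qm_i^{-1})}$ at each block. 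Since $\mathrm{PGL}_2(\Fqm)$ acts sharply $3$-transitively on $\Fqm \cup \{\infty\}$, prescribing $(x_0', x_1', x_2')$ pins down the projective normalisation, while the residual scaling and basis-change freedoms at position $0$ can be used to force the first basis to equal $\B_0'$. This reduces the theorem to a rigid reconstruction problem.

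The second step is to recover the hidden $\Fqm$-linear structure from the bare $\Fq$-linear code $\CC$. For each block $i$, multiplication by $\gamma \in \Fqm$ on the $i$-th $\Fqm$-coordinate induces, through the unknown basis $\B_i$, an $\Fq$-linear endomorphism of $\Fq^m$; collecting these endomorphisms as $\gamma$ ranges over $\Fqm$ yields an $\Fq$-algebra embedding of $\Fqm$ into $\mathrm{End}_{\Fq}(\Fq^m)$ that must preserve $\CC$. I would search block-wise for such compatible multiplication structures using linear algebra over $\Fq$. Because $\CC$ has the generic dimension of an expanded RS code (namely $km$), the compatible multiplication structure on each block is essentially unique modulo the freedoms (ii)--(iii); any valid choice, after squeezing, produces an $\Fqm$-linear code which is a Reed--Solomon code $\RS{k}{\xv''}$ for some support $\xv''$.

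Once an $\Fqm$-linear Reed--Solomon code is in hand, the classical Sidelnikov--Shestakov algorithm reconstructs $\xv''$ from dimensions of suitably shortened subcodes, up to the action of $\mathrm{PGL}_2(\Fqm)$ on the support. Imposing $x_i'' = x_i'$ for $i = 0, 1, 2$ then fixes the M\"obius transformation uniquely and yields $\xv' = (x_0', \ldots, x_{n-1}')$. Finally, for each $i \geq 1$, the basis $\B_i'$ is reconstructed by solving an $m \times m$ $\Fq$-linear system matching the $i$-th block of a generator matrix of $\CC$ to the expansion of the evaluation map $f \mapsto f(x_i')$ in an unknown basis of $\Fqm$. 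The main obstacle is step two: carefully extracting a canonical $\Fqm$-multiplication structure block by block without any prior knowledge of the $\B_i$'s, and arguing that up to the freedoms (ii)--(iii) the recovered structure is the correct one. This is the technical heart of the argument carried out in \cite[\S IV.B]{BGK19}.
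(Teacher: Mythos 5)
The paper does not actually prove this statement: Theorem~\ref{thm:SS_BGK} is imported wholesale from \cite[\S~IV.B]{BGK19}, and the only indication given is that the algorithm, like Sidelnikov--Shestakov, starts from a systematic generator matrix of the expanded code and deduces the hidden structure from relations satisfied by the $m\times m$ blocks of its non-systematic part. Your sketch takes a genuinely different route --- first recover the hidden $\Fqm$-multiplication structure on the blocks, squeeze, then run the classical Sidelnikov--Shestakov algorithm --- which is closer in spirit to the \emph{guess and squeeze} procedure of Section~\ref{sec:guess_and_squeeze} of the paper, with a linear-algebraic search replacing the enumeration of bases. Your first step (identifying the freedoms: M\"obius action on the support, per-position scalars and $\Fq$-basis changes absorbed into the $\B_i$'s) is sound and consistent with Proposition~\ref{prop:scalar}, Lemma~\ref{lem:basis-change} and Lemma~\ref{lem:Fqm-scalar-multiplication}.

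However, as a proof the proposal has a genuine gap, and you name it yourself: the second step is only asserted. Moreover, as written it is not well-posed. Multiplication by $\gamma\in\Fqm$ on the $i$-th coordinate \emph{alone} does not preserve $\RS{k}{\xv}$, hence the induced endomorphism of the $i$-th block does not preserve $\CC$; only the simultaneous action of a common $\gamma$ on all blocks does. So there is no purely ``block-wise'' search: what must be computed is the algebra of block-diagonal matrices in $\prod_{i}\Fq^{m\times m}$ stabilising $\CC$, and one must then prove that, in the relevant parameter range, this algebra is exactly an embedded copy of $\Fqm$. This is not automatic: after shortening down to very few blocks the stabiliser algebra is much larger, and even for the full code the ``essential uniqueness modulo the freedoms (ii)--(iii)'' must account for Frobenius-twisted representations, since $\ExpCode_{(\B_i)_i}(\RS{k}{\xv})=\ExpCode_{(\B_i^q)_i}(\RS{k}{\xv^q})$ (harmless for correctness, but it shows the uniqueness claim needs care). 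Since this identification of the $\Fqm$-structure is precisely the technical heart and you defer it entirely to \cite{BGK19}, the proposal is an outline of a plausible alternative algorithm rather than a proof; to make it complete you would either have to establish the stabiliser-algebra claim, or reproduce the block-relation argument of \cite[\S~IV.B]{BGK19} on the systematic generator matrix.
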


The principle of the algorithm is very similar to that of Sidelnikov
Shestakov. Starting from a systematic generator matrix of an expanded
Reed--Solomon code, the hidden structure of the RS code is deduced
from relations satisfied by the $m \times m$ blocks of the right hand
side of this systematic generator matrix.

\begin{remark}
  Theorem~\ref{thm:SS_BGK} asserts in particular that
  the choice of three values of the support together with one basis
  uniquely determines a pair $(\xv, {(\B_i)}_i)$ describing a code
  $\ExpCode_{{(\B_i)}_i}(\xv)$.
\end{remark}

Using this Theorem~\ref{thm:SS_BGK}, we obtain a vector $\xv'$ and
$\Fq$--bases $\B'_i$ of $\F_{q^3}$ such that
\[
  \Cpub^{\tstar 2} = \ExpCode_{{(\B'_i)}_i}(
  \RS{2k-1}{\xv'}).
\]

\begin{remark}
  Note that the value of $\xv'$ is not necessarily the same as the one
  contained in the secret key but we are looking for an equivalent
  secret key, \ie{} we only need a code description which
  allows us to decode.
\end{remark}

\subsubsection{Recovering a secret key}\label{subsec:back_to_Cpub}

Once $\xv'$ is found, there remains to find bases
$\B_{\SS_0'}, \dots, \B_{\SS_{n-1}'}$ of $2$--dimensional subspaces
$\SS_0', \dots, \SS_{n-1}' \subseteq \F_{q^3}$ such that
\[
  \Cpub = \ExpCode_{\left(\B_{\SS_i'}\right)_i}(\RS{k}{\xv'}).
\]
These bases can be obtained by solving a linear system. They are the
pairs
\[\B_{\SS_0'}=(b_0^{(0)}, b_0^{(1)}), \dots, \B_{\SS_{n-1}'} =
(b_{n-1}^{(0)}, b_{n-1}^{(1)})\]
such that
\[
  \SqueezeCode_{\left(\B_{\SS_i'}\right)_i}(\Cpub) \subseteq \RS{k}{\xv'},
\]
which can be equated as follows. Let $\Hm$ be a parity--check
matrix of $\RS{k}{\xv}$ and $\Gpub$ a generator matrix of $\Cpub$.
Let
\[
  \Bm =
  \begin{pmatrix}
    b_{0}^{(0)} &  & & (0) \\ 
    b_{0}^{(1)} &  & & \\
     & b_1^{(0)} & & \\
     & b_1^{(1)} & & \\
    & & \ddots & \\
    & & & b_{n-1}^{(0)}\\
    (0)& & & b_{n-1}^{(1)}
  \end{pmatrix}
  \in \F_{q^3}^{2n \times n}.
\]
The unknown entries of $\Bm$ are the solutions of the linear system
\begin{equation}\label{eq:final_system}
\Gpub \Bm \transpose{\Hm} = 0.
\end{equation}

There are
\begin{itemize}
\item $2n$ unknowns in $\F_{q^3}$ which yields $6n$ unknowns in $\Fq$;
\item for $(3k-n)(n-k) = O(n^2)$ equations.
\end{itemize}
Thus, the matrix $\Bm$ is very likely to be the unique solution up to
a scalar multiple. From this, we obtain a complete equivalent secret
key, which allows to decrypt any ciphertext.

\begin{remark}
  After presenting a polynomial time recovery of the structure of
  expanded GRS codes in \cite[\S~IV.B]{BGK19}, the extension to
  expanded SSRS codes is discussed \cite[\S~VI.C]{BGK19}. The
  suggested approach consists in performing a brute--force search on
  the expansion bases $\B_0,\dots, \B_{n-1}$.  But the cost of such an
  approach is exponential in $n$ and
  $\lambda$.
  Our use of the twisted square code allows to address the same problem
  in polynomial time.
\end{remark}

\subsubsection{Extending the reach of the attack by shortening blocks}
As explained in Section~\ref{subsec:broadening_by_shortening}, it may
happen that $\tsq{\Cpub} = \Fq^{3n}$, \ie the twisted square of the
public code equals the whole ambient space. In such a situation, the
distinguisher fails and so does the attack.  To overcome this issue,
it is sometimes possible to shorten a fixed number $s$ of blocks of
$\Cpub$ and apply the previous attack to this block--shortened code.

More precisely, let $\Ical \subseteq \IInt{0}{2n-1}$ be a set of
indices corresponding to a union of blocks, \ie of the form
$\Ical = \{2i_0, 2i_0+1, \dots, 2i_s, 2i_s+1\}$. We apply the previous
algorithm to the code $\sh{\Cpub}{\Ical}$ which returns
$((x'_i)_{i \notin \Ical}, (\B'_i)_{i \not \in \Ical})$ such that
\[
  \tsq{\sh{\Cpub}{\Ical}} = \ExpCode_{(\B'_i)_{i \notin
      \Ical}}((x'_i)_{i \notin \Ical}).
\]
Recall that the choice of three of the $x_i'$'s and one of the
$\B_i$'s entirely determines the other ones. Then, one can re-apply
the same process with another set of blocks $\Ical_1$ such that there
are at least $3$ that are neither in $\Ical_0$ nor in $\Ical_1$.  This
allows to deduce new values for $x_i$'s for
$i \in \Ical \setminus \Ical_1$.  And we repeat this operation until
$\xv'$ is entirely computed. Then, we proceed as in
Section~\ref{subsec:back_to_Cpub} to recover the rest of the secret
key.

\subsubsection{Application: attacking some parameters of the XGRS system}
\label{subsec:Appl_attack_XGRS}
The proposed attack efficiently breaks all parameters of Type I
proposed in \cite{KRW21} (\ie with $\lambda = 2$ and $m=3$).  Using a
{\em Sage} implementation, the calculation of $\tsq{\Cpub}$ takes a
few minutes.  Next, we obtained a full key recovery using the ``guess
and squeeze'' approach described further in
Section~\ref{sec:guess_and_squeeze} followed by a usual Sidelnikov
Shestakov attack. The overall attack runs in less than one hour for
keys corresponding to a claimed security level of 256 bits. The
previously described approach consisting in applying directly the
algorithm of \cite[\S~VI.B]{BGK19} on $\tsq{\Cpub}$ has not been
implemented but is probably even more efficient.

\subsection{The general case}
The attack presented in Section~\ref{subsec:m=3_lambda=2} generalises
straightforwardly (up to the following details) when the conditions of
Conjecture~\ref{conj:equality_generalise_connect} are met.

\begin{itemize}
\item According to Theorem~\ref{thm:generalise_connect}, we should no
  longer work with $\tsq{\Cpub}$ but with $\sh{\tsq{\Cpub}}{\KLM}$,
  where $\KLM$ is defined in Lemma~\ref{lem:twisted_product_gen}
  (\ref{eq:J0}).  Assuming
  Conjectures~\ref{conj:equality_generalise_connect}
  and~\ref{conj:equality_sq_SScode}, we deduce that this code is the
  expansion of a GRS code. Hence, the algorithm of
  \cite[\S~VI.B]{BGK19} can be applied to it.
\item The recovery of the subspaces and bases described in
  Section~\ref{subsec:back_to_Cpub} involves a matrix
  $\Bm \in \Fq^{\lambda n\times n}$ with $\lambda n$ nonzero entries,
  which will be the unknowns of the system~(\ref{eq:final_system}).
  Hence, this system has $\lambda n$ unknowns in $\Fqm$, \ie
  $\lambda mn$ unknowns in $\Fq$ for $(mk-n(m-\lambda))(n-k) = O(n^2)$
  equations. As the value of $m$ (and hence $\lambda$) remain very
  small compared to $n$, there is still in general a unique solution
  up to a scalar multiple.
\end{itemize}

\subsection{Summary of the attack}

The attack can be summarised by the following algorithms, depending on
the values of $k$ and $n$.

\begin{algorithm}[h!]
  \caption{The attack when $2k \leq n$}
  \label{alg:algo1}
  \begin{algorithmic}[1]
    \State {Compute $\sh{\tsq{\Cpub}}{\KLM}$, where $\KLM$ is
      the the union of the last ${\lambda+1 \choose 2} - m$
      positions of each block (see Lemma~\ref{lem:JLM} (\ref{eq:def_JLM}));}
    \State { Apply the algorithm of
      \cite[\S~VI.B]{BGK19} to recover a support $\xv$ of the parent
      Reed--Solomon code;}
    \State {Apply the calculations of
      Section~\ref{subsec:back_to_Cpub} to recover the bases $\B_i$.}
  \end{algorithmic}
\end{algorithm}

\begin{algorithm}[h!]
  \caption{Attack when $2k > n$}
  \label{alg:algo2}
  \begin{algorithmic}[1]
    \State {Choose a number $s$ of blocks to shorten
      satisfying condition (\ref{eq:condition_distinguisher}) so that
      the distinguisher succeeds.}
    \State {Pick a union of $s$ blocks $\Ical$ and
      \begin{enumerate}[(a)]
      \item Compute $\sh{\tsq{\sh{\Cpub}{\Ical}}}{\KLM'}$,
        where $\KLM'$ is the union of the last ${\lambda+1 \choose 2} - m$ positions
        of each block;
      \item Apply the algorithm of \cite[\S~VI.B]{BGK19} to recover a partial
        support ${(x_i)}_{i \notin \Ical}$;
      \item Repeat this process with another $\Ical$ until you got the whole
        support $\xv$.
      \end{enumerate}
    }
    \State {Apply the calculations of Section~\ref{subsec:back_to_Cpub}
      to recover the bases $\B_i$.}
  \end{algorithmic}
\end{algorithm}

Note that the support of a GRS code is defined up to three degrees of
freedom. Hence, in Algorithm~\ref{alg:algo2}, in order to consistently
reconstruct the whole support of $\xv$, one should make sure that the
set $\Ical$ always includes the first three positions and assign them
some fixed arbitrary values.

\subsection{Complexity}
For the complexity analysis and according to the parameters proposed
in \cite{KRW21}, we suppose that $m=O(1),\ \lambda = O(1)$ and $k = \Theta (n)$.

\subsubsection{Step 1, the twisted square computation}
First let us evaluate the cost of the computation of the twisted
square of the code $\Cpub \subseteq \Fq^{\lambda n}$ of dimension
$k_0 \eqdef (mk-n(m-\lambda))) $.
\begin{enumerate}
\item Starting from a $k_0 \times \lambda n$ generator matrix
  of $\Cpub$, any non ordered pair of rows provides a generator
  of the twisted square. Hence there are
  ${k_0+1 \choose 2} = O(n^2)$ generators to compute,
  each computation costing 
  $n{\lambda+1 \choose 2}$ operations.
  This is an overall cost of $O(n^3)$ operations in $\Fq$.
\item Then, deducing a row echelon generator matrix of this twisted
  square from these $O(n^2)$ generators has the cost of the
  computation of the row echelon form of a $O(n^2) \times O(n)$
  matrix, which requires $O(n^{\omega + 1})$ operations in $\Fq$ (see
  \cite[Th\'eor\`eme~8.6]{BCGLLSS17}), where $\omega \leq 3$ is the
  complexity exponent of operations of linear algebra.
\end{enumerate}

Thus, the overall cost of the computation of this twisted square code
is $O(n^{\omega + 1})$.  In addition, in the situation where
$2k-1 > n$, we need to iterate the calculation on a constant number of
shortenings of the public code, which has no influence on the
complexity exponent.

\if\longversion1
\begin{remark}
  Similarly to the discussions \cite[\S~VI.D]{COT17} and
  \cite[\S~VI.B.4]{CMP17}, it is possible to randomly generate $O(n)$
  generators of the twisted square code and perform the echelon form
  on this subset of generators. This provides the whole twisted square
  code with a high probability, reducing the cost of the calculation
  to $O(n^{\omega})$ operations in $\Fq$.
\end{remark}
\fi

\subsubsection{Step 2, recovering $\xv$}
The second step of the attack, \ie performing the algorithm of
\cite[\S~VI.B]{BGK19} to recover $\xv$ is not that expensive. A quick
analysis of this algorithm allows to observe that the most time
consuming step is the calculation of the systematic form of the
generator matrix, which has actually been performed in the previous
step. Therefore, this second step can be neglected in the complexity
analysis.

\subsubsection{Step 3, recovering the bases}
Finally, the last step of the attack, consisting in recovering the
bases $\B_i$, boils down to the resolution of a linear system of $O(n^2)$
equations and $O(n)$ unknowns, which costs $O(n^{\omega + 1})$ operations.

\medskip

\noindent {\bf Summary.} The overall cost of the attack is of
$O(n^{\omega + 1})$ operations in $\Fq$.

\subsection{Recovering the bases for arbitrary expanded codes:
  \emph{guess and squeeze}}\label{sec:guess_and_squeeze}

To conclude this section, we present an alternative approach to detect
the hidden structure of expanded codes and recover the expansion
bases. As explained in Section~\ref{subsec:Appl_attack_XGRS}, this is
the approach we implemented. The advantage of this approach is that it
can be applied to expansions of codes which are not RS
codes. Therefore it could be an interesting tool for other
cryptanalyses.

Given a code $\CC \subseteq \Fqm^n$ and bases $\B_0, \dots, \B_{n-1}$
of $\Fqm$, suppose you only know a generator matrix of
\[
  \Cexp \eqdef \ExpCode_{(\B_i)}(\CC).
\]
The objective is to guess the $\B_i$'s
iteratively instead of brute forcing any $n$--tuple of bases, which
would be prohibitive.

\medskip

\begin{enumerate}[{\bf Step 1.}]
\item Shorten $\Cexp$ at $k-1$ blocks (which corresponds to $m(k-1)$
  positions). This yields a code whose dimension most of the times
  equals $m$. According to Lemma~\ref{lem:pu-sh-commute}, this is the
  expansion of a code of dimension $1$ obtained by shortening $\CC$ at
  $k-1$ positions.
\item Puncture this shortened code in order to keep only two blocks.
  We get a $[2m,m]$ code which we call $\Cexptiny \subseteq \Fq^{2m}$.
  This code is the expansion of a $[2, 1]$ code called
  $\Ctiny \subseteq \Fqm^2$ obtained from $\CC$ by shortening $k-1$
  positions and puncturing the remaining code in order to keep only $2$
  positions.
\item Now, for any pair of bases $(\B_0, \B_1)$ of $\Fqm$, compute
  \[
    \SqueezeCode_{(\B_0, \B_1)}(\Cexptiny).
  \]
  The point is that, for a wrong choice of bases, we get a generator
  matrix with $m$ rows and $2$ columns which is very likely to be full
  rank.  Hence a wrong choice provides the trivial code $\Fqm^2$. On
  the other hand, a good choice of bases provides the code $\Ctiny$
  which has dimension $1$.  This property allows to guess the bases.
\end{enumerate}

Actually, according to Lemma~\ref{lem:Fqm-scalar-multiplication}, if
one guesses the bases $a_0\B_0, a_1 \B_1$ for some
$a_0, a_1 \in \Fqm^\times$, the squeezing will provide
$\Ctiny \star (a_0, a_1)$ which also has dimension $1$. Therefore, it
is possible to first guess the bases up to a scalar multiple in
$\Fqm^\times$.  Therefore, the cost of computing these two bases is in
$O(q^{2m(m-1)})$ operations.

Once the first two bases are known, one
can restart the process by with another pair of blocks involving one
of the two blocks for which the basis is already known, which requires
$O(q^{m(m-1)})$ operations. This yields an overall complexity of
$O(q^{2m(m-1)} + nq^{m(m-1)})$ operations in $\Fq$ for this {\em guess
  and squeeze} algorithm.

\begin{remark}
  Note that in the attack of XGRS scheme, the bases to guess are known
  to be of the form $(1, \gamma, \gamma^2, \dots, \gamma^{m-1})$ for
  some generator $\gamma \in \Fqm$. This additional information
  significantly improves this search and reduce the cost
  of the calculation of the $n$ bases to $O(q^{2m} + n q^m)$
  operations.
\end{remark}

\begin{remark}
  Proceeding this way only allows to get back the code
  $\CC \star \av \subseteq \Fqm^n$ for an unknown vector
  $\av \in {(\Fqm^\times)}^n$. However, this is an important first
  step.  For instance, if $\CC$ was a Reed--Solomon, we obtain a
  generalised Reed--Solomon code whose structure is computable using
  Sidelnikov and Shestakov attack. It is then possible to decode.
\end{remark}

%%% Local Variables:
%%% mode: latex
%%% TeX-master: "../xgrs"
%%% End:

\section{Conclusion}

We presented a polynomial time distinguisher on subspace subcodes of
Reed--Solomon codes relying on a new operation called the {\em twisted
  square product}.  We are hence able to distinguish SSRS codes from
random ones as soon as the dimension $\lambda$ of the subspaces
exceeds $\frac m 2$.  From this distinguisher, we derived an attack
breaking in particular the parameter set $\lambda = 2$ and $m=3$ of
the XGRS system \cite{KRW21}.

These results contribute to better understand the security of the
McEliece encryption scheme instantiated with algebraic codes. On the
one hand, we have generalised Reed--Solomon codes, which are known to
be insecure since the early 90's. On the other hand, alternant codes
seem to resist to any attack except some Goppa codes with a very low
extension degree \cite{COT17, FPP14}. The present work provides an
analysis of a family of codes including these two cases as the two
extremities of a spectrum. Concerning the subspace subcodes lying in
between, we show an inherent weakness of SSRS codes when
$\lambda > m/2$ (See Figure~\ref{fig:fleche}, page
\pageref{fig:fleche}).  The case $\lambda = m/2$ is in general out of
reach of our distinguisher, but remains border line as testified by
some attacks on the cases $\lambda = 1, m = 2$ in the literature
\cite{COT17, FPP14}.

A question which remains open is the actual security of the cases
$1 < \lambda < m/2$ which are out of reach of the twisted square code
distinguisher. These codes, which include alternant codes, deserve to
have a careful security analysis in the near future. Indeed, if they
turn out to be resistant to any attack, they could provide an
alternative to {\em Classic McEliece} \cite{BCLMNPPSSSW19} with
shorter key sizes. On the other hand, if some of these codes turned
out to be insecure, this may impact the
security of {\em Classic McEliece} which is a crucial question in the
near future.

%%% Local Variables:
%%% mode: latex
%%% TeX-master: "../xgrs"
%%% End:

\newcommand{\etalchar}[1]{$^{#1}$}

\end{document}